% settings
%\pubyear{2005}
%\volume{0}
%\issue{0}
%\firstpage{1}
%\lastpage{8}
%\arxiv{arXiv:0000.0000}

\documentclass[dvips,preprint]{imsart}%
\usepackage{amsfonts}
\usepackage{amsmath}
\usepackage{epsfig}%
\setcounter{MaxMatrixCols}{30}%
\usepackage{amssymb}%
\usepackage{graphicx}
%TCIDATA{OutputFilter=latex2.dll}
%TCIDATA{Version=5.50.0.2890}
%TCIDATA{LastRevised=Thursday, July 26, 2012 18:04:40}
%TCIDATA{<META NAME="GraphicsSave" CONTENT="32">}
%TCIDATA{<META NAME="SaveForMode" CONTENT="1">}
%TCIDATA{BibliographyScheme=BibTeX}
%TCIDATA{Language=American English}
\RequirePackage[OT1]{fontenc}
\RequirePackage{amsthm,amsmath}
\RequirePackage[numbers]{natbib}
\RequirePackage[colorlinks,citecolor=blue,urlcolor=blue]{hyperref}
\startlocaldefs
\numberwithin{equation}{section}
\theoremstyle{plain}
\newtheorem{theorem}{Theorem}

\newtheorem{lemma}{Lemma}
\newtheorem{proposition}{Proposition}
\newtheorem{remark}{Remark}

\endlocaldefs
\begin{document}
\begin{frontmatter}
\title{A CLT on the SNR of Diagonally Loaded MVDR Filters\thanksref{T1}}
\runtitle{A CLT on the SNR of the DL beamformer}
\thankstext{T1}{ $^\copyright$
2012 IEEE. Personal use of this material is permitted. However, permission to use this material for any other purposes must be obtained from the IEEE by sending a request to pubs-permissions@ieee.org. This paper was published at IEEE Transactions on Signal Processing, Septembe 2012. This version incorporates some corrections with respect to the published one.
This work was partially funded by the Spanish Ministry of Science and Innovation under project TEC2011-29006-C03-01
and Catalan Government under grant 2009SGR1046.
The material in this paper was presented in part at the
36th International Conference on Acoustics, Speech, and Signal Processing in Prague (Czech Republic), May 22-27, 2011.}
\begin{aug}
\author{\fnms{Francisco} \snm{Rubio}\thanksref{t1,m1}\ead
[label=e1]{franciscoerubio@gmail.com}},
\author{\fnms{Xavier} \snm{Mestre}\thanksref{m2}\ead
[label=e2]{xavier.mestre@cttc.cat}}
\affiliation{BB}
\and\author{\fnms{Walid} \snm{Hachem}\thanksref{m3}\ead
[label=e3]{walid.hachem@telecom-paristech.fr}}
\affiliation{CC}
\thankstext{t1}%
{The work was initiated while the first author was with the Centre Tecnològic de Telecomunicacions de Catalunya (CTTC), 08860 Castelldefels (Barcelona),
and was completed while he was with the Hong Kong University of Science and Technology.}
\runauthor{Rubio, Mestre and Hachem}
\address{\thanksmark{m1}Genetic Finance Ltd.\\
Unit 2107, Harbour Centre\\
25 Harbour Road, Hong Kong\\
\printead{e1}\\
\phantom{E-mail:\ }}
\address{\thanksmark{m2}%
Centre Tecnol\`ogic de Telecomunicacions de Catalunya\\
Av. de Carl Friedrich Gauss, 7\\
08860 Castelldefels (Barcelona), Spain\\
\printead{e2}\\
\phantom{E-mail:\ }}
\address{\thanksmark{m3}%
T\' el\' ecom ParisTech and Centre National de la Recherche Scientifique\\
46, rue Barrault\\
75013 Paris, France\\
\printead{e3}\\
\phantom{E-mail:\ }}
\end{aug}
\begin{abstract}
This paper studies the fluctuations of the signal-to-noise ratio (SNR) of minimum variance distorsionless response (MVDR) filters implementing diagonal loading in the estimation of the covariance matrix. Previous results in the signal processing literature are generalized and extended by considering both spatially as well as temporarily correlated samples. Specifically, a central limit theorem (CLT) is established for the fluctuations of the SNR of the diagonally loaded MVDR filter, under both supervised and unsupervised training settings in adaptive filtering applications. Our second-order analysis is based on the Nash-Poincar\' e inequality and the integration by parts formula for Gaussian functionals, as well as classical tools from statistical asymptotic theory. Numerical evaluations validating the accuracy of the CLT confirm the asymptotic Gaussianity of the fluctuations of the SNR of the MVDR filter.
\end{abstract}
\begin{keyword}
\kwd{CLT}
\kwd{SNR}
\kwd{RMT}
\end{keyword}
\end{frontmatter}%
%EndExpansion

\section{Introduction}

The minimum variance distorsionless response (MVDR) filter is a prominent
instance of multivariate filtering structure in statistical signal processing.
Regarded as Capon beamformer, the MVDR spatial filter is widely utilized in
sensor array signal processing applications, such as the estimation of the
waveform and/or power of a given signal of interest (SOI) \cite{VT02B,SM05B}.
The theoretically optimal Capon/MVDR spatial filter is constructed based on a
covariance matrix that is unknown in practice, and so any filter
implementation must rely on sample estimates computed from the array
observations available. Sample covariance estimators are well-known to be
prohibitively inaccurate for sample volumes of small size, relatively high
dimension. Indeed, a vast body of contributions in the literature of array
processing and other fields of applied statistics has been devoted to remedies
for lifting the curse of dimensionality, such as those based on regularization
techniques and shrinkage estimation.

In this work, we are interested in the signal-to-noise ratio (SNR) at the
output of MVDR filter realizations using a diagonally loaded sample covariance
matrix (SCM). We focus on the SNR as a measure conventionally used to evaluate
the performance of a filter implementation. Due to its dependence on the
sample data matrix, the SNR is itself a random variable whose behavior highly
depends on the ratio between sample size and observation dimension. This ratio
is indeed of much practical relevance for characterizing the properties of the
filter performance. Motivated by this fact, a large-system performance
characterization was presented in \cite[Proposition 1]{ML06}, where the
authors provide a deterministic equivalent of the output SNR in the limiting
regime defined by both the number of samples and the observation dimension
growing large without bound at the same rate (see also \cite{ML05B}).

A first-order asymptotic analysis precludes us from gaining any insight on the
fluctuations of the SNR performance measure. Therefore, our focus in this work
is on a second-order analysis of the previous quantity. In the case of
Gaussian observations, when the maximum likelihood estimator of the population
covariance matrix is applied without diagonal loading, the normalized output
SNR is known in the array processing literature to follow a Beta distribution
\cite{RMB74}. In the general and more relevant case for practical
implementations considering the application of diagonal loading, the problem
of characterizing the distribution of the previous random variable remains
unsolved. Earlier attempts focused on the output response of the classical
diagonally loaded Capon/MVDR beamformer, by approximating its probability
density function via the truncation of a matrix power series \cite{F00} (see
also introductory exposition therein for details on previous related work),
and for the particular cases of zero- and single-source scenarios \cite{RE05},
as well as a two-source scenario \cite{N07P}.

In this paper, we generalize previous studies by considering both the use of
diagonal loading as well as general spatio-temporally correlated observations.
Specifically, we prove the asymptotic Gaussianity of the sample performance
measure by establishing a central limit theorem (CLT) on the output SNR of a
diagonally loaded MVDR filter implementation. To that effect, we resort to a
set of techniques for Gaussian random matrices, namely the Nash-Poincare
inequality as well as the integration by parts formula for Gaussian
functionals. These tools were originally proposed in \cite{HKLNP08} for the
study of the asymptotic distribution of the mutual information of correlated
MIMO Rayleigh channels.\ More recently, they have also been applied, for
instance, to obtain asymptotic non-Gaussian approximations of the distribution
of the SNR of the linear minimum mean-square error (LMMSE) receiver
\cite{KKHN09}, as well as to derive the input covariance matrix maximizing the
capacity of correlated MIMO Rician channels \cite{DHLLN10}.

Our framework relies on a limiting regime defined as both dimensions of the
data matrix going to infinity at the same rate. Indeed, in real-life array
processing applications, both the number of samples and the dimension of the
array are comparable in magnitude, and so a limiting regime allowing for both
sample size and dimension growing large with a fixed, non-zero ratio between
them is of more practical relevance. We will consider both supervised and
unsupervised training methods in statistical signal and sensor array
processing applications (see, e.g., \cite{ASAG98,ASA00}). In the former,
access to SOI-free samples of the interference-plus-noise process is granted
for covariance matrix estimation (e.g., clutter statistics in space-time
adaptive processing applications to radar), whereas only SOI-contaminated
samples are available for inference in the latter.

The structure of the rest of the paper after the previous exposition of the
research motivation is as follows. Upon concluding this section by introducing
the notation that will be used throughout the paper, Section
\ref{sectionMVDR_filtering} briefly presents the problem of multivariate
minimum variance filtering; the typical implementation based on a diagonally
loaded sample covariance matrix (SCM) is introduced along with the definition
of SNR as performance measure of relevance. In Section \ref{sectionMain} we
establish the CLT for the fluctuations of the SNR performance of both
supervised and unsupervised training methods. In Section \ref{sectionGauss},
we introduce the main mathematical tools for our analysis and state some
preliminary results serving as preparation for the proof of the CLT. Our
result on the asymptotic Gaussianity of SNR measures is numerically validated
in Section \ref{sectionSimulations},\ before concluding the paper with Section
\ref{sectionConclusions}. The technical details of the proof of the CLT in
Section \ref{sectionGauss} are postponed to the appendices.

\textbf{Notation}. In this paper, we use the following notations. All vectors
are defined as column vectors and designated with bold lower case; all
matrices are given in bold upper case; for both vectors and matrices a
subscript will be added to emphasize dependence on dimension, though it will
be occasionally dropped for the sake of clarity of presentation; $\left[
\cdot\right]  _{ij}$ will be used with matrices to extract the entry in the $i
$th row of the $j$th column, $\left[  \cdot\right]  _{j}$ will be used for the
$j$th entry of a vector or the nonzero elements of a diagonal matrix; $\left(
\cdot\right)  ^{T}$ denotes transpose; $\left(  \cdot\right)  ^{\ast}$ denotes
Hermitian (i.e. complex conjugate transpose); $\mathbf{I}_{M}$ denotes the
$M\times M$ identity matrix; $\operatorname*{tr}\left[  \cdot\right]  $
denotes the matrix trace operator; $\mathbb{R}$ and $\mathbb{C}$ denote the
real and complex fields of dimension specified by a superscript;
$\mathbb{R}^{\mathbb{+}}$ denotes the set of positive real numbers;
$\mathbb{P}\left(  \cdot\right)  $ denotes the probability of a random event,
$\mathbb{E}\left[  \cdot\right]  $ denotes the expectation operator, and
$\operatorname*{var}\left(  \cdot\right)  $ and $\operatorname*{cov}\left(
\cdot,\cdot\right)  $ denote, respectively, variance and covariance; $K,K_{p}$
denote constant values not depending on any relevant quantity, apart from the
latter on a parameter $p$; $\left\vert \cdot\right\vert $ denotes absolute
value; for any two functions $f_{N},g_{N}$ depending on $N$, $f_{N}%
=\mathcal{O}\left(  g_{N}\right)  $ will denote the fact that $\left\vert
f_{N}\right\vert \leq K\left\vert g_{N}\right\vert $, for sufficiently large
$N$, and $f_{N}=o_{p}\left(  1\right)  $ will denote convergence in
probability to zero of $f_{N}$; $\left\Vert \cdot\right\Vert $ denotes the
Euclidean norm for vectors and the induced norm for matrices (i.e. spectral or
strong norm), whereas $\left\Vert \cdot\right\Vert _{F}$ and $\left\Vert
\cdot\right\Vert _{\operatorname*{tr}}$ denote the Frobenius norm and trace
(or nuclear) norm, respectively, i.e., for a matrix $\mathbf{A}\in
\mathbb{C}^{M\times M}$ with eigenvalues $\lambda_{m},m=1,\ldots,M$ and
spectral radius $\rho\left(  \mathbf{A}\right)  =\max_{1\leq m\leq M}\left(
\left\vert \lambda_{m}\right\vert \right)  $, $\left\Vert \mathbf{A}%
\right\Vert =\left(  \rho\left(  \mathbf{A}^{\ast}\mathbf{A}\right)  \right)
^{1/2}$, $\left\Vert \mathbf{A}\right\Vert _{F}=\left(  \operatorname*{tr}%
\left[  \mathbf{A}^{\ast}\mathbf{A}\right]  \right)  ^{1/2}$ and $\left\Vert
\mathbf{A}\right\Vert _{\operatorname*{tr}}=\operatorname*{tr}\left[  \left(
\mathbf{A}^{\ast}\mathbf{A}\right)  ^{1/2}\right]  $.

\section{MVDR filtering with diagonal loading%
%TCIMACRO{\TeXButton{\label{sectionMVDR_filtering}}{\label
%{sectionMVDR_filtering}}}%
%BeginExpansion
\label{sectionMVDR_filtering}%
%EndExpansion
}

In this section, we introduce the signal model and briefly review the problem
of spatial or multivariate MVDR filtering motivating our research. Let
$\mathbf{Y}_{\beta,N}=\left[  \mathbf{y}_{\beta}\left(  1\right)
,\ldots,\mathbf{y}_{\beta}\left(  N\right)  \right]  $ be the data matrix with
sample observations in a statistical signal processing application, where the
parameter $\beta$ indicates presence ($\beta=1$) or not ($\beta=0$) of the SOI
in the observations, which are modeled as:%
\begin{equation}
\mathbf{y}_{\beta}\left(  n\right)  =\beta s\left(  n\right)  \mathbf{s}%
+\mathbf{n}\left(  n\right)  \in\mathbb{C}^{M},\quad1\leq n\leq N\label{DGP}%
\end{equation}
where $s\left(  n\right)  $ is the waveform process of a given SOI, the vector
$\mathbf{s}$ models the SOI signature, and $\mathbf{n}\left(  n\right)  $
represents the contribution from some colored interference and the
cross-sectionally uncorrelated background noise, which we model jointly as a
zero-mean Gaussian process with covariance matrix $\mathbf{R}_{0,M}$. Signal
and interference-plus-noise processes are assumed to be independent.
Additionally, without loss of generality we will assume that the SOI power is
$1$, and also that $\left\Vert \mathbf{s}\right\Vert =1$. In particular, we
consider applications relying on \textit{supervised training}, where
$\mathbf{Y}_{\beta,N}=\mathbf{Y}_{0,N}$ contains SOI-free samples of the
interference-plus-noise process, or \textit{unsupervised training}, where the
training samples in $\mathbf{Y}_{\beta,N}=\mathbf{Y}_{1,N}$ are contaminated
by the SOI. Notice that each observation $\mathbf{y}_{\beta}\left(  n\right)
$ might be modeling the matched filter output sufficient statistic for the
received unknown symbols $s\left(  n\right)  $ at a multiuser detector in a
communications application, where $\mathbf{s}$ is the effective user
signature; or an array processor, where $\mathbf{s}$ contains the angular
frequency information (steering vector) related to the intended source,
represented by $s\left(  n\right)  $.

In order to allow for a more general signal modeling context, we consider the
case in which the vector observations are not only spatially or
cross-sectionally correlated but also present a certain correlation in the
time domain. This is typically the case in array processing applications where
the sources exhibit nonzero correlation between delayed samples \cite{VSO95},
as well as generally for wireless communication signals that are transmitted
over a dispersive radio channel. In this work, we consider spatio-temporal
processes with separable covariance structure, also regarded as having
Kronecker product structure, and thoroughly studied in the literature on
multiple-input multiple-output wireless communication channels \cite{KSPMF02},
and sensor array and multichannel processing \cite{S95b}. In particular, the
spatial covariance matrix will be denoted by $\mathbf{R}_{\beta,M}$, and the
time correlation pattern will be modeled by a nonnegative matrix denoted by
$\mathbf{T}_{N}$, so that the column vectors of $\mathbf{Y}_{N}$ are
correlated (in the time domain) but the correlation pattern is identical for
all rows. Notice that the spatial covariance matrix $\mathbf{R}_{\beta,M}$ is
intrinsically different depending on the type of training, i.e.,
$\mathbf{R}_{\beta,M}\equiv\mathbf{R}_{0,M}$ for supervised training, and
$\mathbf{R}_{\beta,M}\equiv\mathbf{R}_{1,M}=\mathbf{ss}^{\ast}+\mathbf{R}%
_{0,M}$ for unsupervised training. As an illustrative example, consider the
following first-order vector autoregressive process: $\mathbf{y}_{\beta
}\left(  n\right)  =\psi\mathbf{y}_{\beta}\left(  n-1\right)  +\mathbf{R}%
_{\beta,M}^{1/2}\mathbf{v}\left(  n\right)  $, where $\psi$ is a real-valued
constant and $\mathbf{v}\left(  n\right)  $ is a white Gaussian noise process
with zero mean and identity covariance matrix, and $\mathbf{R}_{\beta,M}%
^{1/2}$ is a square-root of a positive matrix $\mathbf{R}_{\beta,M} $. In
particular, the previous so-called VAR(1) model has covariance matrix with
separable (Kronecker product) structure given by $\operatorname*{cov}\left(
\left[  \mathbf{y}_{\beta}\left(  n\right)  \right]  _{i},\left[
\mathbf{y}_{\beta}\left(  n+\tau\right)  \right]  _{j}\right)  =\left[
\mathbf{R}_{\beta,M}\right]  _{ij}\psi/\left(  1-\psi^{2}\right)  ^{\left\vert
\tau\right\vert }$.

Motivated by typical applications in sensor array signal processing, in this
paper we concentrate on the problem of linearly filtering the observed samples
with a Capon/MVDR beamformer to estimate the SOI waveform assuming that the
SOI signature is known. We notice that a related problem that is not handled
here but can also be fitted into our framework is that of estimating the SOI
power \cite{SM05B}. Customarily, the problem of optimizing the coefficients of
the Capon/MVDR spatial filter is formulated in terms of the spatial covariance
matrix as:%
\[
\mathbf{w}_{\beta,\mathsf{MVDR}}=\arg\min_{\mathbf{w}\in\mathbb{C}%
^{M}:\mathbf{w}^{\ast}\mathbf{s}=1}\mathbf{w}^{\ast}\mathbf{R}_{\beta
,M}\mathbf{w}%
\]
with explicit solution being given by%
\begin{equation}
\mathbf{w}_{\beta,\mathsf{MVDR}}=\frac{\mathbf{R}_{\beta,M}^{-1}\mathbf{s}%
}{\mathbf{s}^{\ast}\mathbf{R}_{\beta,M}^{-1}\mathbf{s}}\text{.}%
\label{OFS_beta}%
\end{equation}
Under the above conventional assumptions, the two previous covariance matrices
differ by the rank-one matrix term $\mathbf{ss}^{\ast}$, and so it is easy to
see that the optimal solutions with $\mathbf{R}_{0,M}$ and $\mathbf{R}_{1,M}$
are equivalent, i.e., $\mathbf{w}_{0,\mathsf{MVDR}}=\mathbf{w}%
_{1,\mathsf{MVDR}}$. Conventionally, the evaluation of the performance of the
filter is based on the SNR measure, which is defined as%
\begin{equation}
\mathsf{SNR}\left(  \mathbf{w}\right)  =\frac{\left\vert \mathbf{w}^{\ast
}\mathbf{s}\right\vert ^{2}}{\mathbf{w}^{\ast}\mathbf{R}_{0,M}\mathbf{w}%
}\text{.}\label{SNR}%
\end{equation}
In particular, we have%
%TCIMACRO{\TeXButton{\footnote}{\footnote
%{It is not difficult to see that the maximum SNR values for supervised and unsupervised training theoretically coincide. In practice, however, the actual performance of an unsupervised training method would be diminished by inaccuracies about the knowledge of the precise SOI signature, and therefore a supervised training method is preferred in this sense.}%
%} }%
%BeginExpansion
\footnote
{It is not difficult to see that the maximum SNR values for supervised and unsupervised training theoretically coincide. In practice, however, the actual performance of an unsupervised training method would be diminished by inaccuracies about the knowledge of the precise SOI signature, and therefore a supervised training method is preferred in this sense.}
%EndExpansion
$\mathsf{SNR}\left(  \mathbf{w}_{0,\mathsf{MVDR}}\right)  =\mathsf{SNR}\left(
\mathbf{w}_{1,\mathsf{MVDR}}\right)  =\mathbf{s}^{\ast}\mathbf{R}_{0,M}%
^{-1}\mathbf{s}\equiv\mathsf{SNR}_{\mathsf{opt}}$.

In practice, the covariance matrix is unknown and so any implementation of the
filter must rely on estimates built upon a set of training samples. The
standard SCM estimator is usually improved by means of, for instance,
regularization or shrinkage. In particular, we consider covariance matrix
estimators of the type%
\begin{equation}
\mathbf{\hat{R}}_{\beta,M}=\frac{1}{N}\mathbf{Y}_{\beta,N}\mathbf{Y}_{\beta
,N}^{\ast}+\alpha\mathbf{I}_{M}\label{SCM}%
\end{equation}
where $\alpha>0$ is a constant scalar that in the array processing literature
is referred to as diagonal loading factor, and is also known in the statistics
literature as shrinkage intensity parameter for the type of James-Stein
shrinkage covariance matrix estimators. In brief, the purpose of the
regularization term $\alpha\mathbf{I}$ is to improve the condition number of
an \textit{a priory} possibly unstable estimator of the covariance matrix of
the array observations. This is particularly the case for the SCM in
situations where $N$ is not considerably larger than $M$. Indeed, notice that
the SCM might not even be invertible, as it happens in the case $M>N$.
Well-conditioned covariance matrix estimators can be expected to improve the
filter performance as measured by the realized SNR defined in (\ref{SNR}). In
this work, we assume that the parameter $\alpha$ is given and fixed. For
sensible choices of the regularization or diagonal loading parameter $\alpha$,
we refer the reader to, e.g., \cite{VT02B,LS05B}.

We now handle the situation in which a covariance matrix estimator of the type
of (\ref{SCM}) is used in order to implement the sample version of the
theoretical MVDR filter, which will be denoted in the sequel by $\mathbf{\hat
{w}}_{\beta,\mathsf{MVDR}}$, $\beta=0,1$. Then, using $\mathbf{w}%
=\mathbf{\hat{w}}_{\beta,\mathsf{MVDR}}$ in (\ref{SNR}), we obtain,
respectively,%
\begin{equation}
\mathsf{SNR}\left(  \mathbf{\hat{w}}_{0,\mathsf{MVDR}}\right)  =\frac{\left(
\mathbf{s}^{\ast}\mathbf{\hat{R}}_{0,M}^{-1}\mathbf{s}\right)  ^{2}%
}{\mathbf{s}^{\ast}\mathbf{\hat{R}}_{0,M}^{-1}\mathbf{R}_{0,M}\mathbf{\hat{R}%
}_{0,M}^{-1}\mathbf{s}}\label{sSNRs}%
\end{equation}
and%
\begin{equation}
\mathsf{SNR}\left(  \mathbf{\hat{w}}_{1,\mathsf{MVDR}}\right)  =\left(
\frac{\mathbf{s}^{\ast}\mathbf{\hat{R}}_{1,M}^{-1}\mathbf{R}_{1,M}%
\mathbf{\hat{R}}_{1,M}^{-1}\mathbf{s}}{\left(  \mathbf{s}^{\ast}%
\mathbf{\hat{R}}_{1,M}^{-1}\mathbf{s}\right)  ^{2}}-1\right)  ^{-1}%
\text{.}\label{sSNRu}%
\end{equation}
Equations (\ref{sSNRs}) and (\ref{sSNRu}) are obtained by directly replacing
in (\ref{SNR}) the optimal MVDR filter solution in (\ref{OFS_beta}) for,
respectively, the supervised ($\beta=0$) and unsupervised ($\beta=1$) cases.
Notice that, while the expression in (\ref{sSNRs}) follows straightforwardly,
in order to get (\ref{sSNRu}) it is enough to apply the matrix inversion lemma
using the fact that $\mathbf{R}_{1,M}=\mathbf{ss}^{\ast}+\mathbf{R}_{0,M}$.

In effect, due to the dependence on the random data matrix $\mathbf{Y}%
_{\beta,N}$, the quantities (\ref{sSNRs}) and (\ref{sSNRu}) are random
variables themselves whose distribution specify the fluctuations of the SNR
performance at the filter output. Consequently, in order to understand the
behavior of the output SNR performance, it is of much practical interest to
investigate the distribution of the random variables (\ref{sSNRs}) and
(\ref{sSNRu}), and characterize their properties. Under the supervised
training setting, in the special case given by $\mathbf{\hat{R}}_{\beta,M}$
being the standard SCM estimator, i.e., $\mathbf{T}_{N}=\mathbf{I}_{N}$ and
$\alpha=0$, the distribution of the normalized output SNR, namely,%
\[
\frac{\mathsf{SNR}\left(  \mathbf{\hat{w}}_{0,\mathsf{MVDR}}\right)
}{\mathsf{SNR}_{\mathsf{opt}}}=\frac{\left(  \mathbf{s}^{\ast}\mathbf{\hat{R}%
}_{0,M}^{-1}\mathbf{s}\right)  ^{2}}{\mathbf{s}^{\ast}\mathbf{\hat{R}}%
_{0,M}^{-1}\mathbf{R}_{0,M}\mathbf{\hat{R}}_{0,M}^{-1}\mathbf{ss}^{\ast
}\mathbf{R}_{0,M}^{-1}\mathbf{s}}%
\]
is known to be distributed as \cite{RMB74}%
\[
\mathsf{SNR}\left(  \mathbf{\hat{w}}_{0,\mathsf{MVDR}}\right)  /\mathsf{SNR}%
_{\mathsf{opt}}\sim\operatorname*{Beta}\left(  N+2-M,M-1\right)  .
\]
In the general, more relevant case for practical implementations, where
arbitrary positive definite $\mathbf{T}_{N}$ and $\alpha$ are considered, the
problem of characterizing the distribution of the random variable
$\mathsf{SNR}\left(  \mathbf{\hat{w}}_{0,\mathsf{MVDR}}\right)  $ remains
unsolved. Likewise, so is the case for $\mathsf{SNR}\left(  \mathbf{\hat{w}%
}_{1,\mathsf{MVDR}}\right)  $.

In the next section, we provide a CLT on the realized SNR performance at the
output of a sample MVDR filter implementing diagonal loading and based on a
set of spatio-temporally correlated observations, for both supervised and
unsupervised training applications. We remark that in this paper we are
specifically concerned with the case $\alpha>0$. In fact, the case $\alpha=0$
has been seldom considered in the large random matrix literature, and would
require indeed specific tools different from those used here.

\section{CLT for the fluctuations of SNR performance measures%
%TCIMACRO{\TeXButton{\label{sectionMain}}{\label{sectionMain}}}%
%BeginExpansion
\label{sectionMain}%
%EndExpansion
}

\subsection{Definitions and assumptions%
%TCIMACRO{\TeXButton{\label{sectionDefAss}}{\label{sectionDefAss}}}%
%BeginExpansion
\label{sectionDefAss}%
%EndExpansion
}

We next summarize our research hypotheses and introduce some new definitions.
We first remark that, anticipating that the statistical properties of the
random matrices $\mathbf{Y}_{\beta,M}$ and $\mathbf{\hat{R}}_{\beta,M}$ for
both values of $\beta$ are equivalent for the purposes of our derivations, we
will drop the subscript $\beta$ in the sequel. Our analysis is based on the
following technical hypotheses:

\begin{description}
\item[\textbf{(As1)}] The observations are normally distributed with zero mean
and separable covariances $\mathbf{R}_{M}$ and $\mathbf{T}_{N}$ in the spatial
and time domain respectively.

\item[\textbf{(As2)}] The nonrandom matrices $\mathbf{R}_{M}$ and
$\mathbf{T}_{N}$ have eigenvalues bounded uniformly in, respectively, $M$ and
$N=N\left(  M\right)  $, from above, i.e., $\left\Vert \mathbf{R}\right\Vert
_{\sup}=\sup_{M\geq1}\left\Vert \mathbf{R}_{M}\right\Vert <+\infty$ and
$\left\Vert \mathbf{T}\right\Vert _{\sup}=\sup_{N\geq1}\left\Vert
\mathbf{T}_{N}\right\Vert <+\infty$, and from below (away from zero):
$\left\Vert \mathbf{R}\right\Vert _{\inf}=\inf_{M\geq1}\left\Vert
\mathbf{R}_{M}^{-1}\right\Vert ^{-1}>0$ and $\left\Vert \mathbf{T}\right\Vert
_{\inf}=\inf_{N\geq1}\left\Vert \mathbf{T}_{N}^{-1}\right\Vert ^{-1}>0$.

\item[\textbf{(As3)}] We will consider the limiting regime defined by both
dimensions $M$ and $N$ growing large without bound at the same rate, i.e.,
$N,M\rightarrow\infty$ such that ($c_{M}=M/N$):%
\[
0<c_{\inf}=\lim\inf c_{M}\leq c_{\sup}=\lim\sup c_{M}<\infty\text{.}%
\]

\end{description}

Let $\mathbf{X}_{M}$ be an $M\times N$ matrix whose elements $X_{ij}$, $1\leq
i\leq M$, $1\leq j\leq N$, are complex Gaussian random variables having i.i.d.
real and imaginary parts with mean zero and variance $1/2$, such that
$\mathbb{E}\left[  X_{ij}\right]  =\mathbb{E}\left[  X_{ij}^{2}\right]  =0$
and $\mathbb{E}\left[  \left\vert X_{ij}\right\vert ^{2}\right]  =1$. Under
the Gaussianity assumption, observe that we can write the data matrix in
Section \ref{sectionMVDR_filtering} as $\mathbf{Y}_{N}=\mathbf{R}_{M}%
^{1/2}\mathbf{X}_{M}\mathbf{T}_{N}^{1/2}$, where $\mathbf{R}_{M}^{1/2}$ and
$\mathbf{T}_{N}^{1/2}$ are the positive definite square-roots of
$\mathbf{R}_{M}$ and $\mathbf{T}_{N}$, respectively. Hence, the data matrix
$\mathbf{Y}_{N}$ is matrix-variate normal distributed, i.e., $\mathbf{Y}%
_{N}\sim\mathcal{CMN}_{M\times N}\left(  \mathbf{0}_{M\times N},\mathbf{R}%
_{M},\mathbf{T}_{N}\right)  $, or equivalently, $\operatorname*{vec}\left(
\mathbf{Y}_{N}\right)  \sim\mathcal{CN}_{MN}\left(  \mathbf{0}_{M}%
,\mathbf{R}_{M}\otimes\mathbf{T}_{N}\right)  $ \cite{GN00B}. Moreover, in the
case of an arbitrary positive definite matrix $\mathbf{T}_{N}$, we have that
$\mathbf{Y}_{N}\mathbf{Y}_{N}^{\ast}$ is a central quadratic form, such that
$\mathbb{E}\left[  \mathbf{Y}_{N}\mathbf{Y}_{N}^{\ast}\right]
=\operatorname*{Tr}\left[  \mathbf{T}_{N}\right]  \mathbf{R}_{M}$. Thus, in
particular, if $\mathbf{T}_{N}=\mathbf{I}_{N}$ then $\mathbf{Y}_{N}%
\mathbf{Y}_{N}^{\ast}$ is central Wishart distributed, and we have
$\mathbb{E}\left[  \mathbf{Y}_{N}\mathbf{Y}_{N}^{\ast}\right]  =N\mathbf{R}%
_{M}$ (see also, e.g., \cite[Chapter 2]{KR05}). We note that our
spatio-temporal covariance model represents a non-trivial generalization of
previous models, which is of interest for the signal processing and the
applied statistics community. For instance, the model in \cite{R96,ZLY11}
consisting of a data matrix $\mathbf{Y}_{M}=\mathbf{R}_{M}^{1/2}\mathbf{\Xi
}_{N}$, where $\mathbf{R}_{M}=\mathbb{E}\left[  \mathbf{Y}_{M}\mathbf{Y}%
_{M}^{\ast}\right]  $ and $\mathbf{\Xi}_{N}$ is a Gaussian matrix with
standardized entries (i.e., with mean zero and variance one), is clearly a
special case of our model.

We recall that the previous distributional assumption is fairly standard in
the array processing literature (e.g., \cite{RMB74,F00,RE05}, and
\cite{R96,ZLY11}). In particular, the Gaussianity assumption provides a means
to obtain valuable approximations of the system performance by analytically
characterizing the theoretical properties of otherwise intractable expressions
of practical interest. On the other hand, the assumption of centered
observations has minor impact, since observations can always be demeaned by
extracting the sample mean. In fact, for Gaussian sample observations, the
sample covariance matrix with and without estimation of the mean has the
Wishart structure described above (with one degree-of-freedom less in the case
of having to estimate the mean, which does not affect our asymptotic results).

Before proceeding any further, we also notice that, thanks to the isotropic
invariance to orthogonal transformations of Gaussian matrices, the two
correlation matrices $\mathbf{R}_{M}$ and $\mathbf{T}_{N}$ can be assumed to
be diagonal without loss of generality. More specifically, using the fact that
the distribution of a Gaussian matrix is unaffected by unitary
transformations, it is easy to see that we can always write the SNR in
(\ref{sSNRs}) and (\ref{sSNRu}) in terms of a unit-norm deterministic vector,
a Gaussian matrix with standardized entries, and diagonal spatial and temporal
covariance matrices. Such a parsimonious representation is more convenient for
proving our statistical results, and is therefore preferred.

We next introduce some notation that will be useful throughout the rest of the
paper. Let us first introduce the vector $\mathbf{u}_{M}=\mathbf{R}_{M}%
^{-1/2}\mathbf{s}$ and the matrix $\mathbf{Q}_{M}=\left(  \frac{1}%
{N}\mathbf{X}_{N}\mathbf{T}_{N}\mathbf{X}_{N}^{\ast}+\alpha\mathbf{R}_{M}%
^{-1}\right)  ^{-1}$, where $\alpha>0$. Moreover, we define
\begin{equation}
a_{M}=\mathbf{u}_{M}^{\ast}\mathbf{Q}_{M}\mathbf{u}_{M}\quad b_{M}%
=\mathbf{u}_{M}^{\ast}\mathbf{Q}_{M}^{2}\mathbf{u}_{M},\label{eq_defab}%
\end{equation}
along with
\begin{equation}
\bar{a}_{M}=\mathbf{u}_{M}^{\ast}\mathbf{E}_{M}\mathbf{u}_{M}\quad\bar{b}%
_{M}=\left(  1-\gamma_{M}\tilde{\gamma}_{M}\right)  ^{-1}\mathbf{u}_{M}^{\ast
}\mathbf{E}_{M}^{2}\mathbf{u}_{M}\text{.}\label{eq_def_abbar}%
\end{equation}
where $\gamma=\gamma_{M}=\frac{1}{N}\operatorname*{tr}\left[  \mathbf{E}%
_{M}^{2}\right]  $ and $\tilde{\gamma}=\tilde{\gamma}_{M}=\frac{1}%
{N}\operatorname*{tr}\left[  \mathbf{\tilde{E}}_{N}^{2}\right]  $, with
\begin{gather*}
\mathbf{E}_{M}=\mathbf{R}_{M}\left(  \tilde{\delta}_{M}\mathbf{R}_{M}%
+\alpha\mathbf{I}_{M}\right)  ^{-1}\\
\mathbf{\tilde{E}}_{N}=\mathbf{T}_{N}\left(  \mathbf{I}_{N}+\delta
_{M}\mathbf{T}_{N}\right)  ^{-1}%
\end{gather*}
and $\left\{  \tilde{\delta}_{M},\delta_{M}\right\}  $ being the unique
positive solution to the following system of equations:
\begin{equation}
\left\{
\begin{array}
[c]{l}%
\tilde{\delta}_{M}=\frac{1}{N}\operatorname*{tr}\left[  \mathbf{T}_{N}\left(
\mathbf{I}_{N}+\delta_{M}\mathbf{T}_{N}\right)  ^{-1}\right] \\
\delta_{M}=\frac{1}{N}\operatorname*{tr}\left[  \mathbf{R}_{M}\left(
\tilde{\delta}_{M}\mathbf{R}_{M}+\alpha\mathbf{I}_{M}\right)  ^{-1}\right]
\text{.}%
\end{array}
\right. \label{systemEq}%
\end{equation}
The existence and uniqueness of the solution to (\ref{systemEq}) follow by
similar arguments as those in the proof of Proposition 1 in \cite{HKLNP08}.
Additionally, notice that $\mathbf{E}_{M}$ and $\mathbf{\tilde{E}}_{N}$ are
positive definite matrices. Before concluding, a final remark is in order.
Under Assumption \textbf{(As2)}, all previously defined elements are well
defined for all $M$ in the sense of the Euclidean norm for vectors or induced
norm for matrices (see uniform bounds provided at the end of Appendix
\ref{Appendix1}, which will be useful for the derivation of our asymptotic results).

\subsection{First-order approximations}

The following proposition provides asymptotic approximations for the expected
values of the random variables $a_{M}$ and $b_{M}$. The result follows readily
from Proposition \ref{theoremEV1} and Proposition \ref{theoremEV2} in Section
\ref{sectionGauss}.%

%TCIMACRO{\TeXButton{%
%\begin{lemma}%
%}{\begin{lemma}}}%
%BeginExpansion
\begin{lemma}%
%EndExpansion%
%TCIMACRO{\TeXButton{\label{propositionBias}}{\label{propositionBias}}}%
%BeginExpansion
\label{propositionBias}%
%EndExpansion
With the definitions and under the assumptions above, the following
expectations hold:%
\begin{align*}
\mathbb{E}\left[  a_{M}\right]   &  =\bar{a}_{M}+\mathcal{O}\left(
N^{-3/2}\right) \\
\mathbb{E}\left[  b_{M}\right]   &  =\bar{b}_{M}+\mathcal{O}\left(
N^{-3/2}\right)  \text{.}%
\end{align*}%
%TCIMACRO{\TeXButton{%
%\end{lemma}%
%}{\end{lemma}}}%
%BeginExpansion
\end{lemma}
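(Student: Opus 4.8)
The plan is to work with the resolvent-type identities for $\mathbf{Q}_M$ and reduce the two expectation estimates to the application of Proposition \ref{theoremEV1} and Proposition \ref{theoremEV2}, which are precisely the deterministic-equivalent results for bilinear forms $\mathbf{u}^{\ast}\mathbf{Q}_M\mathbf{u}$ and $\mathbf{u}^{\ast}\mathbf{Q}_M^2\mathbf{u}$ (or a derivative thereof in the loading parameter). First I would observe that $b_M = \mathbf{u}_M^{\ast}\mathbf{Q}_M^2\mathbf{u}_M = -\partial_\alpha\!\left(\mathbf{u}_M^{\ast}\mathbf{Q}_M\mathbf{u}_M\right)\cdot(\text{a factor}) $ up to the $\mathbf{R}_M^{-1}$ weighting sitting next to $\alpha$; more precisely, since $\mathbf{Q}_M = \bigl(\tfrac1N\mathbf{X}_N\mathbf{T}_N\mathbf{X}_N^{\ast} + \alpha\mathbf{R}_M^{-1}\bigr)^{-1}$, we have $\partial_\alpha \mathbf{Q}_M = -\mathbf{Q}_M\mathbf{R}_M^{-1}\mathbf{Q}_M$, so $b_M$ is a bilinear form in $\mathbf{Q}_M\mathbf{R}_M^{-1}\mathbf{Q}_M$ rather than $\mathbf{Q}_M^2$; I would account for this by either re-deriving the estimate directly or by noting that $\mathbf{u}_M = \mathbf{R}_M^{-1/2}\mathbf{s}$ so the extra $\mathbf{R}_M^{-1}$ is harmless and the structure is the one handled in Section \ref{sectionGauss}. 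Either way, the statement of Lemma \ref{propositionBias} is the specialization of Propositions \ref{theoremEV1}--\ref{theoremEV2} to the vector $\mathbf{u}_M$, and the $\mathcal{O}(N^{-3/2})$ rate is exactly the rate those propositions deliver.

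Concretely, the steps are: (i) invoke Proposition \ref{theoremEV1} with the deterministic vector $\mathbf{u}_M$ to get $\mathbb{E}[\mathbf{u}_M^{\ast}\mathbf{Q}_M\mathbf{u}_M] = \mathbf{u}_M^{\ast}\mathbf{E}_M\mathbf{u}_M + \mathcal{O}(N^{-3/2})$, which is the first assertion since $\bar a_M = \mathbf{u}_M^{\ast}\mathbf{E}_M\mathbf{u}_M$; (ii) check that the definitions of $\mathbf{E}_M$, $\tilde{\mathbf{E}}_N$ and the fixed-point pair $(\delta_M,\tilde\delta_M)$ used here match the deterministic equivalents appearing in Proposition \ref{theoremEV1} — this is a bookkeeping check against (\ref{systemEq}); (iii) invoke Proposition \ref{theoremEV2} for the second-order (quadratic in $\mathbf{Q}_M$) bilinear form, obtaining $\mathbb{E}[b_M] = (1-\gamma_M\tilde\gamma_M)^{-1}\mathbf{u}_M^{\ast}\mathbf{E}_M^2\mathbf{u}_M + \mathcal{O}(N^{-3/2})$, which is exactly $\bar b_M$ as defined in (\ref{eq_def_abbar}); (iv) confirm that the correction factor $(1-\gamma_M\tilde\gamma_M)^{-1}$ with $\gamma_M = \tfrac1N\operatorname{tr}[\mathbf{E}_M^2]$, $\tilde\gamma_M = \tfrac1N\operatorname{tr}[\tilde{\mathbf{E}}_N^2]$ is the one produced by the differentiation/variance argument in Proposition \ref{theoremEV2}. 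Throughout, Assumption \textbf{(As2)} guarantees via the uniform bounds announced at the end of the relevant appendix that $\|\mathbf{u}_M\|$, $\|\mathbf{E}_M\|$, $\|\tilde{\mathbf{E}}_N\|$, and $(1-\gamma_M\tilde\gamma_M)^{-1}$ are all $\mathcal{O}(1)$, so there is no hidden dimension dependence degrading the rate.

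The main obstacle is not any single estimate but rather the alignment of conventions: Propositions \ref{theoremEV1} and \ref{theoremEV2} will presumably be stated for general deterministic vectors and possibly for a slightly different parametrization of the resolvent (e.g., with $\mathbf{R}_M$ rather than $\mathbf{R}_M^{-1}$ multiplying $\alpha$, or with $\mathbf{u}$ absorbing a square root of $\mathbf{R}_M$), so the real work is to verify that substituting $\mathbf{u}_M = \mathbf{R}_M^{-1/2}\mathbf{s}$ and the present $\mathbf{Q}_M$ reproduces exactly the $\mathbf{E}_M$, $\tilde{\mathbf{E}}_N$, $(\delta_M,\tilde\delta_M)$ of (\ref{systemEq}) and the variance-correction constant $\gamma_M\tilde\gamma_M$. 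Once that dictionary is fixed, the proof is immediate. I would therefore present it as: "By Proposition \ref{theoremEV1} applied to $\mathbf{u}_M$, \dots; by Proposition \ref{theoremEV2} applied to $\mathbf{u}_M$, \dots; the uniform bounds of Appendix \ref{Appendix1} ensure the remainders are genuinely $\mathcal{O}(N^{-3/2})$, completing the proof."
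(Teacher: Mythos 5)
Your proposal is correct and follows exactly the paper's route: the paper proves Lemma \ref{propositionBias} by specializing Propositions \ref{theoremEV1} and \ref{theoremEV2} to $\mathbf{\Theta}_{M}=\mathbf{u}_{M}\mathbf{u}_{M}^{\ast}$, for which $\Phi_{M}^{(1)}=a_{M}$, $\Phi_{M}^{(2)}=b_{M}$ and $\left\Vert \mathbf{\Theta}_{M}\right\Vert _{F}=\left\Vert \mathbf{u}_{M}\right\Vert ^{2}=\mathcal{O}(1)$, so the remainders are genuinely $\mathcal{O}(N^{-3/2})$. The detour through $\partial_{\alpha}\mathbf{Q}_{M}$ is unnecessary since $b_{M}=\mathbf{u}_{M}^{\ast}\mathbf{Q}_{M}^{2}\mathbf{u}_{M}$ is already the form covered by Proposition \ref{theoremEV2}, but you correctly discard it and land on the intended argument.
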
%
%EndExpansion

Based on the previous approximation rules, we will consider the following two
first-order estimates of the SNR under the supervised and the unsupervised
training settings, namely, $\overline{\mathsf{SNR}\left(  \mathbf{\hat{w}%
}_{0,\mathsf{MVDR}}\right)  }=\bar{a}_{M}^{2}/\bar{b}_{M}$, and $\overline
{\mathsf{SNR}\left(  \mathbf{\hat{w}}_{1,\mathsf{MVDR}}\right)  }=\left(
\bar{b}_{M}/\bar{a}_{M}^{2}-1\right)  ^{-1}$, respectively.

\subsection{Second-order analysis}

The following two theorems establish the asymptotic Gaussianity of the
fluctuations of the SNR performance measures (\ref{sSNRs}) and (\ref{sSNRu}).
Before stating the results, we introduce the following quantity, which is
shown to be positive in Section \ref{sectionCLT}:%
\begin{multline*}
\mathcal{V}_{M}=\tilde{\gamma}^{2}\frac{1}{N}\operatorname*{tr}\left[
\mathbf{E}_{M}^{4}\right]  +\gamma^{2}\frac{1}{N}\operatorname*{tr}\left[
\mathbf{\tilde{E}}_{N}^{4}\right] \\
+4\tilde{\gamma}\left(  1-\gamma\tilde{\gamma}\right)  \mathcal{S}%
_{M}+4\left(  \tilde{\gamma}^{2}\frac{1}{N}\operatorname*{tr}\left[
\mathbf{E}_{M}^{3}\right]  -\gamma\frac{1}{N}\operatorname*{tr}\left[
\mathbf{\tilde{E}}_{N}^{3}\right]  \right)  \mathcal{T}_{M}\\
+\frac{2}{\left(  1-\gamma\tilde{\gamma}\right)  }\left(  \tilde{\gamma}%
^{3}\left(  \frac{1}{N}\operatorname*{tr}\left[  \mathbf{E}_{M}^{3}\right]
\right)  ^{2}-2\gamma\tilde{\gamma}\frac{1}{N}\operatorname*{tr}\left[
\mathbf{E}_{M}^{3}\right]  \frac{1}{N}\operatorname*{tr}\left[  \mathbf{\tilde
{E}}_{N}^{3}\right]  +\gamma^{3}\left(  \frac{1}{N}\operatorname*{tr}\left[
\mathbf{\tilde{E}}_{N}^{3}\right]  \right)  ^{2}\right)
\end{multline*}
where we have defined%
\begin{gather*}
\mathcal{S}_{M}=\left(  \frac{\mathbf{u}_{M}^{\ast}\mathbf{E}_{M}%
^{2}\mathbf{u}_{M}}{\mathbf{u}_{M}^{\ast}\mathbf{E}_{M}\mathbf{u}_{M}}\right)
^{2}-2\frac{\mathbf{u}_{M}^{\ast}\mathbf{E}_{M}^{3}\mathbf{u}_{M}}%
{\mathbf{u}_{M}^{\ast}\mathbf{E}_{M}\mathbf{u}_{M}}+\frac{1}{2}\left[
\frac{\mathbf{u}_{M}^{\ast}\mathbf{E}_{M}^{4}\mathbf{u}_{M}}{\mathbf{u}%
_{M}^{\ast}\mathbf{E}_{M}^{2}\mathbf{u}_{M}}+\left(  \frac{\mathbf{u}%
_{M}^{\ast}\mathbf{E}_{M}^{3}\mathbf{u}_{M}}{\mathbf{u}_{M}^{\ast}%
\mathbf{E}_{M}^{2}\mathbf{u}_{M}}\right)  ^{2}\right] \\
\mathcal{T}_{M}=\frac{\mathbf{u}_{M}^{\ast}\mathbf{E}_{M}^{3}\mathbf{u}_{M}%
}{\mathbf{u}_{M}^{\ast}\mathbf{E}_{M}^{2}\mathbf{u}_{M}}-\frac{\mathbf{u}%
_{M}^{\ast}\mathbf{E}_{M}^{2}\mathbf{u}_{M}}{\mathbf{u}_{M}^{\ast}%
\mathbf{E}_{M}\mathbf{u}_{M}}\text{.}%
\end{gather*}
%

%TCIMACRO{\TeXButton{%
%\begin{theorem}%
%}{\begin{theorem}}}%
%BeginExpansion
\begin{theorem}%
%EndExpansion%
%TCIMACRO{\TeXButton{\label{theoremCLTs}}{\label{theoremCLTs}}}%
%BeginExpansion
\label{theoremCLTs}%
%EndExpansion
(\textit{Supervised Training}) Under the definitions and assumptions in
Section \ref{sectionDefAss}, the following CLT holds:%
\[
\sigma_{s,M}^{-1}\sqrt{N}\left(  \mathsf{SNR}\left(  \mathbf{\hat{w}%
}_{0,\mathsf{MVDR}}\right)  -\overline{\mathsf{SNR}\left(  \mathbf{\hat{w}%
}_{0,\mathsf{MVDR}}\right)  }\right)  \overset{\mathcal{L}}{\rightarrow
}\mathcal{N}\left(  0,1\right)
\]
where%
\[
\sigma_{s,M}^{2}=\left(  \frac{\left(  \mathbf{u}_{M}^{\ast}\mathbf{\mathbf{E}%
}_{M}\mathbf{u}_{M}\right)  ^{2}}{\mathbf{u}_{M}^{\ast}\mathbf{E}_{M}%
^{2}\mathbf{u}_{M}}\right)  ^{2}\mathcal{V}_{M}\text{.}%
\]%
%TCIMACRO{\TeXButton{%
%\end{theorem}%
%}{\end{theorem}}}%
%BeginExpansion
\end{theorem}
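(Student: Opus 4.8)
The plan is to write $\mathsf{SNR}(\mathbf{\hat{w}}_{0,\mathsf{MVDR}})=f(a_{M},b_{M})$ with $f(a,b)=a^{2}/b$, prove a bivariate CLT for $(a_{M},b_{M})$, and finish by the delta method. From $\mathbf{Y}_{N}=\mathbf{R}_{M}^{1/2}\mathbf{X}_{M}\mathbf{T}_{N}^{1/2}$ and the definition of $\mathbf{Q}_{M}$ one gets $\mathbf{\hat{R}}_{0,M}^{-1}=\mathbf{R}_{M}^{-1/2}\mathbf{Q}_{M}\mathbf{R}_{M}^{-1/2}$, whence $\mathbf{s}^{\ast}\mathbf{\hat{R}}_{0,M}^{-1}\mathbf{s}=a_{M}$ and, since $\mathbf{R}_{M}=\mathbf{R}_{0,M}$ in the supervised case, $\mathbf{s}^{\ast}\mathbf{\hat{R}}_{0,M}^{-1}\mathbf{R}_{0,M}\mathbf{\hat{R}}_{0,M}^{-1}\mathbf{s}=b_{M}$; thus (\ref{sSNRs}) equals $a_{M}^{2}/b_{M}$, and likewise $\overline{\mathsf{SNR}(\mathbf{\hat{w}}_{0,\mathsf{MVDR}})}=\bar{a}_{M}^{2}/\bar{b}_{M}$. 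Under \textbf{(As2)}--\textbf{(As3)}, $\|\mathbf{Q}_{M}\|\le\|\mathbf{R}\|_{\sup}/\alpha$ and $\|\mathbf{Q}_{M}^{-1}\|\le N^{-1}\|\mathbf{X}_{M}\|^{2}\|\mathbf{T}_{N}\|+\alpha\|\mathbf{R}_{M}^{-1}\|$ with $N^{-1}\|\mathbf{X}_{M}\|^{2}$ almost surely bounded, so for all large $M$ the pair $(a_{M},b_{M})$ a.s. lies in a fixed compact subset of $\mathbb{R}^{+}\times\mathbb{R}^{+}$, on which $f$ is $C^{\infty}$ and $b$ is bounded away from $0$; this makes the delta method clean.

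For the bivariate CLT I would fix $(\theta_{1},\theta_{2})\in\mathbb{R}^{2}$, put $Z_{N}=\sqrt{N}[\theta_{1}(a_{M}-\mathbb{E}a_{M})+\theta_{2}(b_{M}-\mathbb{E}b_{M})]$, and study $\varphi_{N}(\omega)=\mathbb{E}[e^{i\omega Z_{N}}]$. Differentiating, $\varphi_{N}'(\omega)=i\,\mathbb{E}[Z_{N}e^{i\omega Z_{N}}]$; then, expanding $a_{M}=\mathbf{u}_{M}^{\ast}\mathbf{Q}_{M}\mathbf{u}_{M}$ and $b_{M}=\mathbf{u}_{M}^{\ast}\mathbf{Q}_{M}^{2}\mathbf{u}_{M}$ over the entries of $\mathbf{X}_{M}$ and applying the integration-by-parts formula for Gaussian functionals (Section \ref{sectionGauss}), using the resolvent identities for $\mathbf{Q}_{M}$, the a.s. bound on $\|\mathbf{Q}_{M}\|$ and the uniform bounds on $\mathbf{E}_{M}$, $\mathbf{\tilde{E}}_{N}$ and $(1-\gamma_{M}\tilde{\gamma}_{M})^{-1}$ of Appendix \ref{Appendix1} to discard remainder terms via the Nash--Poincar\'e inequality, one arrives at $\varphi_{N}'(\omega)=-\omega\,\Theta_{M}\,\varphi_{N}(\omega)+o(1)$ uniformly for $\omega$ in compacts and for $(\theta_{1},\theta_{2})$ in bounded sets, where $\Theta_{M}=\theta_{1}^{2}V_{aa,M}+2\theta_{1}\theta_{2}V_{ab,M}+\theta_{2}^{2}V_{bb,M}$ and $V_{aa,M},V_{ab,M},V_{bb,M}$ are the deterministic equivalents of $N\operatorname*{var}(a_{M})$, $N\operatorname*{cov}(a_{M},b_{M})$, $N\operatorname*{var}(b_{M})$ — produced by the same integration-by-parts bookkeeping that yields Lemma \ref{propositionBias}, and bounded uniformly in $M$ by Appendix \ref{Appendix1}. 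Integrating gives $\varphi_{N}(\omega)\to\exp(-\omega^{2}\Theta_{M}/2)$, i.e. the asymptotic joint Gaussianity of $\sqrt{N}(a_{M}-\mathbb{E}a_{M},\,b_{M}-\mathbb{E}b_{M})$ with deterministic-equivalent covariance $(V_{aa,M},V_{ab,M};V_{ab,M},V_{bb,M})$.

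It remains to transfer this to $f$. By Lemma \ref{propositionBias}, $\sqrt{N}(a_{M}-\bar{a}_{M})=\sqrt{N}(a_{M}-\mathbb{E}a_{M})+\mathcal{O}(N^{-1})$ and similarly for $b_{M}$, so centering may be taken at $(\bar{a}_{M},\bar{b}_{M})$; also the joint CLT makes $a_{M}-\bar{a}_{M}$ and $b_{M}-\bar{b}_{M}$ of order $N^{-1/2}$ in probability. Since $f$ is smooth with bounded Hessian on the compact set above, a second-order Taylor expansion gives $\sqrt{N}(a_{M}^{2}/b_{M}-\bar{a}_{M}^{2}/\bar{b}_{M})=\frac{2\bar{a}_{M}}{\bar{b}_{M}}\sqrt{N}(a_{M}-\bar{a}_{M})-\frac{\bar{a}_{M}^{2}}{\bar{b}_{M}^{2}}\sqrt{N}(b_{M}-\bar{b}_{M})+o_{p}(1)$. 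Applying the bivariate CLT with the bounded choice $\theta_{1}=2\bar{a}_{M}/\bar{b}_{M}$, $\theta_{2}=-\bar{a}_{M}^{2}/\bar{b}_{M}^{2}$ — legitimate by the uniformity of the remainder estimate noted above — this is asymptotically $\mathcal{N}(0,\sigma_{s,M}^{2})$ with $\sigma_{s,M}^{2}=\frac{4\bar{a}_{M}^{2}}{\bar{b}_{M}^{2}}V_{aa,M}-\frac{4\bar{a}_{M}^{3}}{\bar{b}_{M}^{3}}V_{ab,M}+\frac{\bar{a}_{M}^{4}}{\bar{b}_{M}^{4}}V_{bb,M}$. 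Substituting $\bar{a}_{M}=\mathbf{u}_{M}^{\ast}\mathbf{E}_{M}\mathbf{u}_{M}$, $\bar{b}_{M}=(1-\gamma\tilde{\gamma})^{-1}\mathbf{u}_{M}^{\ast}\mathbf{E}_{M}^{2}\mathbf{u}_{M}$ and the explicit forms of $V_{aa,M},V_{ab,M},V_{bb,M}$, a lengthy but routine simplification collapses $\sigma_{s,M}^{2}$ to $\left(\frac{(\mathbf{u}_{M}^{\ast}\mathbf{E}_{M}\mathbf{u}_{M})^{2}}{\mathbf{u}_{M}^{\ast}\mathbf{E}_{M}^{2}\mathbf{u}_{M}}\right)^{2}\mathcal{V}_{M}$; since $\mathcal{V}_{M}>0$ (Section \ref{sectionCLT}) and $\sigma_{s,M}^{2}$ stays bounded away from $0$ and $\infty$ by Appendix \ref{Appendix1}, dividing by $\sigma_{s,M}$ and using Slutsky's lemma gives the stated CLT.

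The main obstacle is the characteristic-function computation of the second step: each integration by parts applied to a bilinear form of the resolvent of $N^{-1}\mathbf{X}_{M}\mathbf{T}_{N}\mathbf{X}_{M}^{\ast}$ spawns a proliferation of terms, and extracting the exact variance $-\omega\Theta_{M}\varphi_{N}(\omega)$ while showing the rest is $o(1)$ requires a disciplined use of the Nash--Poincar\'e inequality and of the self-consistent system (\ref{systemEq}); because $\mathbf{T}_{N}\ne\mathbf{I}_{N}$, the ``tilde'' traces $N^{-1}\operatorname*{tr}[\mathbf{\tilde{E}}_{N}^{k}]$ must be propagated alongside their $\mathbf{E}_{M}$ counterparts throughout, which is exactly what produces the mixed $\gamma/\tilde{\gamma}$ structure of $\mathcal{V}_{M}$. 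The final algebraic identification of $\sigma_{s,M}^{2}$ with the displayed prefactor times $\mathcal{V}_{M}$ is mechanical.
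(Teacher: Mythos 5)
Your proposal is correct and follows essentially the same route as the paper: the paper's Theorem \ref{theoremBasicCLT} is precisely your linear-combination CLT for $\sqrt{N}\left(A_M(a_M-\bar a_M)+B_M(b_M-\bar b_M)\right)$ proved by the same characteristic-function/integration-by-parts/Nash--Poincar\'e machinery, and the completion step uses the coefficients $A_{s,M}=2\bar a_M/\bar b_M$, $B_{s,M}=-(\bar a_M/\bar b_M)^2$ that you obtain from the delta method. The only cosmetic difference is that the paper replaces your second-order Taylor expansion by the exact identity whose remainder is $\varepsilon_{s,M}=\sqrt{N}\left(a_M/b_M-\bar a_M/\bar b_M\right)^2 b_M$, shown to be $o_p(1)$ by the same Markov/variance-control argument you sketch.
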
%
%EndExpansion
%

%TCIMACRO{\TeXButton{%
%\begin{theorem}%
%}{\begin{theorem}}}%
%BeginExpansion
\begin{theorem}%
%EndExpansion%
%TCIMACRO{\TeXButton{\label{theoremCLTu}}{\label{theoremCLTu}}}%
%BeginExpansion
\label{theoremCLTu}%
%EndExpansion
(\textit{Unsupervised Training}) Under the definitions and assumptions in
Section \ref{sectionDefAss}, the following CLT holds:%
\[
\sigma_{u,M}^{-1}\sqrt{N}\left(  \mathsf{SNR}\left(  \mathbf{\hat{w}%
}_{1,\mathsf{MVDR}}\right)  -\overline{\mathsf{SNR}\left(  \mathbf{\hat{w}%
}_{1,\mathsf{MVDR}}\right)  }\right)  \overset{\mathcal{L}}{\rightarrow
}\mathcal{N}\left(  0,1\right)
\]
where%
\[
\sigma_{u,M}^{2}=\left(  \frac{\left(  \mathbf{u}_{M}^{\ast}\mathbf{\mathbf{E}%
}_{M}\mathbf{u}_{M}\right)  ^{2}}{\mathbf{u}_{M}^{\ast}\mathbf{E}_{M}%
^{2}\mathbf{u}_{M}}\right)  ^{2}\left(  1-\frac{\left(  \mathbf{u}_{M}^{\ast
}\mathbf{\mathbf{E}}_{M}\mathbf{u}_{M}\right)  ^{2}}{\mathbf{u}_{M}^{\ast
}\mathbf{E}_{M}^{2}\mathbf{u}_{M}}\left(  1-\gamma_{M}\tilde{\gamma}%
_{M}\right)  \right)  ^{-4}\mathcal{V}_{M}\text{,}%
\]%
%TCIMACRO{\TeXButton{%
%\end{theorem}%
%}{\end{theorem}}}%
%BeginExpansion
\end{theorem}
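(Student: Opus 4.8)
The plan is to obtain Theorem~\ref{theoremCLTu} as a corollary of Theorem~\ref{theoremCLTs} via the delta method, after noting that the unsupervised SNR is a smooth scalar function of the very quadratic forms that govern the supervised case. In the unsupervised setting $\mathbf{R}_M\equiv\mathbf{R}_{1,M}=\mathbf{ss}^{\ast}+\mathbf{R}_{0,M}$, $\mathbf{u}_M=\mathbf{R}_{1,M}^{-1/2}\mathbf{s}$, and $\mathbf{\hat{R}}_{1,M}^{-1}=\mathbf{R}_{1,M}^{-1/2}\mathbf{Q}_M\mathbf{R}_{1,M}^{-1/2}$, so that $\mathbf{s}^{\ast}\mathbf{\hat{R}}_{1,M}^{-1}\mathbf{s}=a_M$ and $\mathbf{s}^{\ast}\mathbf{\hat{R}}_{1,M}^{-1}\mathbf{R}_{1,M}\mathbf{\hat{R}}_{1,M}^{-1}\mathbf{s}=b_M$ with $a_M,b_M$ as in (\ref{eq_defab}); by (\ref{sSNRu}),
\[
\mathsf{SNR}\bigl(\mathbf{\hat{w}}_{1,\mathsf{MVDR}}\bigr)=\Bigl(\frac{b_M}{a_M^{2}}-1\Bigr)^{-1}=\phi(\psi_M),\qquad\psi_M:=\frac{a_M^{2}}{b_M},\quad\phi(t):=\frac{t}{1-t},
\]
and likewise $\overline{\mathsf{SNR}(\mathbf{\hat{w}}_{1,\mathsf{MVDR}})}=\phi(\bar\psi_M)$ with $\bar\psi_M:=\bar a_M^{2}/\bar b_M$.

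The decisive observation is that $\psi_M$ is exactly the supervised functional (\ref{sSNRs}), built from $\mathbf{R}_{1,M}$ in place of $\mathbf{R}_{0,M}$. Since $\mathbf{R}_{1,M}$ inherits Assumption~\textbf{(As2)} from $\mathbf{R}_{0,M}$ ($\|\mathbf{R}_{1,M}\|\le 1+\|\mathbf{R}\|_{\sup}$ and $\mathbf{R}_{1,M}\succeq\mathbf{R}_{0,M}$), and the hypotheses of Theorem~\ref{theoremCLTs} are stated for a generic spatial covariance, that theorem applies with $\mathbf{R}_M\equiv\mathbf{R}_{1,M}$, all of $\mathbf{E}_M,\gamma_M,\tilde{\gamma}_M,\mathcal{V}_M,\sigma_{s,M}^{2}$ being the corresponding $\beta=1$ objects; hence
\[
\sigma_{s,M}^{-1}\sqrt{N}\bigl(\psi_M-\bar\psi_M\bigr)\overset{\mathcal{L}}{\rightarrow}\mathcal{N}(0,1).
\]
Combining the uniform bounds of Appendix~\ref{Appendix1} (which control $\mathbf{E}_M$, $\mathbf{u}_M$ and $(1-\gamma_M\tilde{\gamma}_M)^{-1}$) with the positivity of $\mathcal{V}_M$ from Section~\ref{sectionCLT}, $\sigma_{s,M}$ is bounded above and away from zero uniformly in $M$, so $\psi_M-\bar\psi_M=\mathcal{O}_p(N^{-1/2})$; centering at $\bar\psi_M$ rather than at the mean-based equivalent is justified by Lemma~\ref{propositionBias}, whose $\mathcal{O}(N^{-3/2})$ corrections are negligible at the $\sqrt{N}$ scale.

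It then remains to run the delta method. By Cauchy--Schwarz $\bar a_M^{2}=(\mathbf{u}_M^{\ast}\mathbf{E}_M\mathbf{u}_M)^{2}\le\|\mathbf{u}_M\|^{2}\,\mathbf{u}_M^{\ast}\mathbf{E}_M^{2}\mathbf{u}_M$, so $\bar\psi_M\le\|\mathbf{u}_M\|^{2}(1-\gamma_M\tilde{\gamma}_M)\le\|\mathbf{u}_M\|^{2}=\mathbf{s}^{\ast}\mathbf{R}_{1,M}^{-1}\mathbf{s}=\mathsf{SNR}_{\mathsf{opt}}/(1+\mathsf{SNR}_{\mathsf{opt}})$ by the matrix inversion lemma, and $\mathsf{SNR}_{\mathsf{opt}}=\mathbf{s}^{\ast}\mathbf{R}_{0,M}^{-1}\mathbf{s}\le 1/\|\mathbf{R}\|_{\inf}$; therefore $\limsup_M\bar\psi_M\le 1/(1+\|\mathbf{R}\|_{\inf})<1$. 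Consequently $\phi$ is $C^{2}$ with $\phi'$, $\phi''$ bounded uniformly on a fixed neighbourhood of $\{\bar\psi_M\}$, inside which $\psi_M$ falls with probability tending to one; a first-order Taylor expansion of $\phi$ at $\bar\psi_M$ then gives
\[
\sqrt{N}\bigl(\mathsf{SNR}(\mathbf{\hat{w}}_{1,\mathsf{MVDR}})-\overline{\mathsf{SNR}(\mathbf{\hat{w}}_{1,\mathsf{MVDR}})}\bigr)=\phi'(\bar\psi_M)\,\sqrt{N}\bigl(\psi_M-\bar\psi_M\bigr)+o_p(1).
\]
Since $\phi'(t)=(1-t)^{-2}>0$, Slutsky's lemma together with the CLT for $\psi_M$ yields the assertion with $\sigma_{u,M}^{2}=\phi'(\bar\psi_M)^{2}\sigma_{s,M}^{2}=(1-\bar\psi_M)^{-4}\sigma_{s,M}^{2}$. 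Substituting $\bar\psi_M=(1-\gamma_M\tilde{\gamma}_M)(\mathbf{u}_M^{\ast}\mathbf{E}_M\mathbf{u}_M)^{2}/(\mathbf{u}_M^{\ast}\mathbf{E}_M^{2}\mathbf{u}_M)$ and $\sigma_{s,M}^{2}=\bigl((\mathbf{u}_M^{\ast}\mathbf{E}_M\mathbf{u}_M)^{2}/(\mathbf{u}_M^{\ast}\mathbf{E}_M^{2}\mathbf{u}_M)\bigr)^{2}\mathcal{V}_M$ from Theorem~\ref{theoremCLTs} reproduces the displayed formula for $\sigma_{u,M}^{2}$.

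The one ingredient that is not automatic is the uniform separation $\limsup_M\bar\psi_M<1$ (equivalently $\liminf_M(1-\bar a_M^{2}/\bar b_M)>0$): it is what keeps $\phi$ and its derivatives bounded along the sequence, so that the Taylor remainder is $o_p(1)$ after the $\sqrt{N}$ normalization and $\sigma_{u,M}^{2}$ stays finite and bounded away from zero. As seen above it reduces to Cauchy--Schwarz and the boundedness of $\mathsf{SNR}_{\mathsf{opt}}$ under Assumption~\textbf{(As2)}; everything else is a routine application of the already-established Theorem~\ref{theoremCLTs}. Equivalently, one may bypass $\phi$ and invoke the delta method directly on the joint asymptotic normality of $(a_M,b_M)$ underlying Theorem~\ref{theoremCLTs} with the $C^{1}$ map $(a,b)\mapsto(b/a^{2}-1)^{-1}$, whose gradient equals $(1-a^{2}/b)^{-2}$ times that of $(a,b)\mapsto a^{2}/b$, leading to the same variance.
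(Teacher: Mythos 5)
Your proposal is correct and reaches the stated variance, but it is organized differently from the paper's argument. The paper does not derive Theorem~\ref{theoremCLTu} from Theorem~\ref{theoremCLTs}: it proves both in parallel from the bivariate CLT of Theorem~\ref{theoremBasicCLT}, writing $\sqrt{N}\left(\mathsf{SNR}\left(\mathbf{\hat{w}}_{1,\mathsf{MVDR}}\right)-\overline{\mathsf{SNR}\left(\mathbf{\hat{w}}_{1,\mathsf{MVDR}}\right)}\right)=\sqrt{N}\left(A_{u,M}\left(a_{M}-\bar{a}_{M}\right)+B_{u,M}\left(b_{M}-\bar{b}_{M}\right)\right)+\varepsilon_{u,M}$ with the gradient coefficients (\ref{AB_uM}) of the bivariate map $(a,b)\mapsto a^{2}/(b-a^{2})$, and then kills the quadratic remainder $\varepsilon_{u,M}=\varepsilon_{u,M}^{(1)}+\varepsilon_{u,M}^{(2)}$ by hand, via Chebyshev's inequality, the variance controls of Lemma~\ref{lemmaVC_Phi_Psi}, and the uniform bounds (\ref{sup_ab})--(\ref{bounds_ratio_bBar-aBar2}). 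You instead factor the map as $\phi(t)=t/(1-t)$ composed with the supervised functional $\psi_{M}=a_{M}^{2}/b_{M}$ (built from the $\beta=1$ covariance), invoke Theorem~\ref{theoremCLTs} for $\psi_{M}$, and finish with a scalar delta method; by the chain rule $\left(A_{u,M},B_{u,M}\right)=\phi'\left(\bar{\psi}_{M}\right)\left(A_{s,M},B_{s,M}\right)$, so the quadratic form $\sigma_{\xi,M}^{2}$ scales by $\phi'\left(\bar{\psi}_{M}\right)^{2}=\left(1-\bar{\psi}_{M}\right)^{-4}$ and your variance coincides with the paper's. Your route is more modular and shorter, at the price of two checks you do supply: first, that Theorem~\ref{theoremCLTs} is really a statement about a generic $\mathbf{R}_{M}$ satisfying \textbf{(As2)} (the paper licenses this by dropping the subscript $\beta$ in Section~\ref{sectionDefAss}, and $\mathbf{R}_{1,M}$ indeed inherits \textbf{(As2)}); second, the uniform separation $\limsup_{M}\bar{\psi}_{M}<1$, which is equivalent to the paper's bound (\ref{bounds_ratio_bBar-aBar2}) and which you rederive cleanly from Cauchy--Schwarz and the matrix inversion lemma. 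The paper's route avoids the composition, treats the two training settings symmetrically from the same basic CLT, and makes the second-order error term explicit; yours buys economy by reusing the supervised theorem as a black box, as your closing remark correctly notes the two linearizations are the same up to the chain rule.
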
%
%EndExpansion

The CLT's established in Theorems \ref{theoremCLTs} and \ref{theoremCLTu}
state the intricate but explicit dependence on the spatial and temporal
covariance matrices $\mathbf{R}_{M}$ and $\mathbf{T}_{N}$ of the mean and
variance of the realized SNR. In particular, notice that these two moments
univocally define the asymptotic Gaussian distributions derived above. Further
insights can be gained by a scenario-based analysis considering particular
choices of the covariances $\mathbf{R}_{M}$ and $\mathbf{T}_{N}$. Though
undoubtedly of practical relevance, such an analysis is outside of the scope
of this work, and left open for future research.

We remark that the previous analytical characterization of the asymptotic
distribution of the SNR for a given, fixed diagonal loading parameter, could
be used for selecting an improved parameter. Previous work by one of the
authors proposes a simple approach for fixing $\alpha$ by considering only a
first-order asymptotic analysis \cite{ML05B,ML06}. Potential approaches
exploiting the second-order asymptotic results provided here might be based on
determining the diagonal loading factor maximizing not only the expected value
of the realized SNR, but a linear combination of the mean and the variance
(i.e., the fluctuations). Given that now not only the variance but the whole
distribution of the realized SNR is available, the previous proposed approach
based on the first two moments could also be extended to the optimization of a
given quantile by borrowing techniques from robust regression and robust
statistics. This, again, is a far from trivial problem which deserves a line
of research on its own.

On a final note, we recall how the asymptotic analysis can shed some light on
the convergence properties of the SNR, when the noise includes the
contribution from interfering sources. Using a simplified version of Theorem
\ref{theoremCLTs} for time-uncorrelated sources, it was theoretically shown in
\cite{ML05B,ML06} that, in scenarios where interferences are much more
powerful than the background noise, the minimum number of snapshots per
antenna to achieve an output SNR within $3$dB of the optimum one becomes: i)
$N>2K$ in the supervised case (compare with the classical $N>2M$ of the rule
proposed in \cite{RMB74}); ii) $N>\left(  2+\mathsf{SNR}_{\mathsf{opt}%
}\right)  K$ in the unsupervised case, where $K$ is the dimension of the
interference subspace. Hence, diagonal loading reduces the number of needed
samples by approximately a factor of $K/M$ (relative interference subspace dimension).

\section{Mathematical tools and preparatory results%
%TCIMACRO{\TeXButton{\label{sectionGauss}}{\label{sectionGauss}}}%
%BeginExpansion
\label{sectionGauss}%
%EndExpansion
}

In this section, we introduce some mathematical tools and intermediate
technical results that will be useful for the proof of the central limit
theorems in Section \ref{sectionMain}. In the sequel, we will denote by
$\mathbf{Z}_{M}\in\mathbb{C}^{M\times M}$ and $\mathbf{\tilde{Z}}_{N}%
\in\mathbb{C}^{N\times N}$ sequences of arbitrary diagonal nonrandom matrices
with uniformly bounded spectral norm (in $M$ and $N$, respectively).
Similarly, $\mathbf{\Theta}_{M}\in\mathbb{C}^{M\times M}$ and $\mathbf{\tilde
{\Theta}}_{N}\in\mathbb{C}^{N\times N}$ will represent sequences of positive
definite nonrandom matrices having trace norm uniformly bounded from above by
finite scalars denoted, respectively, by $\left\Vert \mathbf{\Theta
}\right\Vert _{\operatorname*{tr},\sup}$ and $\left\Vert \mathbf{\tilde
{\Theta}}\right\Vert _{\operatorname*{tr},\sup}$, and trace operator uniformly
bounded away from zero, i.e., $\min\left\{  \theta_{\inf},\tilde{\theta}%
_{\inf}\right\}  >0$, where $\theta_{\inf}=\inf_{M\geq1}\operatorname*{tr}%
\left[  \mathbf{\Theta}_{M}\right]  $ and $\tilde{\theta}_{\inf}=\inf_{N\geq
1}\operatorname*{tr}\left[  \mathbf{\tilde{\Theta}}_{N}\right]  $. In
particular, notice that $\left\Vert \mathbf{\Theta}_{M}\right\Vert _{F}%
\leq\left\Vert \mathbf{\Theta}_{M}\right\Vert _{\operatorname*{tr}}$, and so
the Frobenius norm of $\mathbf{\Theta}_{M}$ is also uniformly bounded. For
instance, in the cases $\mathbf{\Theta}_{M}=\frac{1}{M}\mathbf{Z}_{M}^{\ast
}\mathbf{Z}_{M}$ and $\mathbf{\Theta}_{M}=\mathbf{u}_{M}\mathbf{u}_{M}^{\ast}%
$, we have $\left\Vert \frac{1}{M}\mathbf{Z}_{M}^{\ast}\mathbf{Z}%
_{M}\right\Vert _{F}=\frac{1}{M^{1/2}}\left(  \frac{1}{M}\operatorname*{tr}%
\left[  \left(  \mathbf{Z}_{M}^{\ast}\mathbf{Z}_{M}\right)  ^{2}\right]
\right)  ^{1/2}=\mathcal{O}\left(  N^{-1/2}\right)  $ and $\left\Vert
\mathbf{u}_{M}\mathbf{u}_{M}^{\ast}\right\Vert _{F}=\left\Vert \mathbf{u}%
_{M}\right\Vert ^{2}=\mathcal{O}\left(  1\right)  $, respectively. We remark
that the positive definiteness of the matrices $\mathbf{\Theta}_{M}$ and
$\mathbf{\tilde{\Theta}}_{N}$ only represents a purely technical assumption
that will facilitate the proofs, but which can be relaxed to extend the
results to the case of arbitrary not necessarily positive definite matrices.

Next, we introduce some results that will represent a set of essential tools
for the proof of Theorem \ref{theoremCLTs} and Theorem \ref{theoremCLTu}.

\subsection{Gaussian tools}

We first briefly comment on the bounded character of the empirical moments of
the spectral norm. Let $p$ be a fixed integer and let $\left\{  \mathbf{\tilde
{Z}}_{N}^{(l)}\right\}  $, $1\leq l\leq p$, denote a set of $p$ sequences of
$N\times N$ diagonal deterministic matrices with uniformly bounded spectral
norm in $N$. Then, for $p\geq1$, we have
\begin{equation}
\mathbb{E}\left[  \left\Vert \frac{\mathbf{X}_{N}\mathbf{\tilde{Z}}_{N}%
^{(1)}\mathbf{X}{}_{N}^{\ast}}{N}\frac{\mathbf{X}_{N}\mathbf{\tilde{Z}}%
_{N}^{(2)}\mathbf{X}{}_{N}^{\ast}}{N}\cdots\frac{\mathbf{X}_{N}\mathbf{\tilde
{Z}}_{N}^{(p)}\mathbf{X}{}_{N}^{\ast}}{N}\right\Vert \right]  <K_{p}%
.\label{eq_bounded_moments}%
\end{equation}
The proof of (\ref{eq_bounded_moments}) follows by first writing, using the
submultiplicative property of the spectral norm,%
\begin{multline*}
\mathbb{E}\left[  \left\Vert \frac{\mathbf{X}_{N}\mathbf{\tilde{Z}}_{N}%
^{(1)}\mathbf{X}{}_{N}^{\ast}}{N}\frac{\mathbf{X}_{N}\mathbf{\tilde{Z}}%
_{N}^{(2)}\mathbf{X}{}_{N}^{\ast}}{N}\cdots\frac{\mathbf{X}_{N}\mathbf{\tilde
{Z}}_{N}^{(p)}\mathbf{X}{}_{N}^{\ast}}{N}\right\Vert \right]  \leq\\
\leq\left(  \prod\limits_{r=1}^{p}\sup_{N\geq1}\left\Vert \mathbf{\tilde{Z}%
}_{N}^{(r)}\right\Vert \right)  \mathbb{E}\left[  \left\Vert \frac
{\mathbf{X}_{N}\mathbf{X}_{N}^{\ast}}{N}\right\Vert ^{p}\right]  \text{.}%
\end{multline*}
and then applying the following intermediate result.%

%TCIMACRO{\TeXButton{%
%\begin{lemma}%
%}{\begin{lemma}}}%
%BeginExpansion
\begin{lemma}%
%EndExpansion%
%TCIMACRO{\TeXButton{\label{momsn}}{\label{momsn}}}%
%BeginExpansion
\label{momsn}%
%EndExpansion
Let $\mathbf{\tilde{X}}_{N}\in\mathbb{R}^{M\times N}$ be a matrix having
entries defined as i.i.d. Gaussian random variables with mean zero and
variance one. Then, the following inequality holds for every $q\geq1$, i.e.,%
\[
\sup_{N\geq1}\mathbb{E}\left[  \left\Vert \frac{\mathbf{\tilde{X}}_{N}}%
{\sqrt{N}}\right\Vert ^{q}\right]  <+\infty\text{.}%
\]%
%TCIMACRO{\TeXButton{%
%\end{lemma}%
%}{\end{lemma}}}%
%BeginExpansion
\end{lemma}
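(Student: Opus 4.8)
The plan is to derive a Gaussian-type deviation inequality for $\|\mathbf{\tilde{X}}_{N}/\sqrt{N}\|$ that is uniform in $N$, and then recover the moment bound by integrating the tail. Note that under Assumption \textbf{(As3)} the ratios $c_{M}=M/N$ are bounded, say $\bar{c}:=\sup_{M}c_{M}<\infty$, which I use below through $(M+N)/N=c_{M}+1\leq\bar{c}+1$.

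First I would establish a uniform sub-Gaussian tail: there exist $s_{0}=s_{0}(\bar{c})$ and $\kappa>0$ with $\mathbb{P}(\|\mathbf{\tilde{X}}_{N}/\sqrt{N}\|\geq s)\leq e^{-\kappa Ns^{2}}\leq e^{-\kappa s^{2}}$ for all $s\geq s_{0}$ and all $N\geq 1$. Writing the spectral norm variationally as $\|\mathbf{\tilde{X}}_{N}\|=\max_{\|\mathbf{u}\|=\|\mathbf{v}\|=1}\mathbf{u}^{T}\mathbf{\tilde{X}}_{N}\mathbf{v}$, I would fix $1/4$-nets $\mathcal{N}_{M},\mathcal{N}_{N}$ of the Euclidean unit spheres of $\mathbb{R}^{M}$ and $\mathbb{R}^{N}$, of cardinalities at most $9^{M}$ and $9^{N}$, and use the standard net comparison $\|\mathbf{\tilde{X}}_{N}\|\leq 2\max_{\mathbf{u}\in\mathcal{N}_{M},\,\mathbf{v}\in\mathcal{N}_{N}}|\mathbf{u}^{T}\mathbf{\tilde{X}}_{N}\mathbf{v}|$. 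For fixed unit vectors, $\mathbf{u}^{T}\mathbf{\tilde{X}}_{N}\mathbf{v}=\sum_{i,j}u_{i}v_{j}X_{ij}$ is $\mathcal{N}(0,1)$, so $\mathbb{P}(|\mathbf{u}^{T}\mathbf{\tilde{X}}_{N}\mathbf{v}|>t)\leq 2e^{-t^{2}/2}$; a union bound over the at most $9^{M+N}$ pairs together with $t=s\sqrt{N}/2$ yields $\mathbb{P}(\|\mathbf{\tilde{X}}_{N}/\sqrt{N}\|>s)\leq 2\cdot 9^{M+N}e^{-Ns^{2}/8}$.

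The step I expect to be the crux is absorbing the union-bound entropy: one needs $(M+N)\log 9\leq(\bar{c}+1)N\log 9$ to be dominated by the Gaussian exponent $Ns^{2}/8$ \emph{uniformly in $N$}, which is exactly what the boundedness of $c_{M}$ in Assumption \textbf{(As3)} provides (the argument collapses if $M/N\to\infty$). For $s$ exceeding the resulting threshold $s_{0}^{2}=16(\log 2+(\bar{c}+1)\log 9)$ the bound becomes $e^{-Ns^{2}/16}\leq e^{-s^{2}/16}$, so $\kappa=1/16$ works. Integrating via the layer-cake formula, $\mathbb{E}[\|\mathbf{\tilde{X}}_{N}/\sqrt{N}\|^{q}]=\int_{0}^{\infty}qs^{q-1}\mathbb{P}(\|\mathbf{\tilde{X}}_{N}/\sqrt{N}\|>s)\,ds\leq s_{0}^{q}+\int_{s_{0}}^{\infty}qs^{q-1}e^{-s^{2}/16}\,ds=:K_{q}$, a finite constant depending only on $q$ and $\bar{c}$ but not on $N$; taking $\sup_{N\geq 1}$ completes the proof.

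As an alternative, more in the spirit of the moment/RMT techniques used elsewhere in the paper, the tail step can be bypassed: for even $q=2k$ one has $\mathbb{E}\|\mathbf{\tilde{X}}_{N}/\sqrt{N}\|^{2k}\leq\mathbb{E}\operatorname{tr}[(\mathbf{\tilde{X}}_{N}\mathbf{\tilde{X}}_{N}^{T}/N)^{k}]$, the right-hand side is estimated through the Wick (pairing) expansion by $C_{0}N\kappa_{0}^{k}$ with $C_{0},\kappa_{0}$ depending only on $\bar{c}$, and one then optimises over $k=k(N)\sim\log N$ to neutralise the residual $N^{1/(2k)}$ factor and obtain a uniform bound; general $q$ follows by Lyapunov's inequality (with finiteness for the finitely many small $N$ handled directly). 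In that route the obstacle shifts to proving the bound $C_{0}N\kappa_{0}^{k}$ with $\kappa_{0}$ \emph{independent of} $k$, i.e.\ to controlling the crossing-pairing (subleading) contributions uniformly in $k$. Either route delivers $\sup_{N\geq1}\mathbb{E}[\|\mathbf{\tilde{X}}_{N}/\sqrt{N}\|^{q}]<\infty$.
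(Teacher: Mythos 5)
Your argument is correct, and it reaches the stated bound, but it takes a genuinely different route from the paper. The paper's proof imports the sharp Gaussian concentration inequality of Davidson and Szarek for the largest singular value, $\mathbb{P}\left(\left\vert\left\Vert\mathbf{\tilde{X}}_{N}\right\Vert-(\sqrt{M}+\sqrt{N})\right\vert\geq t\right)<2e^{-t^{2}/2}$, and then applies the same layer-cake integration you use; the centering at $\sqrt{M}+\sqrt{N}$ does all the entropy bookkeeping for free, and the conclusion comes out in the sharper form $\mathbb{E}\left[\left\Vert\mathbf{\tilde{X}}_{N}/\sqrt{N}\right\Vert^{q}\right]=K_{q}+\mathcal{O}(N^{-1/2})$ with essentially optimal constants. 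Your first route replaces the cited concentration result with a self-contained $1/4$-net plus union-bound derivation of a sub-Gaussian tail; the computation is right (the net comparison $\left\Vert\mathbf{\tilde{X}}_{N}\right\Vert\leq2\max\left\vert\mathbf{u}^{T}\mathbf{\tilde{X}}_{N}\mathbf{v}\right\vert$, the cardinality $9^{M+N}$, and the threshold $s_{0}^{2}=16(\log2+(\bar{c}+1)\log9)$ all check out), and you correctly identify that absorbing the entropy factor is exactly where the bounded aspect ratio of Assumption \textbf{(As3)} enters — the paper uses \textbf{(As3)} in the same essential place, only implicitly, when dividing $\sqrt{M}+\sqrt{N}$ by $\sqrt{N}$. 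What your version buys is elementarity (only the tail of a single Gaussian linear functional is needed); what it costs is cruder constants and a longer write-up. Your alternative moment-method sketch is not needed and, as you acknowledge, leaves its crux (uniformity of the combinatorial constant in $k$) unproved, so it should not be relied on; the net argument alone suffices.
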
%
%EndExpansion%
%TCIMACRO{\TeXButton{%
%\begin{proof}%
%}{\begin{proof}}}%
%BeginExpansion
\begin{proof}%
%EndExpansion
The proof is based on some well-known results about the concentration of
Gaussian measures and its applications to random matrix theory (see, e.g.,
\cite{L07B}). In particular, we build upon the following large deviation
inequality for the largest singular value of a Gaussian matrix \cite[Theorem
II.13]{DS01B}, namely,%
\begin{equation}
\mathbb{P}\left(  \left\vert \left\Vert \mathbf{\tilde{X}}_{N}\right\Vert
-\left(  \sqrt{M}+\sqrt{N}\right)  \right\vert \geq t\right)  <2\exp\left(
-\frac{t^{2}}{2}\right)  \text{,}\label{conmes}%
\end{equation}
for any $t>0$. Furthermore, for every non-negative random variable $X$, we
have $\mathbb{E}\left[  X\right]  =\int_{0}^{\infty}\mathbb{P}\left(  X\geq
x\right)  \ dx$. Now, using the change of variables $x=t^{q}$, $dx=qt^{q-1}%
\ dt$, notice that $\mathbb{E}\left[  X^{q}\right]  =\int_{0}^{\infty
}\mathbb{P}\left(  X^{q}\geq x\right)  \ dx=\int_{0}^{\infty}\mathbb{P}\left(
X\geq t\right)  \ qt^{q-1}\ dt$. Finally, letting $X=\left\vert \left\Vert
\mathbf{\tilde{X}}_{N}\right\Vert -\left(  \sqrt{M}+\sqrt{N}\right)
\right\vert $, we get from (\ref{conmes})%
\[
\mathbb{E}\left[  \left\vert \left\Vert \mathbf{\tilde{X}}_{N}\right\Vert
-\left(  \sqrt{M}+\sqrt{N}\right)  \right\vert ^{q}\right]  \leq2q\int
_{0}^{\infty}e^{-\frac{1}{2}t^{2}}t^{q-1}\ dt=2^{q/2}q\Gamma\left(  \frac
{q}{2}\right)  \leq q^{q/2+1}\text{,}%
\]
where $\Gamma\left(  x\right)  $ is the Gamma function, and we conclude that
\[
\mathbb{E}\left[  \left\Vert \frac{\mathbf{\tilde{X}}_{N}}{\sqrt{N}%
}\right\Vert ^{q}\right]  =K_{q}+\mathcal{O}\left(  N^{-1/2}\right)  \text{.}%
\]%
%TCIMACRO{\TeXButton{%
%\end{proof}%
%}{\end{proof}}}%
%BeginExpansion
\end{proof}%
%EndExpansion
Indeed, if we let $\mathbf{X}_{N}=\frac{1}{\sqrt{2}}\mathbf{\tilde{X}}%
_{N}^{\left(  re\right)  }+\operatorname*{i}\frac{1}{\sqrt{2}}\mathbf{\tilde
{X}}_{N}^{\left(  im\right)  }$, where the matrices $\mathbf{\tilde{X}}%
_{N}^{\left(  re\right)  }$ and $\mathbf{\tilde{X}}_{N}^{\left(  im\right)  }$
are independently defined as the matrix $\mathbf{\tilde{X}}_{N}$, then,
applying Jensen's inequality along with Lemma \ref{momsn}, we get%
\[
\mathbb{E}\left[  \left\Vert \frac{\mathbf{X}_{N}}{\sqrt{N}}\right\Vert
^{q}\right]  \leq2^{r/2-1}\left(  \mathbb{E}\left[  \left\Vert \frac
{\mathbf{\tilde{X}}_{N}^{\left(  re\right)  }}{\sqrt{N}}\right\Vert
^{q}\right]  +\mathbb{E}\left[  \left\Vert \frac{\mathbf{\tilde{X}}%
_{N}^{\left(  im\right)  }}{\sqrt{N}}\right\Vert ^{q}\right]  \right)
<K_{q}\text{,}%
\]
and (\ref{eq_bounded_moments}) follows finally by taking $q=2p$.

We now introduce two further tools; with some abuse of notation, let
$\Gamma=\Gamma\left(  \mathbf{X}_{N},\mathbf{X}{}_{N}^{\ast}\right)  $ be a
$\mathcal{C}^{1}$ complex function such that both itself and its derivatives
are polynomically bounded. Following the approach in \cite{HKLNP08}, in our
proof of the CLT we will make intensive use of the \textit{Nash-Poincar\'{e}
inequality}, i.e.,%
\begin{equation}
\operatorname*{var}\left(  \Gamma\left(  \mathbf{X}_{N},\mathbf{X}{}_{N}%
^{\ast}\right)  \right)  \leq\sum_{i=1}^{M}\sum_{j=1}^{N}\mathbb{E}\left[
\left\vert \frac{\partial\Gamma\left(  \mathbf{X}_{N},\mathbf{X}{}_{N}^{\ast
}\right)  }{\partial X_{ij}}\right\vert ^{2}+\left\vert \frac{\partial
\Gamma\left(  \mathbf{X}_{N},\mathbf{X}{}_{N}^{\ast}\right)  }{\partial
\overline{X_{ij}}}\right\vert ^{2}\right]  \text{,}\label{eq_Nash_Poincare}%
\end{equation}
where the upper bar denotes complex conjugation, as well as the
\textit{integration by parts formula for Gaussian functionals}, namely%
\begin{equation}
\mathbb{E}\left[  X_{ij}\Gamma\left(  \mathbf{X}_{N},\mathbf{X}{}_{N}^{\ast
}\right)  \right]  =\mathbb{E}\left[  \frac{\partial\Gamma\left(
\mathbf{X}_{N},\mathbf{X}{}_{N}^{\ast}\right)  }{\partial\overline{X_{ij}}%
}\right]  \text{.}\label{eq_integration_by_parts}%
\end{equation}

\subsection{Variance controls and estimates of expected values}

Let us define the random variables%
\begin{equation}
\Phi_{M}^{\left(  k\right)  }=\Phi_{M}^{\left(  k\right)  }\left(
\mathbf{X}_{N}\right)  =\operatorname*{tr}\left[  \mathbf{\Theta}%
_{M}\mathbf{Q}_{M}^{k}\right]  ,\ \Psi_{M}^{\left(  k\right)  }=\Psi
_{M}^{\left(  k\right)  }\left(  \mathbf{X}_{N}\right)  =\operatorname*{tr}%
\left[  \mathbf{\Theta}_{M}\mathbf{Q}_{M}^{k}\frac{\mathbf{X}_{N}%
\mathbf{\tilde{Z}}_{N}\mathbf{X}_{N}^{\ast}}{N}\right] \label{RVdef}%
\end{equation}
where $k$ is a finite positive integer. The proof of the following variance
estimates essentially rely on the Nash-Poincar\'{e} inequality in
(\ref{eq_Nash_Poincare}).%

%TCIMACRO{\TeXButton{%
%\begin{lemma}%
%}{\begin{lemma}}}%
%BeginExpansion
\begin{lemma}%
%EndExpansion%
%TCIMACRO{\TeXButton{\label{lemmaVC_Phi_Psi}}{\label{lemmaVC_Phi_Psi}}}%
%BeginExpansion
\label{lemmaVC_Phi_Psi}%
%EndExpansion
With all above definitions, the following variance controls hold:%
\[
\operatorname*{var}\left(  \Phi_{M}^{\left(  k\right)  }\left(  \mathbf{X}%
_{N}\right)  \right)  =\mathcal{O}\left(  \frac{\left\Vert \mathbf{\Theta
}\right\Vert _{F}^{2}}{N}\right)  \text{,}%
\]
and%
\[
\operatorname*{var}\left(  \Psi_{M}^{\left(  k\right)  }\left(  \mathbf{X}%
_{N}\right)  \right)  =\mathcal{O}\left(  \frac{\left\Vert \mathbf{\Theta
}\right\Vert _{F}^{2}}{N}\right)  \text{.}%
\]%
%TCIMACRO{\TeXButton{%
%\end{lemma}%
%}{\end{lemma}}}%
%BeginExpansion
\end{lemma}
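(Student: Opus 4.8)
The whole argument runs through the Nash--Poincar\'e inequality (\ref{eq_Nash_Poincare}), so it reduces to computing the Wirtinger derivatives $\partial/\partial X_{ij}$ and $\partial/\partial\overline{X_{ij}}$ of $\Phi_M^{(k)}$ and $\Psi_M^{(k)}$ and controlling the resulting sums of squares. Both random variables are $\mathcal{C}^1$ functions of the entries of $\mathbf{X}_N$, since $\mathbf{Q}_M$ is well defined for every realization; moreover $\mathbf{Q}_M^{-1}\succeq\alpha\mathbf{R}_M^{-1}$ yields the \emph{deterministic} bound $\|\mathbf{Q}_M\|\le\|\mathbf{R}\|_{\sup}/\alpha$, from which $\Phi_M^{(k)}$, $\Psi_M^{(k)}$ and their derivatives are seen to be polynomially bounded in $\|\mathbf{X}_N\|$, so (\ref{eq_Nash_Poincare}) applies. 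The elementary ingredients are that, treating $X_{ij}$ and $\overline{X_{ij}}$ as independent, $\partial(\tfrac1N\mathbf{X}_N\mathbf{T}_N\mathbf{X}_N^\ast)/\partial X_{ij}=\tfrac1N\mathbf{e}_i\mathbf{e}_j^T\mathbf{T}_N\mathbf{X}_N^\ast$ (with $\mathbf{e}_i,\mathbf{e}_j$ the canonical basis vectors of dimensions $M$ and $N$), hence $\partial\mathbf{Q}_M/\partial X_{ij}=-\tfrac1N\mathbf{Q}_M\mathbf{e}_i\mathbf{e}_j^T\mathbf{T}_N\mathbf{X}_N^\ast\mathbf{Q}_M$, and $\partial\mathbf{Q}_M^k/\partial X_{ij}=-\tfrac1N\sum_{m=1}^k\mathbf{Q}_M^{m}\mathbf{e}_i\mathbf{e}_j^T\mathbf{T}_N\mathbf{X}_N^\ast\mathbf{Q}_M^{k-m+1}$ by the product rule.

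For $\Phi_M^{(k)}$ one obtains, using cyclicity of the trace, $\partial\Phi_M^{(k)}/\partial X_{ij}=-\tfrac1N\sum_{m=1}^k[\mathbf{T}_N\mathbf{X}_N^\ast\mathbf{Q}_M^{k-m+1}\mathbf{\Theta}_M\mathbf{Q}_M^{m}]_{ji}$. Squaring, summing over $i,j$, using the identity $\sum_{i,j}|[\mathbf{M}]_{ji}|^2=\|\mathbf{M}\|_F^2$ and a Cauchy--Schwarz bound on the $m$-sum, one gets $\sum_{i,j}|\partial\Phi_M^{(k)}/\partial X_{ij}|^2\le\tfrac{k}{N^2}\sum_{m=1}^k\|\mathbf{T}_N\mathbf{X}_N^\ast\mathbf{Q}_M^{k-m+1}\mathbf{\Theta}_M\mathbf{Q}_M^{m}\|_F^2$. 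I would then peel off all deterministic factors with submultiplicativity of $\|\cdot\|$ together with $\|\mathbf{A}\mathbf{B}\mathbf{C}\|_F\le\|\mathbf{A}\|\,\|\mathbf{B}\|_F\,\|\mathbf{C}\|$, using $\|\mathbf{T}_N\|\le\|\mathbf{T}\|_{\sup}$ and the deterministic bound on $\|\mathbf{Q}_M\|$, to reach $\le K'\,\|\mathbf{\Theta}_M\|_F^2\,\|\mathbf{X}_N\|^2/N^2$ for a constant $K'$ depending only on $k$, $\alpha$, $\|\mathbf{R}\|_{\sup}$ and $\|\mathbf{T}\|_{\sup}$. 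Taking expectations and invoking Lemma \ref{momsn} (together with the extension to complex Gaussian matrices noted just below it), which gives $\mathbb{E}[\|\mathbf{X}_N\|^2]=\mathcal{O}(N)$, yields the claimed $\mathcal{O}(\|\mathbf{\Theta}\|_F^2/N)$; the conjugate derivatives $\partial/\partial\overline{X_{ij}}$ are handled symmetrically.

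For $\Psi_M^{(k)}$ the only new point is that the trailing factor $\tfrac1N\mathbf{X}_N\mathbf{\tilde{Z}}_N\mathbf{X}_N^\ast$ also depends on $\mathbf{X}_N$, so $\partial\Psi_M^{(k)}/\partial X_{ij}$ splits into two groups of terms: those from differentiating $\mathbf{Q}_M^k$, which are exactly the $\Phi$-type terms but carry an extra inner factor $\tfrac1N\mathbf{X}_N\mathbf{\tilde{Z}}_N\mathbf{X}_N^\ast$, and the single term from differentiating the trailing factor, equal to $\tfrac1N[\mathbf{\tilde{Z}}_N\mathbf{X}_N^\ast\mathbf{\Theta}_M\mathbf{Q}_M^k]_{ji}$. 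The latter is bounded exactly as in the $\Phi$ case, giving $\mathcal{O}(\|\mathbf{\Theta}\|_F^2/N)$. For the former I would keep the factor $\|\tfrac1N\mathbf{X}_N\mathbf{\tilde{Z}}_N\mathbf{X}_N^\ast\|^2$ and, using $\|\tfrac1N\mathbf{X}_N\mathbf{\tilde{Z}}_N\mathbf{X}_N^\ast\|\le\|\mathbf{\tilde{Z}}\|_{\sup}\,\|\mathbf{X}_N/\sqrt N\|^2$, reduce everything to the moment bound $\mathbb{E}[\|\mathbf{X}_N/\sqrt N\|^{q}]<K_q$ (Lemma \ref{momsn} and the remark following it, or equivalently (\ref{eq_bounded_moments})); this gives $\mathbb{E}[\|\mathbf{X}_N\|^2\,\|\tfrac1N\mathbf{X}_N\mathbf{\tilde{Z}}_N\mathbf{X}_N^\ast\|^2]=\mathcal{O}(N)$, hence again the required rate after the $1/N^2$ prefactor.

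The part deserving the most care is the bookkeeping: the product-rule expansions must be organized so that, once submultiplicativity has removed the deterministic matrices (the $\mathbf{Q}_M$'s, $\mathbf{T}_N$, $\mathbf{\tilde{Z}}_N$, $\mathbf{\Theta}_M$), every surviving random factor is a power of $\|\mathbf{X}_N/\sqrt N\|$ whose moments are uniformly bounded; then the $1/N^2$ prefactor coming from the two $1/N$'s in the derivatives is only partially absorbed (one factor of $N$ is eaten by $\mathbb{E}[\|\mathbf{X}_N\|^2]=\mathcal{O}(N)$), leaving precisely the $1/N$ of the statement rather than merely $\mathcal{O}(1)$. Note that positive definiteness of $\mathbf{\Theta}_M$ is never used --- only $\|\mathbf{\Theta}_M\|_F<\infty$ and the uniform spectral bounds of Assumption \textbf{(As2)} (through the deterministic control of $\|\mathbf{Q}_M\|$) enter the argument.
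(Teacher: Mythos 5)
Your proposal is correct and follows essentially the same route as the paper: Nash--Poincar\'e applied to the explicit Wirtinger derivatives of $\Phi_M^{(k)}$ and $\Psi_M^{(k)}$, with the deterministic bound $\|\mathbf{Q}_M\|\le\alpha^{-1}\|\mathbf{R}\|_{\sup}$, the inequality $\|\mathbf{A}\mathbf{B}\|_F\le\|\mathbf{A}\|\,\|\mathbf{B}\|_F$, and the uniform moment bounds on $\|\mathbf{X}_N/\sqrt N\|$ from Lemma \ref{momsn} and (\ref{eq_bounded_moments}) doing the work. The only cosmetic difference is that the paper keeps the random factor packaged as the normalized quadratic form $\frac{1}{N}\mathbf{X}\mathbf{T}^2\mathbf{X}^\ast$ and bounds its spectral-norm moments directly, whereas you peel off $\|\mathbf{X}_N\|^2$ and use $\mathbb{E}\|\mathbf{X}_N\|^2=\mathcal{O}(N)$; these are equivalent.
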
%
%EndExpansion%
%TCIMACRO{\TeXButton{%
%\begin{proof}%
%}{\begin{proof}}}%
%BeginExpansion
\begin{proof}%
%EndExpansion
See Appendix \ref{sectionAppVC}.%
%TCIMACRO{\TeXButton{%
%\end{proof}%
%}{\end{proof}}}%
%BeginExpansion
\end{proof}%
%EndExpansion

Also of particular use in our derivations will be the following approximation
rules, whose proof has been postponed to Appendix \ref{sectionAppEV}.%

%TCIMACRO{\TeXButton{%
%\begin{proposition}%
%}{\begin{proposition}}}%
%BeginExpansion
\begin{proposition}%
%EndExpansion%
%TCIMACRO{\TeXButton{\label{theoremEV1}}{\label{theoremEV1}}}%
%BeginExpansion
\label{theoremEV1}%
%EndExpansion
With all above definitions, the following expectations hold, namely%
\[
\mathbb{E}\left[  \Phi_{M}^{\left(  1\right)  }\left(  \mathbf{X}_{N}\right)
\right]  =\operatorname*{tr}\left[  \mathbf{\Theta E}\right]  +\mathcal{O}%
\left(  \frac{\left\Vert \mathbf{\Theta}\right\Vert _{F}}{N^{3/2}}\right)
\]
and%
\[
\mathbb{E}\left[  \Psi_{M}^{\left(  1\right)  }\left(  \mathbf{X}_{N}\right)
\right]  =\frac{1}{N}\operatorname*{tr}\left[  \mathbf{\tilde{Z}}\left(
\mathbf{I}_{N}+\delta_{M}\mathbf{T}\right)  ^{-1}\right]  \operatorname*{tr}%
\left[  \mathbf{\Theta E}\right]  +\mathcal{O}\left(  \frac{\left\Vert
\mathbf{\Theta}\right\Vert _{F}}{N^{3/2}}\right)  \text{.}%
\]%
%TCIMACRO{\TeXButton{%
%\end{proposition}%
%}{\end{proposition}}}%
%BeginExpansion
\end{proposition}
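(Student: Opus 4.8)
The plan is to compute the two expectations by means of the integration by parts formula \eqref{eq_integration_by_parts}, starting from the resolvent identity for $\mathbf{Q}_M$ and bootstrapping via the variance controls of Lemma \ref{lemmaVC_Phi_Psi}. Write $\mathbf{Q} = \mathbf{Q}_M = (\tfrac{1}{N}\mathbf{X}_N\mathbf{T}_N\mathbf{X}_N^\ast + \alpha\mathbf{R}_M^{-1})^{-1}$ and observe that $\mathbf{Q}(\tfrac1N\mathbf{X}_N\mathbf{T}_N\mathbf{X}_N^\ast) = \mathbf{I}_M - \alpha\mathbf{Q}\mathbf{R}_M^{-1}$. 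Taking the trace against $\mathbf{\Theta}_M$ gives $\operatorname{tr}[\mathbf{\Theta}_M\mathbf{Q}] = \tfrac1\alpha\operatorname{tr}[\mathbf{R}_M\mathbf{\Theta}_M] - \tfrac1\alpha\operatorname{tr}[\mathbf{R}_M\mathbf{\Theta}_M\mathbf{Q}\tfrac1N\mathbf{X}_N\mathbf{T}_N\mathbf{X}_N^\ast \cdot \text{(something)}]$; more usefully, I would expand $\mathbb{E}[\Psi_M^{(1)}] = \tfrac1N\sum_{i,j}\mathbb{E}[(\mathbf{Q}\mathbf{X}_N\mathbf{\tilde Z}_N)_{ij}\overline{X_{ij}}\cdot(\mathbf{\Theta})\text{-weight}]$ and apply \eqref{eq_integration_by_parts} to each term $\mathbb{E}[X_{ij}\Gamma]$, where $\Gamma$ collects the remaining factors. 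The derivative $\partial\mathbf{Q}/\partial\overline{X_{ij}}$ is computed from $\partial\mathbf{Q} = -\mathbf{Q}(\partial(\tfrac1N\mathbf{X}_N\mathbf{T}_N\mathbf{X}_N^\ast))\mathbf{Q}$, which produces two families of terms: one that reproduces $\mathbb{E}[\Phi_M^{(1)}]$ times a scalar trace, and one that closes the recursion.

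Concretely, the key steps in order are: (i) apply integration by parts to $\mathbb{E}[\Psi_M^{(1)}]$, generating a term proportional to $\mathbb{E}[\tfrac1N\operatorname{tr}[\mathbf{\tilde Z}_N\mathbf{X}_N^\ast\mathbf{Q}\mathbf{X}_N/N]]\cdot\mathbb{E}[\Phi_M^{(1)}]$ plus lower-order remainders; (ii) recognize the scalar $\tfrac1N\operatorname{tr}[\mathbf{\tilde Z}_N\mathbf{X}_N^\ast\mathbf{Q}\mathbf{X}_N/N]$ as a quantity that concentrates, by Lemma \ref{lemmaVC_Phi_Psi}, around its mean, which in turn (taking $\mathbf{\tilde Z}_N=\mathbf{T}_N$ and $\mathbf{\Theta}_M = \mathbf{R}_M^{-1}$, say) satisfies a fixed-point relation whose solution is exactly $\delta_M$ through \eqref{systemEq}; (iii) substitute the deterministic equivalent $\mathbb{E}[\mathbf{Q}] \approx \mathbf{E}_M = \mathbf{R}_M(\tilde\delta_M\mathbf{R}_M+\alpha\mathbf{I}_M)^{-1}$, which is established by the same integration-by-parts argument applied to $\Phi_M^{(1)}$ itself and then solving the resulting linear system for $\operatorname{tr}[\mathbf{\Theta}_M\mathbb{E}[\mathbf{Q}]]$; (iv) collect the error terms. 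The $\mathcal{O}(N^{-3/2}\|\mathbf{\Theta}\|_F)$ rate comes from the fact that each replacement of a random trace by its mean costs a factor controlled by $\sqrt{\operatorname{var}} = \mathcal{O}(\|\mathbf{\Theta}\|_F/\sqrt N)$ from Lemma \ref{lemmaVC_Phi_Psi}, and the structure of the recursion lets one gain an extra $1/\sqrt N$ (cf.\ the $N^{-1}$ improvement over the naive $N^{-1/2}$ typical in these arguments) because the leading fluctuation term has zero mean and so only contributes at second order; this cancellation, together with bounding the polynomially-growing residual factors using Lemma \ref{momsn} and \eqref{eq_bounded_moments}, is what must be tracked carefully.

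The main obstacle is step (iv): showing that, after integration by parts, the collection of remainder terms is genuinely $\mathcal{O}(N^{-3/2}\|\mathbf{\Theta}\|_F)$ rather than merely $\mathcal{O}(N^{-1}\|\mathbf{\Theta}\|_F)$. This requires decomposing each random scalar trace as $\text{mean} + \text{(centered fluctuation)}$, noting that the fluctuation times the remaining resolvent factor has, by a further integration by parts and the Nash–Poincaré bound, expectation of order $1/N$ smaller than the crude Cauchy–Schwarz estimate, and verifying uniform integrability of the resolvent powers so that spectral-norm moment bounds \eqref{eq_bounded_moments} and the a priori bounds on $\mathbf{E}_M,\mathbf{\tilde E}_N$ from Appendix \ref{Appendix1} can be invoked freely. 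The algebra is routine once this accounting scheme is fixed, and I would relegate the full bookkeeping to Appendix \ref{sectionAppEV} as the authors do, presenting in the main text only the fixed-point identification (ii)–(iii) that produces $\delta_M$, $\tilde\delta_M$ and hence $\mathbf{E}_M$.
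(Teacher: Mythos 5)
Your proposal follows essentially the same route as the paper's Appendix \ref{sectionAppEV}: integration by parts applied to the column expansion of $\mathbf{Q}\,\mathbf{X}\mathbf{\tilde{Z}}\mathbf{X}^{\ast}/N$, the resolvent identity to close the system, identification of $\frac{1}{N}\mathbb{E}\operatorname{tr}[\mathbf{Q}]$ with $\delta_{M}$ (hence $\mathbb{E}[\mathbf{Q}]\approx\mathbf{E}_{M}$) through the fixed-point equations (\ref{systemEq}) plus a stability estimate, and the Nash--Poincar\'e variance controls of Lemma \ref{lemmaVC_Phi_Psi} to bound the remainders. The only cosmetic difference is your account of the $N^{-3/2}$ rate: in the paper it falls out directly from the centered Cauchy--Schwarz bound (\ref{csEV}) applied to each residual covariance --- the normalized-trace fluctuation $\chi_{M}^{(q)}$ has standard deviation $\mathcal{O}(N^{-1})$ (Lemma \ref{lemmaVC_Phi_Psi} with $\mathbf{\Theta}=\frac{1}{N}\mathbf{I}_{M}$, so $\left\Vert \mathbf{\Theta}\right\Vert _{F}=\mathcal{O}(N^{-1/2})$) while the companion factor has standard deviation $\mathcal{O}(\left\Vert \mathbf{\Theta}\right\Vert _{F}N^{-1/2})$ --- so no ``further integration by parts'' is needed at that stage, only the zero-mean centering you correctly identify.
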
%
%EndExpansion
%

%TCIMACRO{\TeXButton{%
%\begin{proposition}%
%}{\begin{proposition}}}%
%BeginExpansion
\begin{proposition}%
%EndExpansion%
%TCIMACRO{\TeXButton{\label{theoremEV2}}{\label{theoremEV2}}}%
%BeginExpansion
\label{theoremEV2}%
%EndExpansion
With all above definitions, the following expectations hold, namely%
\[
\mathbb{E}\left[  \Phi_{M}^{\left(  2\right)  }\left(  \mathbf{X}_{N}\right)
\right]  =\frac{1}{1-\gamma\tilde{\gamma}}\operatorname*{tr}\left[
\mathbf{\Theta E}^{2}\right]  +\mathcal{O}\left(  \frac{\left\Vert
\mathbf{\Theta}\right\Vert _{F}}{N^{3/2}}\right)
\]
and%
\begin{align*}
\mathbb{E}\left[  \Psi_{M}^{\left(  2\right)  }\left(  \mathbf{X}_{N}\right)
\right]   &  =\frac{1}{N}\operatorname*{tr}\left[  \mathbf{\tilde{Z}}\left(
\mathbf{I}_{N}+\delta_{M}\mathbf{T}\right)  ^{-1}\right]  \frac{1}%
{1-\gamma\tilde{\gamma}}\operatorname*{tr}\left[  \mathbf{\Theta E}^{2}\right]
\\
&  -\frac{1}{N}\operatorname*{tr}\left[  \mathbf{\tilde{E}\tilde{Z}}\left(
\mathbf{I}_{N}+\delta_{M}\mathbf{T}\right)  ^{-1}\right]  \frac{\gamma
}{1-\gamma\tilde{\gamma}}\operatorname*{tr}\left[  \mathbf{\Theta E}\right]
+\mathcal{O}\left(  \frac{\left\Vert \mathbf{\Theta}\right\Vert _{F}}{N^{3/2}%
}\right)  \text{.}%
\end{align*}%
%TCIMACRO{\TeXButton{%
%\end{proposition}%
%}{\end{proposition}}}%
%BeginExpansion
\end{proposition}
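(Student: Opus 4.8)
The plan is to use the integration by parts formula (\ref{eq_integration_by_parts}) to derive, up to errors of order $\mathcal{O}(\|\mathbf{\Theta}\|_F N^{-3/2})$, a closed linear system for the pair $\big(\mathbb{E}[\Phi_M^{(2)}],\mathbb{E}[\Psi_M^{(2)}]\big)$ whose inhomogeneous terms are the already-established quantities $\mathbb{E}[\Phi_M^{(1)}]$, $\mathbb{E}[\Psi_M^{(1)}]$ of Proposition \ref{theoremEV1}, and then to invert it. The first thing to record is that under \textbf{(As2)} the matrix $\mathbf{Q}_M$ has \emph{deterministically} bounded spectral norm, since $\mathbf{Q}_M^{-1}\succeq\alpha\mathbf{R}_M^{-1}\succeq(\alpha/\|\mathbf{R}\|_{\sup})\mathbf{I}_M$ gives $\|\mathbf{Q}_M\|\le\|\mathbf{R}\|_{\sup}/\alpha$; combined with Lemma \ref{momsn} and (\ref{eq_bounded_moments}) this supplies all the polynomial-moment bounds needed below, and it is precisely the place where $\alpha>0$ enters in an essential way.

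I would start from the resolvent identity $\alpha\mathbf{R}_M^{-1}\mathbf{Q}_M=\mathbf{I}_M-\tfrac1N\mathbf{X}_N\mathbf{T}_N\mathbf{X}_N^{\ast}\mathbf{Q}_M$, which lets one trade a factor of $\mathbf{Q}_M$ inside $\Phi_M^{(2)}=\operatorname*{tr}[\mathbf{\Theta}\mathbf{Q}_M\mathbf{Q}_M]$ (and inside $\Psi_M^{(2)}$) for a term containing the bilinear form $\tfrac1N\mathbf{X}_N\mathbf{T}_N\mathbf{X}_N^{\ast}$. This reduces $\mathbb{E}[\Phi_M^{(2)}]$ to a combination of $\mathbb{E}[\Phi_M^{(1)}]$ and expectations of the form $\tfrac1N\sum_{i,j}\mathbb{E}[X_{ij}\,G_{ij}]$, with $G_{ij}$ an entry of a product built from $\mathbf{Q}_M$, $\mathbf{\Theta}$, $\mathbf{T}_N$ and $\mathbf{X}_N^{\ast}$. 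Applying (\ref{eq_integration_by_parts}), i.e.\ replacing $X_{ij}G_{ij}$ by $\partial G_{ij}/\partial\overline{X_{ij}}$, and using that the derivative of $\mathbf{Q}_M$ with respect to $\overline{X_{ij}}$ (resp.\ $X_{ij}$) is a rank-one matrix built from $\mathbf{Q}_M$, $\mathbf{X}_N$ and $\mathbf{T}_N$, the sum over $(i,j)$ splits into two kinds of contributions: \emph{factorizing} terms, where the derivative hits the copy of $\mathbf{Q}_M$ carrying the index $i$ and produces a product of a trace over the $M$-dimensional space and a trace over the $N$-dimensional space, and \emph{recursive} terms, which reproduce $\mathbb{E}[\Phi_M^{(2)}]$ and $\mathbb{E}[\Psi_M^{(2)}]$ themselves. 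The same computation for $\Psi_M^{(2)}=\operatorname*{tr}[\mathbf{\Theta}\mathbf{Q}_M\mathbf{Q}_M\tfrac1N\mathbf{X}_N\mathbf{\tilde{Z}}\mathbf{X}_N^{\ast}]$ has one extra feature: the derivative can also fall on the explicit $\mathbf{X}_N$'s of the factor $\tfrac1N\mathbf{X}_N\mathbf{\tilde{Z}}\mathbf{X}_N^{\ast}$, and it is this contribution that will generate the additional term $-\tfrac1N\operatorname*{tr}[\mathbf{\tilde{E}}\mathbf{\tilde{Z}}(\mathbf{I}_N+\delta_M\mathbf{T})^{-1}]\,\tfrac{\gamma}{1-\gamma\tilde{\gamma}}\operatorname*{tr}[\mathbf{\Theta}\mathbf{E}]$.

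Next, every random trace occurring as a multiplicative factor (of the normalized type $\tfrac1N\operatorname*{tr}[\mathbf{A}\mathbf{Q}_M^{k}]$ or $\tfrac1{N^2}\operatorname*{tr}[\mathbf{A}\mathbf{Q}_M^{k}\mathbf{X}_N\mathbf{\tilde{Z}}\mathbf{X}_N^{\ast}]$, with $\mathbf{A}$ bounded and deterministic) is replaced by its expectation; the error of each such replacement is a covariance between two centered quantities, bounded by Cauchy--Schwarz by the product of the standard deviation of the normalized-trace factor --- which is $\mathcal{O}(N^{-1})$ by Lemma \ref{lemmaVC_Phi_Psi} applied with $\mathbf{\Theta}\leftarrow\tfrac1N\mathbf{A}$, whence $\|\tfrac1N\mathbf{A}\|_F=\mathcal{O}(N^{-1/2})$ --- and the standard deviation $\mathcal{O}(\|\mathbf{\Theta}\|_F N^{-1/2})$ of the remaining factor, i.e.\ $\mathcal{O}(\|\mathbf{\Theta}\|_F N^{-3/2})$ in total. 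The surviving expectations are then evaluated to the required precision via Proposition \ref{theoremEV1} and the defining equations (\ref{systemEq}) for $\{\delta_M,\tilde{\delta}_M\}$, which turn the traces into $\operatorname*{tr}[\mathbf{\Theta}\mathbf{E}]$, $\operatorname*{tr}[\mathbf{\Theta}\mathbf{E}^2]$, $\gamma$, $\tilde{\gamma}$ and $\tfrac1N\operatorname*{tr}[\mathbf{\tilde{Z}}(\mathbf{I}_N+\delta_M\mathbf{T})^{-1}]$. What remains is, schematically, $\mathbb{E}[\Phi_M^{(2)}]=\operatorname*{tr}[\mathbf{\Theta}\mathbf{E}^2]+\gamma\tilde{\gamma}\,\mathbb{E}[\Phi_M^{(2)}]+\mathcal{O}(\|\mathbf{\Theta}\|_F N^{-3/2})$, with solution $\mathbb{E}[\Phi_M^{(2)}]=(1-\gamma\tilde{\gamma})^{-1}\operatorname*{tr}[\mathbf{\Theta}\mathbf{E}^2]+\mathcal{O}(\|\mathbf{\Theta}\|_F N^{-3/2})$; here one uses that $1-\gamma\tilde{\gamma}$ is bounded away from zero under \textbf{(As2)}--\textbf{(As3)} (a point also needed, and verified, in Section \ref{sectionCLT}), so that the inversion --- equivalently, summing the geometric series $1+\gamma\tilde{\gamma}+(\gamma\tilde{\gamma})^2+\cdots$ --- is legitimate. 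Feeding this back into the corresponding relation for $\Psi_M^{(2)}$, together with $\mathbb{E}[\Psi_M^{(1)}]$ from Proposition \ref{theoremEV1} and the extra factorizing term identified above, yields the stated formula for $\mathbb{E}[\Psi_M^{(2)}]$.

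The main obstacle is the bookkeeping of the error order. The statement demands $\mathcal{O}(\|\mathbf{\Theta}\|_F N^{-3/2})$, not merely $\mathcal{O}(\|\mathbf{\Theta}\|_F N^{-1/2})$, so one cannot be cavalier in replacing a random trace by its mean: each such step must be shown to cost only $N^{-3/2}$, which forces one to carry the $\|\mathbf{\Theta}\|_F$-dependence carefully through Lemma \ref{lemmaVC_Phi_Psi}, to invoke the sharp $N^{-3/2}$-precision of Proposition \ref{theoremEV1} rather than a cruder first-order estimate, and to check that the finitely many recursive terms genuinely close back onto $\Phi_M^{(2)}$ and $\Psi_M^{(2)}$ with contraction factor $\gamma\tilde{\gamma}<1$ rather than opening an ever-growing hierarchy. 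Once this accounting is in place, the rest is a finite, if lengthy, manipulation of resolvent identities and index sums.
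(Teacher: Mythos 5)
Your proposal follows essentially the same route as the paper's Appendix on expected-value estimates: the resolvent identity to expose the bilinear form $\tfrac1N\mathbf{X}\mathbf{T}\mathbf{X}^{\ast}$, the integration by parts formula to generate factorizing and recursive terms, decoupling of random traces from their means via Cauchy--Schwarz combined with the Nash--Poincar\'e variance controls of Lemma \ref{lemmaVC_Phi_Psi} (with the $\|\mathbf{\Theta}\|_{F}$ bookkeeping giving the $N^{-3/2}$ rate), and closure of the recursion through the factor $(1-\gamma\tilde{\gamma})^{-1}$ using Proposition \ref{theoremEV1} as input. The only cosmetic difference is that the paper carries out the computation entrywise in $(i,j)$ before tracing and states the recursions (\ref{Atk})--(\ref{Ctk}) for general $k$, whereas you work directly at the level of traces for $k=2$; the substance is identical.
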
%
%EndExpansion
%

%TCIMACRO{\TeXButton{%
%\begin{proposition}%
%}{\begin{proposition}}}%
%BeginExpansion
\begin{proposition}%
%EndExpansion%
%TCIMACRO{\TeXButton{\label{theoremEV3}}{\label{theoremEV3}}}%
%BeginExpansion
\label{theoremEV3}%
%EndExpansion
With all above definitions, the following expectations hold, namely%
\begin{multline*}
\mathbb{E}\left[  \Phi_{M}^{\left(  3\right)  }\left(  \mathbf{X}_{N}\right)
\right]  =\frac{1}{\left(  1-\gamma\tilde{\gamma}\right)  ^{3}}\left(
\tilde{\gamma}\frac{1}{N}\operatorname*{tr}\left[  \mathbf{E}^{3}\right]
-\gamma^{2}\frac{1}{N}\operatorname*{tr}\left[  \mathbf{\tilde{E}}^{3}\right]
\right)  \operatorname*{tr}\left[  \mathbf{\Theta E}^{2}\right]  +\\
+\frac{1}{\left(  1-\gamma\tilde{\gamma}\right)  ^{2}}\operatorname*{tr}%
\left[  \mathbf{\Theta E}^{3}\right]  +\mathcal{O}\left(  \frac{\left\Vert
\mathbf{\Theta}\right\Vert _{F}}{N^{3/2}}\right)
\end{multline*}
and%
\begin{multline*}
\mathbb{E}\left[  \Psi_{M}^{\left(  3\right)  }\left(  \mathbf{X}_{N}\right)
\right]  =\\
=\frac{\operatorname*{tr}\left[  \mathbf{\tilde{Z}}\left(  \mathbf{I}%
_{N}+\delta_{M}\mathbf{T}\right)  ^{-1}\right]  }{N\left(  1-\gamma
\tilde{\gamma}\right)  ^{2}}\left\{  \frac{\tilde{\gamma}\frac{1}%
{N}\operatorname*{tr}\left[  \mathbf{E}^{3}\right]  -\gamma^{2}\frac{1}%
{N}\operatorname*{tr}\left[  \mathbf{\tilde{E}}^{3}\right]  }{1-\gamma
\tilde{\gamma}}\operatorname*{tr}\left[  \mathbf{\Theta E}^{2}\right]
+\operatorname*{tr}\left[  \mathbf{\Theta E}^{3}\right]  \right\} \\
-\frac{\operatorname*{tr}\left[  \mathbf{\tilde{Z}\tilde{E}}\left(
\mathbf{I}_{N}+\delta_{M}\mathbf{T}\right)  ^{-1}\right]  }{N\left(
1-\gamma\tilde{\gamma}\right)  ^{2}}\left\{  \frac{\frac{1}{N}%
\operatorname*{tr}\left[  \mathbf{E}^{3}\right]  -\gamma^{3}\frac{1}%
{N}\operatorname*{tr}\left[  \mathbf{\tilde{E}}^{3}\right]  }{1-\gamma
\tilde{\gamma}}\operatorname*{tr}\left[  \mathbf{\Theta E}\right]
+\gamma\operatorname*{tr}\left[  \mathbf{\Theta E}^{2}\right]  \right\} \\
+\left(  \frac{\gamma}{1-\gamma\tilde{\gamma}}\right)  ^{2}\frac{1}%
{N}\operatorname*{tr}\left[  \mathbf{\tilde{Z}\tilde{E}}^{2}\left(
\mathbf{I}_{N}+\delta_{M}\mathbf{T}\right)  ^{-1}\right]  \operatorname*{tr}%
\left[  \mathbf{\Theta E}\right]  +\mathcal{O}\left(  \frac{\left\Vert
\mathbf{\Theta}\right\Vert _{F}}{N^{3/2}}\right)  \text{.}%
\end{multline*}%
%TCIMACRO{\TeXButton{%
%\end{proposition}%
%}{\end{proposition}}}%
%BeginExpansion
\end{proposition}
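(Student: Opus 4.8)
The plan is to prove Proposition~\ref{theoremEV3} by pushing the integration-by-parts recursion behind Propositions~\ref{theoremEV1} and~\ref{theoremEV2} one power further in $\mathbf{Q}_M$. First I would reduce $\mathbb{E}[\Phi_M^{(3)}]$ to a $\Psi$-type quantity: from $\mathbf{Q}_M^{-1}=N^{-1}\mathbf{X}_N\mathbf{T}_N\mathbf{X}_N^{\ast}+\alpha\mathbf{R}_M^{-1}$, written as $\alpha\mathbf{Q}_M=\mathbf{R}_M-N^{-1}\mathbf{Q}_M\mathbf{X}_N\mathbf{T}_N\mathbf{X}_N^{\ast}\mathbf{R}_M$, and substituting for the rightmost factor of $\mathbf{Q}_M^3=\mathbf{Q}_M^2\mathbf{Q}_M$,
\[
\alpha\,\mathbb{E}\!\left[\Phi_M^{(3)}\right]=\mathbb{E}\!\left[\operatorname{tr}\!\left[\mathbf{R}_M\mathbf{\Theta}_M\mathbf{Q}_M^{2}\right]\right]-\mathbb{E}\!\left[\operatorname{tr}\!\left[\mathbf{R}_M\mathbf{\Theta}_M\mathbf{Q}_M^{3}\,\tfrac{\mathbf{X}_N\mathbf{T}_N\mathbf{X}_N^{\ast}}{N}\right]\right],
\]
the first term being delivered by Proposition~\ref{theoremEV2} with weight $\mathbf{R}_M\mathbf{\Theta}_M$ (no longer positive definite, but this is exactly the relaxation announced before the Gaussian tools; its trace and Frobenius norms stay $\mathcal{O}(1)$ and $\mathcal{O}(\|\mathbf{\Theta}\|_F)$ by \textbf{(As2)}, and $\alpha$ is a fixed positive constant). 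Thus the substance is the estimate of $\mathbb{E}[\Psi_M^{(3)}]$ for a general weight and diagonal $\mathbf{\tilde{Z}}_N$.

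For that I would write $\Psi_M^{(3)}=\frac1N\sum_{p,m}[\mathbf{\Theta}_M\mathbf{Q}_M^3\mathbf{X}_N\mathbf{\tilde{Z}}_N]_{pm}\,\overline{X_{pm}}$, apply the conjugate of~(\ref{eq_integration_by_parts}), and use $\partial_{X_{pm}}\mathbf{Q}_M=-N^{-1}[\mathbf{T}_N]_{mm}\mathbf{Q}_M\mathbf{e}_p\mathbf{x}_m^{\ast}\mathbf{Q}_M$ ($\mathbf{e}_p$ the $p$-th canonical vector, $\mathbf{x}_m$ the $m$-th column of $\mathbf{X}_N$). Differentiation yields the \emph{direct} term $N^{-1}\operatorname{tr}[\mathbf{\tilde{Z}}_N]\,\mathbb{E}[\Phi_M^{(3)}]$ (derivative on the explicit $\mathbf{X}_N$), together with three \emph{resolvent} terms from $\partial(\mathbf{Q}_M^3)=(\partial\mathbf{Q}_M)\mathbf{Q}_M^2+\mathbf{Q}_M(\partial\mathbf{Q}_M)\mathbf{Q}_M+\mathbf{Q}_M^2(\partial\mathbf{Q}_M)$; since $\partial\mathbf{Q}_M$ is rank one, each of these factorizes, after summation over $(p,m)$, into $N^{-2}\operatorname{tr}[\mathbf{\Theta}_M\mathbf{Q}_M^{j}]\operatorname{tr}[\mathbf{T}_N\mathbf{\tilde{Z}}_N\mathbf{X}_N^{\ast}\mathbf{Q}_M^{4-j}\mathbf{X}_N]$, $j=1,2,3$. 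The $\mathbf{\Theta}_M$-free $N$-side traces are themselves estimated by a further round of the same Gaussian-tool recursion --- this is where the matrices $(\mathbf{I}_N+\delta_M\mathbf{T}_N)^{-1}$ and $\mathbf{\tilde{E}}_N$ enter, and where extra $\mathbf{Q}_M$-traces reappear, some chained into ``connected'' traces and some split off --- so that the ensuing products of several traces, only one of which carries $\mathbf{\Theta}_M$, produce the terms of the statement such as $\frac1N\operatorname{tr}[\mathbf{\tilde{Z}}_N(\mathbf{I}_N+\delta_M\mathbf{T}_N)^{-1}]\cdot\frac1N\operatorname{tr}[\mathbf{E}_M^3]\cdot\operatorname{tr}[\mathbf{\Theta}_M\mathbf{E}_M^2]$.

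The next step is the decoupling. In each such product I would write $\mathbb{E}[fg]=\mathbb{E}[f]\mathbb{E}[g]+\operatorname{cov}(f,g)$, $|\operatorname{cov}(f,g)|\le(\operatorname{var}(f)\operatorname{var}(g))^{1/2}$, bounding the variance of the $\mathbf{\Theta}_M$-weighted factor by Lemma~\ref{lemmaVC_Phi_Psi} and that of the large $N$-side factor by its analogue (all from the Nash--Poincar\'e inequality~(\ref{eq_Nash_Poincare}) together with~(\ref{eq_bounded_moments}) and Lemma~\ref{momsn}); then replace $\mathbb{E}[\Phi_M^{(1)}],\mathbb{E}[\Phi_M^{(2)}],\mathbb{E}[\Psi_M^{(1)}],\mathbb{E}[\Psi_M^{(2)}]$ and the $N$-side expectations by their closed forms from Propositions~\ref{theoremEV1}--\ref{theoremEV2}, and the remaining normalized traces of $\mathbf{Q}_M$ and $\frac1N\mathbf{X}_N^{\ast}\mathbf{Q}_M\mathbf{X}_N$ by the fixed-point quantities $\delta_M,\tilde{\delta}_M$, $\mathbf{E}_M=\mathbf{R}_M(\tilde{\delta}_M\mathbf{R}_M+\alpha\mathbf{I}_M)^{-1}$, $\mathbf{\tilde{E}}_N=\mathbf{T}_N(\mathbf{I}_N+\delta_M\mathbf{T}_N)^{-1}$ and $\gamma_M,\tilde{\gamma}_M$. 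What is left is, modulo $\mathcal{O}(\|\mathbf{\Theta}\|_F N^{-3/2})$, a relation expressing $\mathbb{E}[\Psi_M^{(3)}]$ (hence $\mathbb{E}[\Phi_M^{(3)}]$ through the display) linearly in itself and in known quantities, with coefficients that are bounded functions of $\delta_M,\tilde{\delta}_M,\gamma_M,\tilde{\gamma}_M$; solving it --- together with resumming the weight shift $\mathbf{\Theta}_M\mapsto\mathbf{R}_M\mathbf{\Theta}_M$ that the display introduces, a geometric series replacing $\mathbf{R}_M$ by $\alpha\mathbf{E}_M$ through $\mathbf{E}_M=\mathbf{R}_M(\tilde{\delta}_M\mathbf{R}_M+\alpha\mathbf{I}_M)^{-1}$ --- builds the power $\mathbf{E}_M^3$ and raises the power of $(1-\gamma\tilde{\gamma})^{-1}$ by one, giving the stated $(1-\gamma\tilde{\gamma})^{-2}$ and $(1-\gamma\tilde{\gamma})^{-3}$ forms, the rest being simplification via the canonical equations~(\ref{systemEq}); $1-\gamma_M\tilde{\gamma}_M$ is bounded away from $0$ under \textbf{(As2)}--\textbf{(As3)}.

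The main obstacle is not conceptual but lies in the error accounting. At the third power the integration by parts produces far more terms than at powers one and two, including the products of several traces, and in those products the $\mathbf{\Theta}_M$-free trace is large (of order $N^2$, offset by an explicit $N^{-2}$) while the $\mathbf{\Theta}_M$-carrying factor has magnitude only $\mathcal{O}(\|\mathbf{\Theta}\|_{\operatorname{tr},\sup})$; to keep every discarded covariance and every substitution of a lower-order deterministic equivalent of the announced order $\mathcal{O}(\|\mathbf{\Theta}\|_F N^{-3/2})$, one must, at each product, isolate the $\mathbf{\Theta}_M$-carrying factor --- whose expectation and fluctuation are controlled to $\mathcal{O}(\|\mathbf{\Theta}\|_F N^{-3/2})$ and $\mathcal{O}(\|\mathbf{\Theta}\|_F N^{-1/2})$ by Propositions~\ref{theoremEV1}--\ref{theoremEV2} and Lemma~\ref{lemmaVC_Phi_Psi} --- and let the large $\mathbf{\Theta}_M$-free factor enter only through boundedness and its own looser estimate, so that the errors never pick up the full trace norm of $\mathbf{\Theta}_M$. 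This double bookkeeping, the uniform bounds on $\|\mathbf{Q}_M\|$, $\|\mathbf{E}_M\|$, $\|\mathbf{\tilde{E}}_N\|$, $\|\mathbf{\tilde{Z}}_N\|$ (from the end of Appendix~\ref{Appendix1} and Lemma~\ref{momsn}), and the lengthy but routine resolution of the linear system, are the delicate part; no tools beyond those of Section~\ref{sectionGauss} and Propositions~\ref{theoremEV1}--\ref{theoremEV2} are needed, and the full details naturally belong to Appendix~\ref{sectionAppEV}.
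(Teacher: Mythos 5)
Your proposal is correct and follows essentially the same route as the paper: the appendix likewise combines the resolvent identity $\mathbf{Q}=\alpha^{-1}\mathbf{R}-\alpha^{-1}\mathbf{Q}\frac{1}{N}\mathbf{XTX}^{\ast}\mathbf{R}$ with the integration-by-parts formula to obtain self-consistent recursions for $\mathbb{E}[\Phi_M^{(k)}]$ and $\mathbb{E}[\Psi_M^{(k)}]$ (its estimates (\ref{Atk})--(\ref{Ctk})), decouples products of traces via Cauchy--Schwarz and the Nash--Poincar\'e variance controls of Lemma \ref{lemmaVC_Phi_Psi}, and then iterates from $k=1$ up, substituting the closed forms of Propositions \ref{theoremEV1}--\ref{theoremEV2}. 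The only cosmetic difference is that the paper carries the weight shift entry-wise and absorbs $\alpha^{-1}\mathbf{R}$ directly into $\mathbf{E}$ in (\ref{4k}) rather than via your $\mathbf{\Theta}\mapsto\mathbf{R}\mathbf{\Theta}$ reduction, and your error-accounting discussion matches the paper's $\mathcal{O}(\left\Vert\mathbf{\Theta}\right\Vert_{F}N^{-3/2})$ bookkeeping.
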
%
%EndExpansion
%

%TCIMACRO{\TeXButton{%
%\begin{proposition}%
%}{\begin{proposition}}}%
%BeginExpansion
\begin{proposition}%
%EndExpansion%
%TCIMACRO{\TeXButton{\label{theoremEV4}}{\label{theoremEV4}}}%
%BeginExpansion
\label{theoremEV4}%
%EndExpansion
With all above definitions, the following expectation holds, namely%
\begin{multline*}
\mathbb{E}\left[  \Phi_{M}^{\left(  4\right)  }\left(  \mathbf{X}_{N}\right)
\right]  =\frac{1}{\left(  1-\gamma\tilde{\gamma}\right)  ^{3}}%
\operatorname*{tr}\left[  \mathbf{\Theta E}^{4}\right]  +\\
+\frac{2\operatorname*{tr}\left[  \mathbf{\Theta E}^{3}\right]  }{\left(
1-\gamma\tilde{\gamma}\right)  ^{4}}\left\{  \tilde{\gamma}\frac{1}%
{N}\operatorname*{tr}\left[  \mathbf{E}^{3}\right]  -\gamma^{2}\frac{1}%
{N}\operatorname*{tr}\left[  \mathbf{\tilde{E}}^{3}\right]  \right\} \\
+\frac{\operatorname*{tr}\left[  \mathbf{\Theta E}^{2}\right]  }{\left(
1-\gamma\tilde{\gamma}\right)  ^{4}}\left\{  \gamma^{3}\frac{1}{N}%
\operatorname*{tr}\left[  \mathbf{\tilde{E}}^{4}\right]  +\tilde{\gamma}%
\frac{1}{N}\operatorname*{tr}\left[  \mathbf{E}^{4}\right]  \right\}
+\frac{2\operatorname*{tr}\left[  \mathbf{\Theta E}^{2}\right]  }{\left(
1-\gamma\tilde{\gamma}\right)  ^{5}}\times\\
\times\left\{  \gamma^{4}\left(  \frac{1}{N}\operatorname*{tr}\left[
\mathbf{\tilde{E}}^{3}\right]  \right)  ^{2}+\tilde{\gamma}^{2}\left(
\frac{1}{N}\operatorname*{tr}\left[  \mathbf{E}^{3}\right]  \right)
^{2}-\gamma\left(  1+\gamma\tilde{\gamma}\right)  \frac{1}{N}%
\operatorname*{tr}\left[  \mathbf{\tilde{E}}^{3}\right]  \frac{1}%
{N}\operatorname*{tr}\left[  \mathbf{E}^{3}\right]  \right\}  +\\
+\mathcal{O}\left(  \frac{\left\Vert \mathbf{\Theta}\right\Vert _{F}}{N^{3/2}%
}\right)  \text{.}%
\end{multline*}%
%TCIMACRO{\TeXButton{%
%\end{proposition}%
%}{\end{proposition}}}%
%BeginExpansion
\end{proposition}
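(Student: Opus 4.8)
The plan is to establish Proposition~\ref{theoremEV4} by carrying the integration-by-parts/Nash--Poincar\'e recursion that produced Propositions~\ref{theoremEV1}--\ref{theoremEV3} one power higher, to $k=4$. Write $\mathbf{B}_{M}=\frac{1}{N}\mathbf{X}_{N}\mathbf{T}_{N}\mathbf{X}_{N}^{\ast}$, so that $\mathbf{Q}_{M}^{-1}=\mathbf{B}_{M}+\alpha\mathbf{R}_{M}^{-1}$ and the resolvent identity gives $\mathbf{Q}_{M}=\frac{1}{\alpha}\mathbf{R}_{M}-\frac{1}{\alpha}\mathbf{R}_{M}\mathbf{B}_{M}\mathbf{Q}_{M}$, together with the companion relation $\mathbf{R}_{M}=\alpha\mathbf{Q}_{M}+\mathbf{Q}_{M}\mathbf{B}_{M}\mathbf{R}_{M}$. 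First I would substitute the resolvent identity for one factor of $\mathbf{Q}_{M}$ inside $\operatorname{tr}[\mathbf{\Theta}_{M}\mathbf{Q}_{M}^{4}]$, writing $\mathbb{E}[\Phi_{M}^{(4)}]$ as $\frac{1}{\alpha}\mathbb{E}[\operatorname{tr}[\mathbf{\Theta}_{M}\mathbf{R}_{M}\mathbf{Q}_{M}^{3}]]$ --- which is furnished by Proposition~\ref{theoremEV3} (the weight $\mathbf{\Theta}_{M}\mathbf{R}_{M}$ still being of the admissible class, positive definiteness being inessential as remarked in Section~\ref{sectionGauss}) --- minus a term of the form $\frac{1}{\alpha N}\mathbb{E}[\operatorname{tr}[\mathbf{M}_{M}\mathbf{X}_{N}\mathbf{T}_{N}\mathbf{X}_{N}^{\ast}]]$ with $\mathbf{M}_{M}$ a word in $\mathbf{Q}_{M}$, $\mathbf{\Theta}_{M}$ and $\mathbf{R}_{M}$. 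To that last term I would apply the Gaussian integration-by-parts formula~(\ref{eq_integration_by_parts}) entrywise in $X_{ij}$, using $\partial\mathbf{Q}_{M}/\partial X_{ij}=-\mathbf{Q}_{M}(\partial\mathbf{B}_{M}/\partial X_{ij})\mathbf{Q}_{M}$ and the diagonality of $\mathbf{T}_{N}$ (assumed without loss of generality in Section~\ref{sectionDefAss}).

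The integration by parts produces three kinds of contributions. First, a ``closing'' term in which $\mathbf{M}_{M}$ is traced against a temporal factor; combined with the leading part of the reconnection terms and with the $k=1$ estimate of Proposition~\ref{theoremEV1}, this feeds back a multiple of $\mathbb{E}[\Phi_{M}^{(4)}]$ itself --- the self-consistency mechanism responsible for the factor $(1-\gamma_{M}\tilde{\gamma}_{M})^{-1}$ in Proposition~\ref{theoremEV2}, the higher powers $(1-\gamma_{M}\tilde{\gamma}_{M})^{-3},(1-\gamma_{M}\tilde{\gamma}_{M})^{-4},(1-\gamma_{M}\tilde{\gamma}_{M})^{-5}$ in the statement accumulating from the lower-order inputs and the double reconnections. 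Second, subleading reconnection terms in which the derivative lands on an internal $\mathbf{Q}_{M}$ and which split into products of two traces such as $\frac{1}{N}\operatorname{tr}[\mathbf{\Theta}_{M}\mathbf{Q}_{M}^{j}]\cdot\frac{1}{N}\operatorname{tr}[\mathbf{T}_{N}\mathbf{X}_{N}^{\ast}\mathbf{Q}_{M}^{j'}\mathbf{X}_{N}]$; here I would decorrelate each product, the residual covariance being $\mathcal{O}(\Vert\mathbf{\Theta}\Vert_{F}N^{-3/2})$ by Cauchy--Schwarz and the variance controls of Lemma~\ref{lemmaVC_Phi_Psi} (the extra $N^{-1}$ supplied by the resolvent prefactor). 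Third, genuine remainders, bounded to the order $\mathcal{O}(\Vert\mathbf{\Theta}\Vert_{F}N^{-3/2})$ using the deterministic bound $\Vert\mathbf{Q}_{M}\Vert\le\Vert\mathbf{R}_{M}\Vert/\alpha$ together with Lemma~\ref{momsn} and the bounded-moment estimate~(\ref{eq_bounded_moments}). Collecting the $\mathbb{E}[\Phi_{M}^{(4)}]$ contributions, solving the resulting scalar linear equation, and then inserting the closed forms of $\mathbb{E}[\Phi_{M}^{(j)}]$ and $\mathbb{E}[\Psi_{M}^{(j)}]$ for $j\le 3$ from Propositions~\ref{theoremEV1}--\ref{theoremEV3}, I would reduce the outcome to the stated form using the fixed-point relations~(\ref{systemEq}) and the identities $\alpha\mathbf{R}_{M}^{-1}=\mathbf{E}_{M}^{-1}-\tilde{\delta}_{M}\mathbf{I}_{M}$ and $\mathbf{\tilde{E}}_{N}^{-1}=\mathbf{T}_{N}^{-1}+\delta_{M}\mathbf{I}_{N}$.

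The error analysis is routine given Lemmas~\ref{momsn} and~\ref{lemmaVC_Phi_Psi}; the main obstacle is the combinatorics. At $k=4$, unlike $k\le 3$, the reconnection terms spawn not only products of a single ``spatial'' and a single ``temporal'' trace but also \emph{squares} of cubic traces, and after substitution of the lower-order propositions these must be shown to collapse exactly to $\gamma^{3}\frac1N\operatorname{tr}[\mathbf{\tilde{E}}^{4}]+\tilde{\gamma}\frac1N\operatorname{tr}[\mathbf{E}^{4}]$ at order $(1-\gamma\tilde{\gamma})^{-4}$ and to $\gamma^{4}(\frac1N\operatorname{tr}[\mathbf{\tilde{E}}^{3}])^{2}+\tilde{\gamma}^{2}(\frac1N\operatorname{tr}[\mathbf{E}^{3}])^{2}-\gamma(1+\gamma\tilde{\gamma})\frac1N\operatorname{tr}[\mathbf{\tilde{E}}^{3}]\frac1N\operatorname{tr}[\mathbf{E}^{3}]$ at order $(1-\gamma\tilde{\gamma})^{-5}$. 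The asymmetry between the $M\times M$ side (governed by $\mathbf{Q}_{M}$, hence by $\mathbf{E}_{M}$) and the $N\times N$ side (governed by the co-resolvent $(\mathbf{I}_{N}+\delta_{M}\mathbf{T}_{N})^{-1}$ that surfaces in the $\Psi$-estimates of Propositions~\ref{theoremEV2}--\ref{theoremEV3}) is where the split between $\gamma$ and $\tilde{\gamma}$ and the cross term $\operatorname{tr}[\mathbf{E}^{3}]\operatorname{tr}[\mathbf{\tilde{E}}^{3}]$ originate, and keeping these coefficients exact through all the substitutions is the delicate part of the argument.
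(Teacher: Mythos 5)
Your proposal follows essentially the same route as the paper: the resolvent identity $\mathbf{Q}=\alpha^{-1}\mathbf{R}-\alpha^{-1}\mathbf{Q}\frac{1}{N}\mathbf{XTX}^{\ast}\mathbf{R}$ combined with Gaussian integration by parts to generate recursions of the type (\ref{Atk})--(\ref{Ctk}), decorrelation of trace products via Cauchy--Schwarz and the Nash--Poincar\'e variance controls of Lemma \ref{lemmaVC_Phi_Psi} to get the $\mathcal{O}(\Vert\mathbf{\Theta}\Vert_{F}N^{-3/2})$ remainders, and an iteration from $k=1$ up to $k=4$ feeding in Propositions \ref{theoremEV1}--\ref{theoremEV3} and the fixed-point system (\ref{systemEq}). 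The paper is no more explicit than you are about the final combinatorial collection of coefficients at $k=4$, so your plan matches both the method and the level of detail of the published argument.
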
%
%EndExpansion

\section{Elements of the proof of the asymptotic Gaussianity of the SNR%
%TCIMACRO{\TeXButton{\label{sectionCLT}}{\label{sectionCLT}}}%
%BeginExpansion
\label{sectionCLT}%
%EndExpansion
}

Let us consider the real-valued random variable $\xi_{M}=A_{M}\sqrt{N}\left(
a_{M}-\overline{a}_{M}\right)  +B_{M}\sqrt{N}\left(  b_{M}-\overline{b}%
_{M}\right)  $, where $a_{M},\overline{a}_{M},b_{M},\overline{b}_{M}$ are
defined in (\ref{eq_defab})-(\ref{eq_def_abbar}) and where $A_{M}$ and $B_{M}
$ are two real-valued nonrandom coefficients bounded above for all $M$ by
constants $A_{\sup}$ and $B_{\sup}$, respectively. In particular, notice that
if $\mathbf{\Theta}_{M}=\mathbf{u}_{M}\mathbf{u}_{M}^{\mathbf{\ast}}$ then we
have $\Phi_{M}^{\left(  1\right)  }=a_{M}$ and $\Phi_{M}^{\left(  2\right)
}=b_{M}$, and also $\bar{\Phi}_{M}^{\left(  1\right)  }=\bar{a}_{M}$ and
$\bar{\Phi}_{M}^{\left(  2\right)  }=\bar{b}_{M}$. We begin this section by
stating a theorem that establishes a CLT for the fluctuations of $\xi_{M} $,
and which will be instrumental in proving Theorem \ref{theoremCLTs} and
Theorem \ref{theoremCLTu}.%

%TCIMACRO{\TeXButton{%
%\begin{theorem}%
%}{\begin{theorem}}}%
%BeginExpansion
\begin{theorem}%
%EndExpansion%
%TCIMACRO{\TeXButton{\label{theoremBasicCLT}}{\label{theoremBasicCLT}}}%
%BeginExpansion
\label{theoremBasicCLT}%
%EndExpansion
Assume that $\left[  A_{M},B_{M}\right]  $ is a deterministic real-valued
vector whose norm is uniformly bounded above and below. Then, under
$(\mathbf{As1}-\mathbf{As3})$, the following CLT holds:%
\begin{equation}
\sqrt{N}\sigma_{\xi,M}^{-1}\left(  A_{M},B_{M}\right)  \left(  A_{M}\left(
a_{M}-\overline{a}_{M}\right)  +B_{M}\left(  b_{M}-\overline{b}_{M}\right)
\right)  \overset{\mathcal{L}}{\rightarrow}\mathcal{N}\left(  0,1\right)
\text{,}\label{xiCLT}%
\end{equation}
where $\sigma_{\xi,M}^{2}\left(  A_{M},B_{M}\right)  =\left[
\begin{array}
[c]{cc}%
A_{M} & B_{M}%
\end{array}
\right]  \mathbf{\Sigma}_{M}\left[
\begin{array}
[c]{cc}%
A_{M} & B_{M}%
\end{array}
\right]  ^{T}$, with $\mathbf{\Sigma}_{M}$ being a real-valued symmetric
positive definite matrix having entries $\left[  \mathbf{\Sigma}_{M}\right]
_{1,1}=\sigma_{M,a^{2}}$, $\left[  \mathbf{\Sigma}_{M}\right]  _{2,2}%
=\sigma_{M,b^{2}}$, and $\left[  \mathbf{\Sigma}_{M}\right]  _{1,2}=\left[
\mathbf{\Sigma}_{M}\right]  _{2,1}=\sigma_{M,ab}=\sigma_{M,ba}$, given by%
\begin{align}
\sigma_{M,a^{2}} &  =\frac{\tilde{\gamma}}{1-\gamma\tilde{\gamma}}\left(
\mathbf{u}_{M}^{\mathbf{\ast}}\mathbf{E}_{M}^{2}\mathbf{u}_{M}\right)
^{2}\text{,}\label{sigmaA2}\\
\sigma_{M,ab} &  =\sigma_{M,ba}=\frac{2\tilde{\gamma}}{\left(  1-\gamma
\tilde{\gamma}\right)  ^{2}}\mathbf{u}_{M}^{\mathbf{\ast}}\mathbf{E}_{M}%
^{2}\mathbf{u}_{M}\mathbf{u}_{M}^{\mathbf{\ast}}\mathbf{E}_{M}^{3}%
\mathbf{u}_{M}\nonumber\\
&  +\frac{\left(  \mathbf{u}_{M}^{\mathbf{\ast}}\mathbf{E}_{M}^{2}%
\mathbf{u}_{M}\right)  ^{2}}{\left(  1-\gamma\tilde{\gamma}\right)  ^{3}%
}\left\{  \tilde{\gamma}^{2}\frac{1}{N}\operatorname*{tr}\left[
\mathbf{E}_{M}^{3}\right]  -\gamma\frac{1}{N}\operatorname*{tr}\left[
\mathbf{\tilde{E}}_{N}^{3}\right]  \right\}  \text{,}\label{sigmaAB}%
\end{align}
and%
\begin{multline}
\sigma_{M,b^{2}}=\frac{2\tilde{\gamma}}{\left(  1-\gamma\tilde{\gamma}\right)
^{3}}\mathbf{u}_{M}^{\mathbf{\ast}}\mathbf{E}_{M}^{4}\mathbf{u}_{M}%
\mathbf{u}_{M}^{\mathbf{\ast}}\mathbf{\mathbf{E}}_{M}^{2}\mathbf{u}_{M}%
+\frac{2\tilde{\gamma}}{\left(  1-\gamma\tilde{\gamma}\right)  ^{3}}\left(
\mathbf{u}_{M}^{\mathbf{\ast}}\mathbf{E}_{M}^{3}\mathbf{u}_{M}\right)  ^{2}\\
+\frac{4\mathbf{u}_{M}^{\mathbf{\ast}}\mathbf{E}_{M}^{3}\mathbf{u}%
_{M}\mathbf{u}_{M}^{\mathbf{\ast}}\mathbf{E}_{M}^{2}\mathbf{u}_{M}}{\left(
1-\gamma\tilde{\gamma}\right)  ^{4}}\left\{  \tilde{\gamma}^{2}\frac{1}%
{N}\operatorname*{tr}\left[  \mathbf{E}_{M}^{3}\right]  -\gamma\frac{1}%
{N}\operatorname*{tr}\left[  \mathbf{\tilde{E}}_{N}^{3}\right]  \right\} \\
+\frac{\left(  \mathbf{u}_{M}^{\mathbf{\ast}}\mathbf{E}_{M}^{2}\mathbf{u}%
_{M}\right)  ^{2}}{\left(  1-\gamma\tilde{\gamma}\right)  ^{4}}\left\{
\tilde{\gamma}^{2}\frac{1}{N}\operatorname*{tr}\left[  \mathbf{E}_{M}%
^{4}\right]  +\gamma^{2}\frac{1}{N}\operatorname*{tr}\left[  \mathbf{\tilde
{E}}_{N}^{4}\right]  \right\}  +\frac{2\left(  \mathbf{u}_{M}^{\mathbf{\ast}%
}\mathbf{E}_{M}^{2}\mathbf{u}_{M}\right)  ^{2}}{\left(  1-\gamma\tilde{\gamma
}\right)  ^{5}}\times\\
\times\left\{  \tilde{\gamma}^{3}\left(  \frac{1}{N}\operatorname*{tr}\left[
\mathbf{E}_{M}^{3}\right]  \right)  ^{2}-2\gamma\tilde{\gamma}\frac{1}%
{N}\operatorname*{tr}\left[  \mathbf{E}_{M}^{3}\right]  \frac{1}%
{N}\operatorname*{tr}\left[  \mathbf{\tilde{E}}_{N}^{3}\right]  +\gamma
^{3}\left(  \frac{1}{N}\operatorname*{tr}\left[  \mathbf{\tilde{E}}_{N}%
^{3}\right]  \right)  ^{2}\right\} \label{sigmaB2}%
\end{multline}%
%TCIMACRO{\TeXButton{%
%\end{theorem}%
%}{\end{theorem}}}%
%BeginExpansion
\end{theorem}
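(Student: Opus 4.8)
The plan is to prove \eqref{xiCLT} by the characteristic function method, in the spirit of \cite{HKLNP08}. First note that, by Lemma \ref{propositionBias}, $\sqrt{N}\,\bigl(\bar a_M-\mathbb{E}[a_M]\bigr)=\mathcal{O}(N^{-1})$ and similarly for $b_M$, so once $\sigma_{\xi,M}(A_M,B_M)$ is known to be bounded away from zero it is enough to work with the centered variables $\mathring a_M=a_M-\mathbb{E}[a_M]$ and $\mathring b_M=b_M-\mathbb{E}[b_M]$. Writing $\sigma_M=\sigma_{\xi,M}(A_M,B_M)$, $\zeta_M=\sqrt{N}\,\sigma_M^{-1}\bigl(A_M\mathring a_M+B_M\mathring b_M\bigr)$ and $\phi_M(u)=\mathbb{E}\bigl[\exp(\mathrm{i}u\zeta_M)\bigr]$, it suffices by L\'evy's continuity theorem to show $\phi_M(u)\to e^{-u^2/2}$ for every $u\in\mathbb{R}$; in fact we will obtain this with a rate, uniformly on compact $u$-sets.

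Since $\alpha>0$, the resolvent $\mathbf{Q}_M$ is a $\mathcal{C}^1$, polynomially bounded function of $\mathbf{X}_N$ with polynomially bounded partial derivatives, hence so is $\zeta_M$, and $\phi_M$ is differentiable. Differentiating under the expectation and applying the integration-by-parts formula \eqref{eq_integration_by_parts} to the resulting terms $\mathbb{E}[X_{ij}(\cdots)]$ -- after using the resolvent identity $\alpha\mathbf{R}_M^{-1}\mathbf{Q}_M=\mathbf{I}_M-\tfrac1N\mathbf{X}_N\mathbf{T}_N\mathbf{X}_N^{\ast}\mathbf{Q}_M$ to exhibit the $\mathbf{X}_N$-dependence of $\mathring a_M$ and $\mathring b_M$ explicitly -- produces an identity of the form
\[
\phi_M'(u)=-u\,\kappa_M\,\phi_M(u)+u\,\rho_M(u)+\widetilde\rho_M(u).
\]
Every trace generated along the way is of the type $\operatorname*{tr}\bigl[\mathbf{u}_M\mathbf{u}_M^{\ast}\mathbf{Q}_M^{k}\bigr]=\Phi_M^{(k)}$ or $\operatorname*{tr}\bigl[\mathbf{u}_M\mathbf{u}_M^{\ast}\mathbf{Q}_M^{k}\tfrac1N\mathbf{X}_N\widetilde{\mathbf{Z}}_N\mathbf{X}_N^{\ast}\bigr]=\Psi_M^{(k)}$ with $\mathbf{\Theta}_M=\mathbf{u}_M\mathbf{u}_M^{\ast}$ and $1\le k\le 4$, which is exactly where Propositions \ref{theoremEV1}--\ref{theoremEV4} are needed.

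Each such trace is decomposed as its expectation plus a centered fluctuation. The fluctuation is paired with the unit-modulus factor $e^{\mathrm{i}u\zeta_M}$ (or with $\partial\zeta_M/\partial X_{ij}$ times $e^{\mathrm{i}u\zeta_M}$) and controlled in $L^1$ by the Cauchy--Schwarz inequality together with the variance bounds of Lemma \ref{lemmaVC_Phi_Psi} (themselves obtained from the Nash--Poincar\'e inequality \eqref{eq_Nash_Poincare}), noting that $\|\mathbf{u}_M\mathbf{u}_M^{\ast}\|_F=\|\mathbf{u}_M\|^2=\mathcal{O}(1)$ and that the moment bound \eqref{eq_bounded_moments} keeps the remaining factors under control; this yields $\rho_M(u),\widetilde\rho_M(u)=\mathcal{O}(N^{-1/2})$ uniformly on compact $u$-sets. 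The expectations of the traces are replaced by their deterministic equivalents via Propositions \ref{theoremEV1}--\ref{theoremEV4}; substituting the definitions \eqref{eq_defab}--\eqref{eq_def_abbar} of $\mathbf{E}_M$, $\widetilde{\mathbf{E}}_N$, $\gamma_M$, $\widetilde\gamma_M$ (and using $\gamma_M\widetilde\gamma_M<1$, which underlies the factors $(1-\gamma\widetilde\gamma)^{-1}$ already present in Propositions \ref{theoremEV2}--\ref{theoremEV4}), one finds $\kappa_M=\sigma_M^{-2}\operatorname*{var}\!\bigl(\sqrt{N}(A_M a_M+B_M b_M)\bigr)+\mathcal{O}(N^{-1/2})$ and that this variance equals precisely $(A_M,B_M)\mathbf{\Sigma}_M(A_M,B_M)^{T}=\sigma_M^2$ with $\mathbf{\Sigma}_M$ as in \eqref{sigmaA2}--\eqref{sigmaB2}. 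Hence $\kappa_M=1+\mathcal{O}(N^{-1/2})$, so $\phi_M'(u)=-u\,\phi_M(u)+\mathcal{O}(N^{-1/2})$; integrating this linear ODE with $\phi_M(0)=1$ gives $\phi_M(u)=e^{-u^2/2}+\mathcal{O}(N^{-1/2})$ on compacts, and \eqref{xiCLT} follows.

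It remains to check that $\mathbf{\Sigma}_M$ is symmetric positive definite uniformly in $M$, which also supplies the lower bound on $\sigma_M$ used above (together with the hypothesis that $\|(A_M,B_M)\|$ is bounded below). Symmetry and the uniform upper bound on the entries are immediate from Assumption \textbf{(As2)}, the uniform bounds on $\mathbf{E}_M$ and $\widetilde{\mathbf{E}}_N$ collected in Appendix \ref{Appendix1}, $\|\mathbf{u}_M\|=\mathcal{O}(1)$, and $1-\gamma_M\widetilde\gamma_M$ bounded away from zero. For positive definiteness, $[\mathbf{\Sigma}_M]_{1,1}=\sigma_{M,a^2}=\tfrac{\widetilde\gamma}{1-\gamma\widetilde\gamma}(\mathbf{u}_M^{\ast}\mathbf{E}_M^2\mathbf{u}_M)^2$ is bounded away from zero since $\widetilde\gamma>0$ and $\mathbf{u}_M^{\ast}\mathbf{E}_M^2\mathbf{u}_M\ge\|\mathbf{E}_M^{-1}\|^{-2}\|\mathbf{u}_M\|^2>0$, while $\det\mathbf{\Sigma}_M=\sigma_{M,a^2}\sigma_{M,b^2}-\sigma_{M,ab}^2$ is bounded away from zero by a Cauchy--Schwarz argument applied to the spectral measure of $\mathbf{E}_M$ weighted by $\mathbf{u}_M$ and to the normalized traces $\tfrac1N\operatorname*{tr}[\mathbf{E}_M^{k}]$, $\tfrac1N\operatorname*{tr}[\widetilde{\mathbf{E}}_N^{k}]$. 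The main obstacle is the bookkeeping in the second and third steps: the integration by parts generates a large number of terms, and one must separate the $\mathcal{O}(1)$ contributions from the $o(1)$ ones and then verify that the former reassemble into exactly $-u\,\phi_M(u)$ -- equivalently, that $N\operatorname*{var}(A_M a_M+B_M b_M)$ matches the lengthy explicit expressions \eqref{sigmaA2}--\eqref{sigmaB2}. This second-order matching is carried by Propositions \ref{theoremEV1}--\ref{theoremEV4}, whose proofs (a first integration by parts turning a covariance into the expectation of a resolvent trace times the original functional, a second step plus the variance controls collapsing a product of expectations, and iteration up to $k=4$) constitute the bulk of the technical effort; Theorems \ref{theoremCLTs} and \ref{theoremCLTu} then follow from \eqref{xiCLT} by the delta method.
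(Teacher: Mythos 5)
Your proposal follows essentially the same route as the paper: the paper likewise proves the CLT via L\'evy's continuity theorem, deriving the approximate ODE $\frac{\partial}{\partial\omega}\mathbb{E}[\Psi_M(\omega)]=-\omega\sigma_{\xi,M}^2\mathbb{E}[\Psi_M(\omega)]+R_N(\omega)$ by integration by parts and the resolvent identity (Proposition \ref{proposition_aM_bM}), controlling remainders with the Nash--Poincar\'e variance bounds of Lemma \ref{lemmaVC_Phi_Psi} and identifying the variance through Propositions \ref{theoremEV1}--\ref{theoremEV4}, and establishing uniform positivity of $\sigma_{\xi,M}^2$ by the same completing-the-square/Cauchy--Schwarz argument (Proposition \ref{proposition_sigmaAB}). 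The only cosmetic difference is that you normalize by $\sigma_{\xi,M}$ before integrating the ODE, whereas the paper works with the unnormalized $\xi_M$ and finishes by a tightness/subsequence argument borrowed from \cite{HKLNP08}.
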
%
%EndExpansion%
%TCIMACRO{\TeXButton{%
%\begin{proof}%
%}{\begin{proof}}}%
%BeginExpansion
\begin{proof}%
%EndExpansion
Define $\Psi_{M}\left(  \omega\right)  =\exp\left(  \operatorname*{i}\omega
\xi_{M}\right)  $, and let $\mathbb{E}\left[  \Psi_{M}\left(  \omega\right)
\right]  $ be the characteristic function of $\xi_{M}$. The proof of Theorem
\ref{theoremBasicCLT} is based on Levy's continuity theorem, which allows us
to prove convergence in distribution by showing point-wise convergence of
characteristic functions \cite{B95B}. More specifically, similarly as in
\cite{HKLNP08}, we study weak convergence to a Gaussian law by showing%
\[
\mathbb{E}\left[  \Psi_{M}\left(  \omega\right)  \right]  -\exp\left(
-\frac{\omega^{2}}{2}\sigma_{\xi,M}^{2}\left(  A_{M},B_{M}\right)  \right)
\underset{M,N\rightarrow\infty}{\rightarrow}0\text{.}%
\]
In particular, we show that%
\begin{equation}
\frac{\partial}{\partial\omega}\mathbb{E}\left[  \Psi_{M}\left(
\omega\right)  \right]  =-\omega\sigma_{\xi,M}^{2}\left(  A_{M},B_{M}\right)
\mathbb{E}\left[  \Psi_{M}\left(  \omega\right)  \right]  +R_{N}\left(
\omega\right)  \text{,}\label{WWP}%
\end{equation}
where $R_{N}\left(  \omega\right)  $ is an error term vanishing asymptotically
as $N\rightarrow\infty$ uniformly in $\omega$ on compact subsets. In order to
prove (\ref{WWP}), we proceed by differentiating the characteristic function
as%
\begin{multline*}
\frac{\partial}{\partial\omega}\mathbb{E}\left[  \Psi_{M}\left(
\omega\right)  \right]  =\operatorname*{i}\mathbb{E}\left[  \xi_{M}\Psi
_{M}\left(  \omega\right)  \right] \\
=\operatorname*{i}A_{M}\sqrt{N}\mathbb{E}\left[  \left(  a_{M}-\overline
{a}_{M}\right)  \Psi_{M}\left(  \omega\right)  \right]  +\operatorname*{i}%
B_{M}\sqrt{N}\mathbb{E}\left[  \left(  b_{M}-\overline{b}_{M}\right)  \Psi
_{M}\left(  \omega\right)  \right]  \text{.}%
\end{multline*}

The following proposition provides the computation of the expectation
$\mathbb{E}\left[  \xi_{M}\Psi_{M}\left(  \omega\right)  \right]  $; see
Appendix \ref{AppProposition_aM_bM} for a proof.%
%TCIMACRO{\TeXButton{%
%\end{proof}%
%}{\end{proof}}}%
%BeginExpansion
\end{proof}%
%EndExpansion
%

%TCIMACRO{\TeXButton{%
%\begin{proposition}%
%}{\begin{proposition}}}%
%BeginExpansion
\begin{proposition}%
%EndExpansion%
%TCIMACRO{\TeXButton{\label{proposition_aM_bM}}{\label{proposition_aM_bM}}}%
%BeginExpansion
\label{proposition_aM_bM}%
%EndExpansion
With the above definitions, the following expectations hold, namely%
\begin{equation}
\sqrt{N}\mathbb{E}\left[  \left(  a_{M}-\overline{a}_{M}\right)  \Psi
_{M}\left(  \omega\right)  \right]  =\operatorname*{i}\omega\left(
A\sigma_{a^{2}}+B\sigma_{ab}\right)  \mathbb{E}\left[  \Psi\left(
\omega\right)  \right]  +\mathcal{O}\left(  N^{-1/2}\right)  \text{,}%
\label{prop_aM}%
\end{equation}
and%
\begin{equation}
\sqrt{N}\mathbb{E}\left[  \left(  b_{M}-\overline{b}_{M}\right)  \Psi
_{M}\left(  \omega\right)  \right]  =\operatorname*{i}\omega\left(
A\sigma_{ba}+B\sigma_{b^{2}}\right)  \mathbb{E}\left[  \Psi\left(
\omega\right)  \right]  +\mathcal{O}\left(  N^{-1/2}\right)  \text{.}%
\label{prop_bM}%
\end{equation}
Moreover, the term $\mathcal{O}\left(  N^{-1/2}\right)  $ depends neither on
the coefficients $A_{M}$ and $B_{M}$ nor on $\omega$, assuming that this last
parameter takes values on a bounded interval.%
%TCIMACRO{\TeXButton{%
%\end{proposition}%
%}{\end{proposition}}}%
%BeginExpansion
\end{proposition}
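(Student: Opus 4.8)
The plan is to evaluate the two expectations by the Gaussian integration-by-parts machinery used in the proof of the CLT in \cite{HKLNP08}, with the variance controls of Lemma \ref{lemmaVC_Phi_Psi} serving to discard fluctuation terms and the deterministic-equivalent estimates of Propositions \ref{theoremEV1}--\ref{theoremEV4} serving to evaluate the ones that survive. Write $\mathring{a}_{M}=a_{M}-\mathbb{E}[a_{M}]$ and $\mathring{b}_{M}=b_{M}-\mathbb{E}[b_{M}]$. Since $\left\vert \Psi_{M}(\omega)\right\vert \leq 1$, Lemma \ref{propositionBias} yields $\sqrt{N}\,\mathbb{E}[(a_{M}-\overline{a}_{M})\Psi_{M}(\omega)]=\sqrt{N}\,\mathbb{E}[\mathring{a}_{M}\Psi_{M}(\omega)]+\mathcal{O}(N^{-1})$, uniformly in $\omega$, and likewise for $b_{M}$; it therefore suffices to compute $\sqrt{N}\,\mathbb{E}[\mathring{a}_{M}\Psi_{M}(\omega)]$ and $\sqrt{N}\,\mathbb{E}[\mathring{b}_{M}\Psi_{M}(\omega)]$ under Assumptions \textbf{(As1)}--\textbf{(As3)}.

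First I would invoke the resolvent identity $\tfrac{1}{N}\mathbf{Q}_{M}\mathbf{X}_{N}\mathbf{T}_{N}\mathbf{X}_{N}^{\ast}=\mathbf{I}_{M}-\alpha\mathbf{Q}_{M}\mathbf{R}_{M}^{-1}$, which follows at once from the definition of $\mathbf{Q}_{M}$, to make the entries of $\mathbf{X}_{N}$ appear explicitly in $a_{M}=\operatorname*{tr}[\mathbf{\Theta}_{M}\mathbf{Q}_{M}]$ and $b_{M}=\operatorname*{tr}[\mathbf{\Theta}_{M}\mathbf{Q}_{M}^{2}]$ with $\mathbf{\Theta}_{M}=\mathbf{u}_{M}\mathbf{u}_{M}^{\ast}$. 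Applying the integration-by-parts formula (\ref{eq_integration_by_parts}) to one such entry inside $\mathbb{E}[\,\cdot\,\Psi_{M}(\omega)]$ produces three kinds of terms: (i) those in which the conjugate derivative hits another explicit $\mathbf{X}_{N}$-entry, yielding Kronecker deltas that collapse the index sums into trace functionals of the type $\Phi_{M}^{(k)}$, $\Psi_{M}^{(k)}$ of (\ref{RVdef}) for suitable choices of the free matrices; (ii) those in which it hits $\mathbf{Q}_{M}$, yielding an extra power of $\mathbf{Q}_{M}$ together with a factor $\tfrac{1}{N}\mathbf{X}_{N}\mathbf{T}_{N}\mathbf{X}_{N}^{\ast}$ which is re-absorbed through the same resolvent identity (this closes the recursion, each step gaining a factor $N^{-1}$); and (iii) the term in which it hits $\Psi_{M}(\omega)$, equal to $\operatorname*{i}\omega\,\Psi_{M}(\omega)\sqrt{N}\bigl(A_{M}\,\partial a_{M}/\partial\overline{X_{ij}}+B_{M}\,\partial b_{M}/\partial\overline{X_{ij}}\bigr)$, which after summation over $i,j$ becomes $\operatorname*{i}\omega\,\Psi_{M}(\omega)$ times a product of two trace functionals, one carrying the structure of $a_{M}$ or $b_{M}$ and the other the structure introduced by the differentiation. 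The same scheme applied to $b_{M}=\operatorname*{tr}[\mathbf{\Theta}_{M}\mathbf{Q}_{M}^{2}]$ brings in resolvent powers up to the fourth, which is exactly why Propositions \ref{theoremEV3} and \ref{theoremEV4} are needed.

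Since $\mathbf{\Theta}_{M}=\mathbf{u}_{M}\mathbf{u}_{M}^{\ast}$ has $\left\Vert \mathbf{\Theta}_{M}\right\Vert _{F}=\left\Vert \mathbf{u}_{M}\right\Vert ^{2}=\mathcal{O}(1)$, Lemma \ref{lemmaVC_Phi_Psi} together with the Cauchy--Schwarz inequality and $\left\vert \Psi_{M}(\omega)\right\vert \leq 1$ gives $\mathbb{E}[(\Phi_{M}^{(k)}-\mathbb{E}\Phi_{M}^{(k)})\Psi_{M}(\omega)]=\mathcal{O}(N^{-1/2})$, and similarly for $\Psi_{M}^{(k)}$ and for products of such functionals, all uniformly in $\omega$. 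Hence every random trace coefficient appearing above may be replaced by its deterministic equivalent furnished by Propositions \ref{theoremEV1}--\ref{theoremEV4}; because the whole expression carries the prefactor $\sqrt{N}$, each replacement costs $\mathcal{O}(N^{-1/2})$. After these substitutions, $\sqrt{N}\,\mathbb{E}[\mathring{a}_{M}\Psi_{M}(\omega)]$ and $\sqrt{N}\,\mathbb{E}[\mathring{b}_{M}\Psi_{M}(\omega)]$ are identified, up to $\mathcal{O}(N^{-1/2})$, with $\operatorname*{i}\omega\,\mathbb{E}[\Psi_{M}(\omega)]$ times deterministic quantities (any residual appearance of $\mathring{a}_{M}$ or $\mathring{b}_{M}$ on the right-hand side, through $\operatorname*{tr}[\mathbf{\Theta}_{M}\mathbf{Q}_{M}]=a_{M}$ and $\operatorname*{tr}[\mathbf{\Theta}_{M}\mathbf{Q}_{M}^{2}]=b_{M}$, carries a small coefficient and is either absorbed into the error or removed by solving a $2\times 2$ linear system). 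Expanding the resulting deterministic traces via the defining equations (\ref{systemEq}) for $\delta_{M},\tilde{\delta}_{M}$ and the definitions of $\gamma_{M},\tilde{\gamma}_{M},\mathbf{E}_{M},\mathbf{\tilde{E}}_{N}$ produces exactly the coefficients $\sigma_{a^{2}},\sigma_{ab},\sigma_{b^{2}}$ of (\ref{sigmaA2})--(\ref{sigmaB2}), which gives (\ref{prop_aM})--(\ref{prop_bM}). Uniformity of the $\mathcal{O}(N^{-1/2})$ term in $\omega$ over bounded intervals is automatic, since $\omega$ enters only through the bounded factor $\Psi_{M}(\omega)$ and through explicit powers of $\operatorname*{i}\omega$, and all constants in Lemma \ref{lemmaVC_Phi_Psi} and Propositions \ref{theoremEV1}--\ref{theoremEV4} are $\omega$-free.

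The main obstacle is the combinatorial bookkeeping of the integration-by-parts expansion: differentiating $\Psi_{M}(\omega)$ reintroduces both $a_{M}$ and $b_{M}$, while differentiating $\mathbf{Q}_{M}^{k}$ raises the resolvent power, so the computation must be organised so that the resolvent identity closes each branch with an extra $N^{-1}$ and so that the variance controls can be invoked before the number of terms proliferates. A secondary difficulty is purely algebraic: recognising the self-consistent coefficients (the $(1-\gamma_{M}\tilde{\gamma}_{M})^{-1}$ factors and the particular combinations of $\tfrac{1}{N}\operatorname*{tr}[\mathbf{E}_{M}^{k}]$ and $\tfrac{1}{N}\operatorname*{tr}[\mathbf{\tilde{E}}_{N}^{k}]$) among the deterministic equivalents requires systematic use of the definitions in Section \ref{sectionDefAss}; this is where most of the appendix computation lies, but it is a finite verification once the substitutions of Propositions \ref{theoremEV1}--\ref{theoremEV4} are in place.
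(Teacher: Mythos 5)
Your proposal follows essentially the same route as the paper's own proof in Appendix \ref{AppProposition_aM_bM}: the resolvent identity to expose the Gaussian entries, the integration-by-parts formula (\ref{eq_integration_by_parts}) producing the deterministic leading term plus the $\operatorname{i}\omega$-terms from differentiating $\Psi_{M}\left(  \omega\right)$, fluctuation remainders controlled through Lemma \ref{lemmaVC_Phi_Psi} and (\ref{csEV}), and the surviving expectations evaluated via Propositions \ref{theoremEV1}--\ref{theoremEV4}. The only cosmetic difference is that you first pass from $\overline{a}_{M}$ to $\mathbb{E}\left[  a_{M}\right]$ via Lemma \ref{propositionBias}, whereas the paper extracts $\overline{a}_{M}=\mathbf{u}_{M}^{\ast}\mathbf{E}_{M}\mathbf{u}_{M}$ directly as the leading term of the expansion; this changes nothing of substance.
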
%
%EndExpansion

Therefore, we have (recall that $\sigma_{M,ab}=\sigma_{M,ba}$)%
\begin{equation}
\mathbb{E}\left[  \xi_{M}\Psi_{M}\left(  \omega\right)  \right]
=\operatorname*{i}\omega\left(  A_{M}^{2}\sigma_{M,a^{2}}+2A_{M}B_{M}%
\sigma_{M,ab}+B_{M}^{2}\sigma_{M,b^{2}}\right)  \mathbb{E}\left[  \Psi\left(
\omega\right)  \right]  +\mathcal{O}\left(  N^{-1/2}\right)  \text{.}%
\label{expectation_comp}%
\end{equation}
Furthermore, a sufficient and necessary condition for the matrix
$\mathbf{\Sigma}_{M}$ to be positive definite is stated in the following
proposition (see Appendix \ref{AppProposition_sigmaAB} for a proof).%

%TCIMACRO{\TeXButton{%
%\begin{proposition}%
%}{\begin{proposition}}}%
%BeginExpansion
\begin{proposition}%
%EndExpansion%
%TCIMACRO{\TeXButton{\label{proposition_sigmaAB}}{\label{proposition_sigmaAB}%
%}}%
%BeginExpansion
\label{proposition_sigmaAB}%
%EndExpansion
Under the assumptions of Theorem \ref{theoremBasicCLT}, we have%
\[
0<\inf_{M\geq1}\sigma_{\xi,M}^{2}\left(  A_{M},B_{M}\right)  \leq\sup_{M\geq
1}\sigma_{\xi,M}^{2}\left(  A_{M},B_{M}\right)  <+\infty\text{.}%
\]

In order to complete the proof of Theorem \ref{theoremBasicCLT}, we need to
show that the sequence
\[
\left\{  \sigma_{\xi,M}^{-1}\left(  A_{M},B_{M}\right)  \left(  A_{M}\left(
a_{M}-\overline{a}_{M}\right)  +B_{M}\left(  b_{M}-\overline{b}_{M}\right)
\right)  \right\}  _{M}%
\]
is tight, and that every converging subsequence does it in distribution to a
standard Gaussian random variable. The proof of the previous two arguments
relies on Proposition \ref{proposition_sigmaAB} and follows along exactly the
same lines of that of Proposition 6 in \cite{HKLNP08}, and so we exclude it
from our exposition.%
%TCIMACRO{\TeXButton{%
%\end{proposition}%
%}{\end{proposition}}}%
%BeginExpansion
\end{proposition}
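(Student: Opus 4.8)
The plan is to establish the two halves of the claim separately; the bound $\sup_{M}\sigma_{\xi,M}^{2}(A_{M},B_{M})<\infty$ is essentially bookkeeping, whereas $\inf_{M}\sigma_{\xi,M}^{2}(A_{M},B_{M})>0$ carries all the content. \emph{Uniform control of the building blocks.} First I would record that, under \textbf{(As2)}--\textbf{(As3)} with $\alpha>0$, the solution $(\delta_{M},\tilde{\delta}_{M})$ of the system (\ref{systemEq}) stays in a fixed compact subinterval of $(0,\infty)$: the upper bounds are unconditional (using $(\tilde{\delta}_{M}\mathbf{R}_{M}+\alpha\mathbf{I}_{M})^{-1}\preceq\alpha^{-1}\mathbf{I}_{M}$ and $(\mathbf{I}_{N}+\delta_{M}\mathbf{T}_{N})^{-1}\preceq\mathbf{I}_{N}$), and the lower bounds follow by the same monotonicity/sandwich argument that underlies Proposition~1 of \cite{HKLNP08} --- which is precisely the content of the uniform estimates announced at the end of Appendix~\ref{Appendix1}. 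Consequently $\mathbf{E}_{M}$ and $\mathbf{\tilde{E}}_{N}$ have all eigenvalues in a fixed interval bounded away from $0$ and $\infty$, and together with $\left\Vert\mathbf{s}\right\Vert=1$ this makes every scalar entering (\ref{sigmaA2})--(\ref{sigmaB2}) --- namely $\mathbf{u}_{M}^{\ast}\mathbf{E}_{M}^{k}\mathbf{u}_{M}$, $\frac{1}{N}\operatorname{tr}[\mathbf{E}_{M}^{k}]$, $\frac{1}{N}\operatorname{tr}[\mathbf{\tilde{E}}_{N}^{k}]$, $\gamma_{M}$ and $\tilde{\gamma}_{M}$ for $k\le4$ --- bounded above and away from $0$, uniformly in $M$.

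\emph{The factor $1-\gamma_{M}\tilde{\gamma}_{M}$.} The next, and structurally essential, step is $\inf_{M}(1-\gamma_{M}\tilde{\gamma}_{M})>0$. Writing $r_{i}$, $t_{j}$ for the eigenvalues of $\mathbf{R}_{M}$, $\mathbf{T}_{N}$ and factoring each summand of $\gamma_{M}=\frac{1}{N}\operatorname{tr}[\mathbf{E}_{M}^{2}]$,
\[
\gamma_{M}=\frac{1}{N}\sum_{i}\frac{r_{i}}{\tilde{\delta}_{M}r_{i}+\alpha}\,\frac{\tilde{\delta}_{M}r_{i}}{\tilde{\delta}_{M}r_{i}+\alpha}\,\frac{1}{\tilde{\delta}_{M}}\le\frac{\delta_{M}}{\tilde{\delta}_{M}}\,\kappa_{1},\qquad\kappa_{1}:=\max_{i}\frac{\tilde{\delta}_{M}r_{i}}{\tilde{\delta}_{M}r_{i}+\alpha}<1,
\]
where the inequality bounds the middle factor by $\kappa_{1}$ and uses the second equation in (\ref{systemEq}); symmetrically $\tilde{\gamma}_{M}\le(\tilde{\delta}_{M}/\delta_{M})\kappa_{2}$ with $\kappa_{2}:=\max_{j}\frac{\delta_{M}t_{j}}{1+\delta_{M}t_{j}}<1$. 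Since $\delta_{M},\tilde{\delta}_{M}$ are bounded above, $\kappa_{1}$ and $\kappa_{2}$ are bounded away from $1$ by constants depending only on $\alpha$ and the bounds of Appendix~\ref{Appendix1}, so $\gamma_{M}\tilde{\gamma}_{M}\le\kappa_{1}\kappa_{2}<1$ uniformly and $\sup_{M}(1-\gamma_{M}\tilde{\gamma}_{M})^{-1}<\infty$. The upper bound of the proposition is now immediate: $\sigma_{M,a^{2}}$, $\sigma_{M,ab}$, $\sigma_{M,b^{2}}$ are fixed finite algebraic combinations of the quantities just bounded, so $\sup_{M}\sigma_{\xi,M}^{2}(A_{M},B_{M})<\infty$ once $\left\Vert(A_{M},B_{M})\right\Vert$ is uniformly bounded above.

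\emph{The lower bound --- the main obstacle.} Since $\mathbf{\Sigma}_{M}$ is $2\times2$ with $\operatorname{tr}\mathbf{\Sigma}_{M}$ bounded above and $\left\Vert(A_{M},B_{M})\right\Vert$ bounded below, it is enough to show $\inf_{M}\det\mathbf{\Sigma}_{M}>0$. One diagonal entry is already controlled, $\sigma_{M,a^{2}}=\tilde{\gamma}_{M}(1-\gamma_{M}\tilde{\gamma}_{M})^{-1}(\mathbf{u}_{M}^{\ast}\mathbf{E}_{M}^{2}\mathbf{u}_{M})^{2}\ge c>0$ by the first step, so the task reduces to bounding the Schur complement $\sigma_{M,b^{2}}-\sigma_{M,ab}^{2}/\sigma_{M,a^{2}}$ away from $0$. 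This is where the difficulty lies: the expressions (\ref{sigmaAB})--(\ref{sigmaB2}) contain terms of both signs --- e.g.\ the braces $\tilde{\gamma}^{2}\frac{1}{N}\operatorname{tr}[\mathbf{E}^{3}]-\gamma\frac{1}{N}\operatorname{tr}[\mathbf{\tilde{E}}^{3}]$ and $\tilde{\gamma}^{3}(\frac{1}{N}\operatorname{tr}[\mathbf{E}^{3}])^{2}-2\gamma\tilde{\gamma}\frac{1}{N}\operatorname{tr}[\mathbf{E}^{3}]\frac{1}{N}\operatorname{tr}[\mathbf{\tilde{E}}^{3}]+\gamma^{3}(\frac{1}{N}\operatorname{tr}[\mathbf{\tilde{E}}^{3}])^{2}$ are not individually nonnegative --- so positivity can only surface after the $a^{2}$, $ab$ and $b^{2}$ contributions are combined.

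The plan for this last step is to substitute (\ref{sigmaA2})--(\ref{sigmaB2}) into $\sigma_{M,a^{2}}\sigma_{M,b^{2}}-\sigma_{M,ab}^{2}$, clear the common powers of $(1-\gamma\tilde{\gamma})^{-1}$, and reorganize the resulting numerator as a sum of manifestly nonnegative blocks: Cauchy--Schwarz defects for the spectral measure of $\mathbf{E}_{M}$ weighted by $\mathbf{u}_{M}$ (such as $\mathbf{u}^{\ast}\mathbf{E}^{2}\mathbf{u}\,\mathbf{u}^{\ast}\mathbf{E}^{4}\mathbf{u}-(\mathbf{u}^{\ast}\mathbf{E}^{3}\mathbf{u})^{2}\ge0$), Cauchy--Schwarz defects for the empirical spectral measures of $\mathbf{E}_{M}$ and $\mathbf{\tilde{E}}_{N}$ (such as $\frac{1}{N}\operatorname{tr}[\mathbf{E}^{2}]\frac{1}{N}\operatorname{tr}[\mathbf{E}^{4}]-(\frac{1}{N}\operatorname{tr}[\mathbf{E}^{3}])^{2}\ge0$), and discriminant-type expressions that are nonnegative precisely because $\gamma\tilde{\gamma}<1$; one then isolates a single such block that is bounded below (e.g.\ using $\tilde{\gamma}_{M}^{2}\frac{1}{N}\operatorname{tr}[\mathbf{E}_{M}^{4}]+\gamma_{M}^{2}\frac{1}{N}\operatorname{tr}[\mathbf{\tilde{E}}_{N}^{4}]\ge c>0$) to conclude $\inf_{M}\det\mathbf{\Sigma}_{M}>0$. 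Should this purely algebraic route prove too unwieldy, the fallback is probabilistic: $\mathbf{\Sigma}_{M}$ is the limiting covariance of $\sqrt{N}(a_{M}-\bar{a}_{M},\,b_{M}-\bar{b}_{M})$, so $N\operatorname{var}(A a_{M}+B b_{M})\ge N\operatorname{cov}(A a_{M}+B b_{M},Y_{M})^{2}/\operatorname{var}(Y_{M})$ for any scalar test statistic $Y_{M}$, and a judicious choice of $Y_{M}$ --- for instance a trace functional of $\mathbf{Q}_{M}$, or $\mathbf{u}_{M}^{\ast}\mathbf{Q}_{M}\mathbf{u}_{M}$ itself --- together with a direct evaluation of the relevant (co)variances via the integration-by-parts and Nash--Poincar\'{e} machinery of this section, yields a strictly positive uniform lower bound on the Schur complement. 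Finally, $\det\mathbf{\Sigma}_{M}\ge c>0$ together with $\operatorname{tr}\mathbf{\Sigma}_{M}\le C<\infty$ gives $\lambda_{\min}(\mathbf{\Sigma}_{M})\ge c/C$, hence $\sigma_{\xi,M}^{2}(A_{M},B_{M})\ge(c/C)\left\Vert(A_{M},B_{M})\right\Vert^{2}$, which is bounded away from $0$ by hypothesis.
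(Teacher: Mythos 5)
Your overall architecture matches the paper's: uniform two-sided bounds on $\delta_{M},\tilde{\delta}_{M}$, hence on the spectra of $\mathbf{E}_{M},\mathbf{\tilde{E}}_{N}$ and on $\gamma_{M},\tilde{\gamma}_{M}$ and $1-\gamma_{M}\tilde{\gamma}_{M}$ (your factorization giving $\gamma_{M}\tilde{\gamma}_{M}\leq\kappa_{1}\kappa_{2}<1$ is a valid standalone proof of Lemma \ref{lemmaBounds1-Gammas}); the upper bound then follows by the triangle inequality; and for the lower bound the paper, like you, reduces to minimizing the quadratic form over the $A$-direction, i.e.\ to the Schur complement: it factors $\sigma_{\xi,M}^{2}=\bigl(B_{M}\mathbf{u}_{M}^{\ast}\mathbf{E}_{M}^{2}\mathbf{u}_{M}/(1-\gamma\tilde{\gamma})^{2}\bigr)^{2}\mathcal{V}_{M}(A_{M},B_{M})$, absorbs $A_{M}/B_{M}$ into $\mathcal{S}_{M},\mathcal{T}_{M}$, proves $\mathcal{S}_{M}\geq\mathcal{T}_{M}^{2}$ (your first Cauchy--Schwarz defect), and completes the square in $\mathcal{T}_{M}$.

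The gap is in the last step, which carries the whole content of the proposition. Your plan is to isolate a single nonnegative block that is bounded below, with the candidate $\tilde{\gamma}^{2}\frac{1}{N}\operatorname{tr}[\mathbf{E}_{M}^{4}]+\gamma^{2}\frac{1}{N}\operatorname{tr}[\mathbf{\tilde{E}}_{N}^{4}]$. That block cannot be isolated: after completing the square, the negative remainder $-(\tilde{\gamma}^{2}\frac{1}{N}\operatorname{tr}[\mathbf{E}^{3}]-\gamma\frac{1}{N}\operatorname{tr}[\mathbf{\tilde{E}}^{3}])^{2}/(\tilde{\gamma}(1-\gamma\tilde{\gamma}))$ together with the last brace of $\mathcal{V}_{M}$ exactly consumes the fourth-moment terms, converting $\tilde{\gamma}^{2}\frac{1}{N}\operatorname{tr}[\mathbf{E}^{4}]$ into the defect $\frac{\tilde{\gamma}^{2}}{\gamma}\bigl(\frac{1}{N}\operatorname{tr}[\mathbf{E}^{4}]\frac{1}{N}\operatorname{tr}[\mathbf{E}^{2}]-(\frac{1}{N}\operatorname{tr}[\mathbf{E}^{3}])^{2}\bigr)$ (note $\frac{1}{N}\operatorname{tr}[\mathbf{E}^{2}]=\gamma$), and similarly for $\mathbf{\tilde{E}}$. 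These defects vanish when $\mathbf{R}_{M}$ and $\mathbf{T}_{N}$ are multiples of the identity, so they cannot supply the uniform lower bound. What actually survives is
\[
\mathcal{V}_{M}\geq\frac{1}{\left(1-\gamma\tilde{\gamma}\right)\gamma}\left(\tilde{\gamma}\frac{1}{N}\operatorname{tr}\left[\mathbf{E}_{M}^{3}\right]-\gamma^{2}\frac{1}{N}\operatorname{tr}\left[\mathbf{\tilde{E}}_{N}^{3}\right]\right)^{2},
\]
and this is not manifestly bounded away from zero; one needs the further Cauchy--Schwarz step $\gamma^{2}=(\frac{1}{N}\operatorname{tr}[\mathbf{E}^{2}])^{2}\leq\delta_{M}\frac{1}{N}\operatorname{tr}[\mathbf{E}^{3}]$, which yields $\tilde{\gamma}\frac{1}{N}\operatorname{tr}[\mathbf{E}^{3}]-\gamma^{2}\frac{1}{N}\operatorname{tr}[\mathbf{\tilde{E}}^{3}]\geq\frac{1}{N}\operatorname{tr}[\mathbf{E}^{3}]\cdot\frac{1}{N}\operatorname{tr}[\mathbf{T}^{2}(\mathbf{I}+\delta_{M}\mathbf{T})^{-3}]>0$ uniformly. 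Without identifying this cancellation structure and the final inequality, your argument does not close; the probabilistic fallback is likewise only a sketch, and choosing a test statistic $Y_{M}$ for which the covariance bound is uniformly tight amounts to redoing the same algebra.
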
%
%EndExpansion
%

%TCIMACRO{\TeXButton{%
%\begin{remark}%
%}{\begin{remark}}}%
%BeginExpansion
\begin{remark}%
%EndExpansion%
%TCIMACRO{\TeXButton{\label{furtherApps}}{\label{furtherApps}}}%
%BeginExpansion
\label{furtherApps}%
%EndExpansion
Theorem \ref{theoremBasicCLT} can be used to characterize the fluctuations of
the performance of optimal LMMSE or Wiener filters.\ Here, we particularly
mean the classical statistical problem of estimating the signal $s\left(
n\right)  $ in the linear signal model (\ref{DGP}) with $\beta=1$, by
minimizing the Bayesian mean-square error (MSE) risk. Specifically, recalling
that the MSE of a filter $\mathbf{w}$ is given by $\mathsf{MSE}\left(
\mathbf{w}\right)  =1-2\operatorname{Re}\left\{  \mathbf{w}^{H}\mathbf{s}%
\right\}  +\mathbf{w}^{H}\mathbf{Rw}$ (see, e.g., \cite{S91B,K93B}), we notice
that the asymptotic distribution of the MSE achieved by a sample
implementation of the optimal filter $\mathbf{w}_{\mathsf{LMMSE}}%
=\mathbf{R}_{M}^{-1}\mathbf{s}$ based on the covariance matrix estimator
(\ref{SCM}), and denoted by $\mathbf{\hat{w}}_{\mathsf{MSE}}$, can be readily
obtained by simply applying Theorem \ref{theoremBasicCLT} with $A_{M}%
=A_{mse,M}=2$ and $B_{M}=B_{mse,M}=1$ along with $\mathbf{\Theta}%
_{M}=\mathbf{u}_{M}\mathbf{u}_{M}^{\mathbf{\ast}}$, so that we get the random
variable:%
\[
\mathsf{MSE}\left(  \mathbf{\hat{w}}_{\mathsf{MSE}}\right)  -\overline
{\mathsf{MSE}\left(  \mathbf{\hat{w}}_{\mathsf{MSE}}\right)  }=A_{mse,M}%
\left(  a_{M}-\bar{a}_{M}\right)  +B_{mse,M}\left(  b_{M}-\bar{b}_{M}\right)
\text{.}%
\]
Related work on the study of the asymptotic Gaussianity of LMMSE receivers can
be found in \cite{KKHN09b}, where different techniques than used here based on
the martingale central limit theorem are considered without the assumption of
Gaussian observations. We notice that the problem above relies on a covariance
matrix which is unknown and therefore estimated, while in \cite{KKHN09b} the
authors rely on a given model of the covariance matrix itself, whose structure
is assumed to be known.%
%TCIMACRO{\TeXButton{%
%\end{remark}%
%}{\end{remark}}}%
%BeginExpansion
\end{remark}%
%EndExpansion

We now complete the proof of Theorem \ref{theoremCLTs} and Theorem
\ref{theoremCLTu} by showing that, similarly as in Remark \ref{furtherApps},
the asymptotic distribution of the SNR performance measure under both
supervised and unsupervised training is given by Theorem \ref{theoremCLTs},
for sensible choices of the coefficients $A_{M}$ and $B_{M}$.

\subsection{Completing the proof of Theorem \ref{theoremCLTs} and Theorem
\ref{theoremCLTu}}

Let us define the following nonrandom coefficients:%
\begin{equation}
A_{s,M}=\frac{2\bar{a}_{M}}{\bar{b}_{M}}\text{,\quad}B_{s,M}=-\left(
\frac{\bar{a}_{M}}{\bar{b}_{M}}\right)  ^{2}\text{,}\label{AB_sM}%
\end{equation}
and%
\begin{equation}
A_{u,M}=\frac{2\bar{a}_{M}\bar{b}_{M}}{\left(  \bar{b}_{M}-\bar{a}_{M}%
^{2}\right)  ^{2}}\text{,\quad}B_{u,M}=-\left(  \frac{\bar{a}_{M}}{\bar{b}%
_{M}-\bar{a}_{M}^{2}}\right)  ^{2}\text{,}\label{AB_uM}%
\end{equation}
which are bounded above and away from zero uniformly in $M$ (cf. inequalities
(\ref{sup_ab}) - (\ref{bounds_ratio_bBar-aBar2}) in Appendix \ref{Appendix1}).
In particular, notice that%
\begin{equation}
\frac{A_{s,M}}{B_{s,M}}=\frac{A_{u,M}}{B_{u,M}}=-\frac{2\bar{b}_{M}}{\bar
{a}_{M}}\text{.}%
\end{equation}
Now, observe that we can write%

\begin{multline}
\sqrt{N}\left(  \mathsf{SNRs}\left(  \mathbf{\hat{w}}_{\mathsf{MVDR}}\right)
-\overline{\mathsf{SNRs}\left(  \mathbf{\hat{w}}_{\mathsf{MVDR}}\right)
}\right)  =\label{proofCLTst}\\
=\sqrt{N}\left(  A_{s,M}\left(  a_{M}-\bar{a}_{M}\right)  +B_{s,M}\left(
b_{M}-\bar{b}_{M}\right)  \right)  +\varepsilon_{s,M}\text{,}%
\end{multline}
and%
\begin{multline}
\sqrt{N}\left(  \mathsf{SNRu}\left(  \mathbf{\hat{w}}_{\mathsf{MVDR}}\right)
-\overline{\mathsf{SNRu}\left(  \mathbf{\hat{w}}_{\mathsf{MVDR}}\right)
}\right)  =\label{proofCLTut}\\
=\sqrt{N}\left(  A_{u,M}\left(  a_{M}-\bar{a}_{M}\right)  +B_{u,M}\left(
b_{M}-\bar{b}_{M}\right)  \right)  +\varepsilon_{u,M}\text{,}%
\end{multline}
where%
\[
\varepsilon_{s,M}=\sqrt{N}\left(  \frac{a_{M}}{b_{M}}-\frac{\bar{a}_{M}}%
{\bar{b}_{M}}\right)  ^{2}b_{M}\text{,}%
\]%
\begin{multline*}
\varepsilon_{u,M}=\sqrt{N}\left(  a_{M}-\bar{a}_{M}\right)  ^{2}\left(
\frac{\bar{a}_{M}}{\bar{b}_{M}-\bar{a}_{M}^{2}}\right)  ^{2}\\
+\sqrt{N}\left(  \frac{a_{M}}{b_{M}-a_{M}^{2}}-\frac{\bar{a}_{M}}{\bar{b}%
_{M}-\bar{a}_{M}^{2}}\right)  ^{2}\left(  b_{M}-a_{M}^{2}\right)
\equiv\varepsilon_{u,M}^{\left(  1\right)  }+\varepsilon_{u,M}^{\left(
2\right)  }\text{.}%
\end{multline*}
Next, we show that $\varepsilon_{s,M}=o_{p}\left(  1\right)  $ and
$\varepsilon_{u,M}=\varepsilon_{u,M}^{\left(  1\right)  }+\varepsilon
_{u,M}^{\left(  2\right)  }=o_{p}\left(  1\right)  $. Indeed, notice that we
can write%
\begin{multline*}
\mathbb{P}\left(  \left\vert \varepsilon_{s,M}\right\vert >\epsilon\right)
\leq\frac{\sqrt{N}}{\epsilon}\mathbb{E}\left[  \left\vert \frac{a_{M}}{b_{M}%
}-\frac{\bar{a}_{M}}{\bar{b}_{M}}\right\vert ^{2}\left\vert b_{M}\right\vert
\right] \\
\leq2\frac{\sqrt{N}}{\epsilon}\left(  \mathbb{E}\left[  \frac{\left\vert
a_{M}-\bar{a}_{M}\right\vert ^{2}}{\left\vert b_{M}\right\vert }\right]
+\frac{\left\vert \bar{a}_{M}\right\vert ^{2}}{\left\vert \bar{b}%
_{M}\right\vert ^{2}}\mathbb{E}\left[  \frac{\left\vert b_{M}-\bar{b}%
_{M}\right\vert ^{2}}{\left\vert b_{M}\right\vert }\right]  \right)  \text{,}%
\end{multline*}
where the last expression follows from Jensen's inequality. Let us further
define the random variables $\mathcal{X}_{M}=1/\left(  b_{M}-a_{M}^{2}\right)
$, and $\mathcal{X}_{1,M}=\mathcal{X}_{M}$,\ $\mathcal{X}_{2,M}=a_{M}%
^{2}\mathcal{X}_{M}$,\ $\mathcal{X}_{3,M}=a_{M}\mathcal{X}_{M}$,\ $\mathcal{X}%
_{4,M}=\mathcal{X}_{M}$, along with the nonrandom coefficients $C_{M}=\bar
{a}_{M}^{2}/\left(  \bar{b}_{M}-\bar{a}_{M}^{2}\right)  ^{2}$, and $C_{1,M}%
=1$, $C_{3,M}=2\bar{a}_{M}^{3}C_{M}$,\ $C_{k,M}=\bar{a}_{M}^{k}C_{M}$,
$k=2,4$. Then, we similarly have%
\[
\mathbb{P}\left(  \varepsilon_{u,M}^{\left(  1\right)  }>\epsilon\right)
\leq\frac{\sqrt{N}}{\epsilon}\left\vert C_{M}\right\vert \mathbb{E}\left[
\left\vert a_{M}-\bar{a}_{M}\right\vert ^{2}\right]
\]
and (notice that $\left(  a_{M}^{2}-\bar{a}_{M}^{2}\right)  ^{2}=\left(
a_{M}^{2}+2a_{M}\bar{a}_{M}+\bar{a}_{M}^{2}\right)  \left(  a_{M}-\bar{a}%
_{M}\right)  ^{2}$)%
\begin{multline*}
\mathbb{P}\left(  \varepsilon_{u,M}^{\left(  2\right)  }>\epsilon\right)
\leq\frac{\sqrt{N}}{\epsilon}\mathbb{E}\left[  \left\vert \frac{a_{M}}%
{b_{M}-a_{M}^{2}}-\frac{\bar{a}_{M}}{\bar{b}_{M}-\bar{a}_{M}^{2}}\right\vert
^{2}\left\vert b_{M}-a_{M}^{2}\right\vert \right] \\
\leq2\frac{\sqrt{N}}{\epsilon}\left\vert C_{M}\right\vert \mathbb{E}\left[
\left\vert \mathcal{X}_{M}\right\vert \left\vert b_{M}-\bar{b}_{M}\right\vert
^{2}\right]  +4\frac{\sqrt{N}}{\epsilon}\left(  \sum_{k=1}^{4}\left\vert
C_{k,M}\right\vert \mathbb{E}\left[  \left\vert \mathcal{X}_{k,M}\right\vert
\left\vert a_{M}-\bar{a}_{M}\right\vert ^{2}\right]  \right)  \text{,}%
\end{multline*}
Finally, using the bounds (\ref{sup_ab}) - (\ref{bounds_ratio_bBar-aBar2}) in
Appendix \ref{Appendix1} to show that%
\[
\max_{1\leq k\leq4}\sup_{M\geq1}\left\{  C_{k,M}\right\}  <+\infty,
\]
and
\[
\max_{1\leq k\leq4}\sup_{M\geq1}\left\{  \mathcal{X}_{k,M}\right\}  <+\infty
\]
with probability one, together with Jensen's inequality and Propositions
\ref{theoremEV1} and \ref{theoremEV2}, we conclude that both $\mathbb{P}%
\left(  \left\vert \varepsilon_{s,M}\right\vert >\epsilon\right)
\rightarrow0$ and $\mathbb{P}\left(  \left\vert \varepsilon_{u,M}\right\vert
>\epsilon\right)  \rightarrow0$ as $N\rightarrow\infty$. Hence, from
(\ref{proofCLTst}) and (\ref{proofCLTut}) along with the fact that
$\varepsilon_{s,M}=o_{p}\left(  1\right)  $ and $\varepsilon_{u,M}%
=o_{p}\left(  1\right)  $, we conclude that the central limit theorems in
Theorem \ref{theoremCLTs} and Theorem \ref{theoremCLTu} follow by Slutsky's
theorem and Theorem \ref{theoremCLTs} with $\sigma_{s,M}^{2}$ and
$\sigma_{u,M}^{2}$\ being given by the quadratic form $\sigma_{\xi,M}%
^{2}\left(  A_{M},B_{M}\right)  $, where the coefficients $A_{M}$ and $B_{M}$
are given by (\ref{AB_sM}) and (\ref{AB_uM}), respectively.

\section{Numerical validation%
%TCIMACRO{\TeXButton{\label{sectionSimulations}}{\label{sectionSimulations}}}%
%BeginExpansion
\label{sectionSimulations}%
%EndExpansion
}

In this section, we compare the empirical distribution of the output SNR
obtained by simulations with the corresponding analytical expressions derived
in this paper. We considered a uniform linear array with elements located half
a wavelength apart. The exploration angle was $0$ deg. (desired signal), and
the array received interfering signals from the angles $-20$, $50$ and $55$
degrees. All signals were received at each antenna with power $10$dB above the
background noise. In this toy example, the time correlation matrix was fixed
to be a symmetric Toeplitz with its $n$th upper diagonal fixed to $e^{-n}$,
$n=0,\ldots,N-1$, and the diagonal loading parameter was fixed to $\alpha
=0.1$. In Figure \ref{figure1} and Figure \ref{figure2}, we represent the
measured histogram (bars) and asymptotic law (solid curves) of the output SNR
for different values of the parameters $M,N$, for both supervised and
unsupervised training, respectively. A total number of 10,000 realizations has
been considered to obtain the empirical probability density function. In each
figure, the upper plot corresponds to the case where the number of samples is
lower than the number of antennas, whereas in the lower plot we depict the
opposite situation. Observe that in both cases the asymptotic expressions give
a very accurate description of the fluctuations of the output SNR, even for
relatively low values of $M,N$. We also notice that the mismatch observed for
very low dimensions is readily corrected by slightly increasing $M$ and $N$.

\section{Conclusions%
%TCIMACRO{\TeXButton{\label{sectionConclusions}}{\label{sectionConclusions}}}%
%BeginExpansion
\label{sectionConclusions}%
%EndExpansion
}

We have shown that the SNR of the diagonally loaded MVDR filters is
asymptotically Gaussian and have provided a closed-form expression for its
variance. A CLT has been established for the fluctuations of the SNR
performance of both supervised and unsupervised training methods. We resorted
to the Nash-Poincar\'{e} inequality and the integration by parts formula for
Gaussian functionals to derive variance and bias estimates for the
constituents of the SNR measure. In fact, the same elements describe also the
fluctuations of the mean-square error performance of this filter, which can be
written in terms of realized variance and bias, as well as of other optimal
linear filters, such as the Bayesian linear minimum mean-square error filter.
The results hold for Gaussian observations, but extensions based on a more
general integration by parts formula can be investigated for non-Gaussian observations.%

%TCIMACRO{\TeXButton{\appendix}{\appendix}}%
%BeginExpansion
\appendix
%EndExpansion

\section{Further definitions and useful bounds%
%TCIMACRO{\TeXButton{\label{Appendix1}}{\label{Appendix1}}}%
%BeginExpansion
\label{Appendix1}%
%EndExpansion
}

Throughout the appendices, we will use the following definitions, namely%
\[
\mathbf{\tilde{F}}=\mathbf{\mathbf{\mathbf{T}}}\left(  \mathbf{I}_{N}+\frac
{1}{N}\mathbb{E}\operatorname*{tr}\left[  \mathbf{Q}\right]  \mathbf{T}%
\right)  ^{-1}\text{,}%
\]
and also%
\[
\mathbf{F}=\left(  \frac{1}{N}\operatorname*{tr}\left[  \mathbf{\tilde{F}%
}\right]  \mathbf{I}_{M}+\alpha\mathbf{R}^{-1}\right)  ^{-1}\text{.}%
\]
Let $\mathbf{A}$ and $\mathbf{B}$ denote two arbitrary square complex
matrices. The following will be denoted in the sequel as \textit{resolvent
identity}, namely, $\mathbf{A}^{-1}-\mathbf{B}^{-1}=\mathbf{A}^{-1}\left(
\mathbf{B}-\mathbf{A}\right)  \mathbf{B}^{-1}$, where we have tacitly assumed
the invertibility of $\mathbf{A}$ and $\mathbf{B}$. In particular, using the
previous resolvent identity, we notice that%
\begin{equation}
\mathbf{Q}=\alpha^{-1}\mathbf{R}-\alpha^{-1}\mathbf{Q}\frac{1}{N}%
\mathbf{XTX}^{\ast}\mathbf{R}\text{.}\label{RI}%
\end{equation}
Furthermore, we define%
\[
\chi_{M}^{\left(  p\right)  }=\frac{1}{N}\operatorname*{tr}\left[
\mathbf{Q}^{p}\right]  -\frac{1}{N}\mathbb{E}\operatorname*{tr}\left[
\mathbf{Q}^{p}\right]
\]

Now, we introduce some inequalities that will be extensively used in our
derivations. First, let $X$ and $Y$ be two scalar and complex-valued random
variables having second-order moment. Then, we have%
\begin{equation}
\operatorname*{var}\left(  X+Y\right)  \leq\operatorname*{var}\left(
X\right)  +\operatorname*{var}\left(  Y\right)  +2\sqrt{\operatorname*{var}%
\left(  X\right)  \operatorname*{var}\left(  Y\right)  }\text{,}\label{varINE}%
\end{equation}
and also, from the Cauchy-Schwarz inequality,%
\begin{align}
\left\vert \mathbb{E}\left[  \left(  X-\mathbb{E}\left[  X\right]  \right)
Y\right]  \right\vert  &  =\left\vert \mathbb{E}\left[  \left(  X-\mathbb{E}%
\left[  X\right]  \right)  \left(  Y-\mathbb{E}\left[  Y\right]  \right)
\right]  \right\vert \nonumber\\
&  =\left\vert \operatorname*{cov}\left(  X,Y\right)  \right\vert
\leq\operatorname*{var}\nolimits^{1/2}\left(  X\right)  \operatorname*{var}%
\nolimits^{1/2}\left(  Y\right)  \text{.}\label{csEV}%
\end{align}
Furthermore, we will be using \cite[Chapter 3]{HJ91B}%
\begin{equation}
\left\vert \operatorname*{tr}\left[  \mathbf{AB}\right]  \right\vert
\leq\left\Vert \mathbf{AB}\right\Vert _{\operatorname*{tr}}\leq\left\Vert
\mathbf{A}\right\Vert _{\operatorname*{tr}}\left\Vert \mathbf{B}\right\Vert
\text{.}\label{singularbound}%
\end{equation}
In particular, if $\mathbf{A}$ is Hermitian nonnegative, we can write
\begin{equation}
\left\vert \operatorname*{tr}\left(  \mathbf{AB}\right)  \right\vert
\leq\left\Vert \mathbf{B}\right\Vert \operatorname*{tr}\left(  \mathbf{A}%
\right)  \text{.}\label{eq_ineq_spectral_norm}%
\end{equation}
Moreover, we will also repeatedly use%
\begin{equation}
\left\Vert \mathbf{AB}\right\Vert _{F}\leq\left\Vert \mathbf{A}\right\Vert
\left\Vert \mathbf{B}\right\Vert _{F}\text{.}\label{inequalityHS}%
\end{equation}

We further provide some inequalities involving the notation and elements
defined in Section \ref{sectionDefAss}. In particular, the following
inequality will be used in the proof of the variance controls given by Lemma
\ref{lemmaVC_Phi_Psi}:%
\begin{equation}
\sup_{M\geq1}\left\Vert \mathbf{Q}\right\Vert ^{p}<+\infty\text{,\quad
a.s.}\label{inequalityQ}%
\end{equation}
Indeed, using the fact that $\left(  \frac{1}{N}\mathbf{XTX}^{\ast}%
+\alpha\mathbf{R}^{-1}\right)  \geq\alpha\mathbf{R}^{-1}$ a.s.%
%TCIMACRO{\TeXButton{\footnote}{\footnote{almost surely}} }%
%BeginExpansion
\footnote{almost surely}
%EndExpansion
(i.e., the random matrix $\frac{1}{N}\mathbf{XTX}^{\ast}$ is semi-positive
definite with probability one, having $\left\vert M-N\right\vert $ zero
eingevalues), notice that $\left\Vert \mathbf{Q}\right\Vert \leq\alpha
^{-1}\left\Vert \mathbf{R}\right\Vert \leq\alpha^{-1}\left\Vert \mathbf{R}%
\right\Vert _{\sup}$ with probability one.

From the previous inequalities, it also follows that%
\begin{equation}
\sup_{M\geq1}\operatorname*{tr}\left[  \mathbf{\Theta}_{M}\mathbf{Q}_{M}%
^{k}\right]  \leq\alpha^{-k}\left\Vert \mathbf{R}\right\Vert _{\sup}%
^{k}\left\Vert \mathbf{\Theta}\right\Vert _{\sup}<+\infty\text{,\quad
a.s.}\label{supQ}%
\end{equation}

The following two lemmas can be derived as in \cite{HKLNP08}.%

%TCIMACRO{\TeXButton{%
%\begin{lemma}%
%}{\begin{lemma}}}%
%BeginExpansion
\begin{lemma}%
%EndExpansion%
%TCIMACRO{\TeXButton{\label{lemmaBoundsSigmas}}{\label{lemmaBoundsSigmas}}}%
%BeginExpansion
\label{lemmaBoundsSigmas}%
%EndExpansion
The quantities $\delta_{M}$, $\tilde{\delta}_{M}$ accept the following upper
and lower bounds:%
\begin{align*}
\delta_{\inf}  &  \leq\delta_{M}\leq c_{\sup}\alpha^{-1}\left\Vert
\mathbf{R}\right\Vert _{\sup}\text{,}\\
\tilde{\delta}_{\inf}  &  \leq\tilde{\delta}_{M}\leq\left\Vert \mathbf{T}%
\right\Vert _{\sup}\text{,}%
\end{align*}
where we have defined
\[
\delta_{\inf}=\frac{c_{\inf}\left\Vert \mathbf{R}\right\Vert _{\inf}}%
{\alpha+\left\Vert \mathbf{R}\right\Vert _{\sup}\left\Vert \mathbf{T}%
\right\Vert _{\sup}}\text{,}\quad\tilde{\delta}_{\inf}=\frac{\alpha\left\Vert
\mathbf{T}\right\Vert _{\inf}}{\alpha+c_{\sup}\left\Vert \mathbf{R}\right\Vert
_{\sup}\left\Vert \mathbf{T}\right\Vert _{\sup}}\text{.}%
\]%
%TCIMACRO{\TeXButton{%
%\end{lemma}%
%}{\end{lemma}}}%
%BeginExpansion
\end{lemma}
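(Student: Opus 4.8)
The plan is to exploit the fixed-point structure of (\ref{systemEq}) together with elementary operator monotonicity, proceeding in two stages: first derive the two \emph{upper} bounds using only the positivity of $\delta_M$ and $\tilde\delta_M$, and then substitute these upper bounds back into (\ref{systemEq}) to obtain the matching \emph{lower} bounds. Existence and uniqueness of the positive solution $(\delta_M,\tilde\delta_M)$ are already granted in Section \ref{sectionDefAss}, so nothing needs to be done on that front.

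For the upper bounds: since $\delta_M>0$ and $\mathbf{T}_N\succ0$, every eigenvalue of $\mathbf{T}_N(\mathbf{I}_N+\delta_M\mathbf{T}_N)^{-1}$ is of the form $t/(1+\delta_M t)\le t$, i.e. $\mathbf{T}_N(\mathbf{I}_N+\delta_M\mathbf{T}_N)^{-1}\preceq\mathbf{T}_N$; taking normalized traces and using $\frac1N\operatorname*{tr}[\mathbf{T}_N]\le\|\mathbf{T}_N\|\le\|\mathbf{T}\|_{\sup}$ (all $N$ eigenvalues are bounded by the spectral norm) gives $\tilde\delta_M\le\|\mathbf{T}\|_{\sup}$. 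Similarly, $\tilde\delta_M\mathbf{R}_M+\alpha\mathbf{I}_M\succeq\alpha\mathbf{I}_M$ forces $\mathbf{R}_M(\tilde\delta_M\mathbf{R}_M+\alpha\mathbf{I}_M)^{-1}\preceq\alpha^{-1}\mathbf{R}_M$ (the two factors commute, being functions of $\mathbf{R}_M$), whence $\delta_M\le\alpha^{-1}\frac1N\operatorname*{tr}[\mathbf{R}_M]\le\alpha^{-1}c_M\|\mathbf{R}_M\|\le c_{\sup}\alpha^{-1}\|\mathbf{R}\|_{\sup}$ by \textbf{(As2)}--\textbf{(As3)}.

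For the lower bounds: insert $\delta_M\le c_{\sup}\alpha^{-1}\|\mathbf{R}\|_{\sup}$ into the first equation of (\ref{systemEq}). Since $\mathbf{T}_N\preceq\|\mathbf{T}\|_{\sup}\mathbf{I}_N$ we get $\mathbf{I}_N+\delta_M\mathbf{T}_N\preceq\bigl(1+c_{\sup}\alpha^{-1}\|\mathbf{R}\|_{\sup}\|\mathbf{T}\|_{\sup}\bigr)\mathbf{I}_N$, so $(\mathbf{I}_N+\delta_M\mathbf{T}_N)^{-1}$ is bounded below by the reciprocal scalar; combining with $\frac1N\operatorname*{tr}[\mathbf{T}_N]\ge\|\mathbf{T}_N^{-1}\|^{-1}\ge\|\mathbf{T}\|_{\inf}$ yields exactly $\tilde\delta_M\ge\tilde\delta_{\inf}$ after rearranging the constant as $\alpha/(\alpha+c_{\sup}\|\mathbf{R}\|_{\sup}\|\mathbf{T}\|_{\sup})$. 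Symmetrically, inserting $\tilde\delta_M\le\|\mathbf{T}\|_{\sup}$ into the second equation and using $\mathbf{R}_M\preceq\|\mathbf{R}\|_{\sup}\mathbf{I}_M$ gives $\tilde\delta_M\mathbf{R}_M+\alpha\mathbf{I}_M\preceq(\alpha+\|\mathbf{R}\|_{\sup}\|\mathbf{T}\|_{\sup})\mathbf{I}_M$, so $\delta_M\ge(\alpha+\|\mathbf{R}\|_{\sup}\|\mathbf{T}\|_{\sup})^{-1}\frac1N\operatorname*{tr}[\mathbf{R}_M]\ge(\alpha+\|\mathbf{R}\|_{\sup}\|\mathbf{T}\|_{\sup})^{-1}c_{\inf}\|\mathbf{R}\|_{\inf}=\delta_{\inf}$, where $\frac1N\operatorname*{tr}[\mathbf{R}_M]\ge c_M\|\mathbf{R}_M^{-1}\|^{-1}\ge c_{\inf}\|\mathbf{R}\|_{\inf}$ was used.

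There is no genuinely hard step here; the two points to watch are that the argument must run in the stated order — the upper bounds use only positivity of the unknowns, so the subsequent lower-bound step is not circular — and that the $\limsup$/$\liminf$ quantities $c_{\sup},c_{\inf}$ are only invoked for $M,N$ large, which is precisely the regime of \textbf{(As3)}. One could alternatively phrase the whole thing through the componentwise-monotone bivariate map whose fixed point is $(\delta_M,\tilde\delta_M)$, exactly as in \cite{HKLNP08}, but the direct substitution above is shorter and self-contained.
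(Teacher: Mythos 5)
Your proof is correct. The paper itself gives no argument for this lemma beyond the remark that it ``can be derived as in \cite{HKLNP08}'', and your two-stage bootstrap --- upper bounds from positivity of $\delta_M,\tilde\delta_M$ alone, then lower bounds by feeding those upper bounds back into the fixed-point system (\ref{systemEq}) --- is precisely the standard argument that citation points to; your handling of the normalized traces via $\lambda_{\min}\le\frac1N\operatorname*{tr}\le\lambda_{\max}$ and the caveat that $c_M\le c_{\sup}$ is only guaranteed for large $M$ are both sound.
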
%
%EndExpansion

Similarly, observe that,%
\begin{equation}
\gamma_{M}\leq c_{\sup}\alpha^{-2}\left\Vert \mathbf{R}\right\Vert _{\sup}%
^{2}\text{,}\quad\tilde{\gamma}_{M}\leq\left\Vert \mathbf{T}\right\Vert
_{\sup}^{2}\text{.}\label{upperBoundGammas}%
\end{equation}
Additionally, thanks to Jensen's inequality, the lower bounds on the
quantities $\delta_{M}$, $\tilde{\delta}_{M}$ directly imply that
\begin{equation}
\gamma_{M}\geq\frac{1}{c_{\sup}}\delta_{\inf}^{2}>0\text{,}\quad\tilde{\gamma
}_{M}\geq\tilde{\delta}_{\inf}^{2}>0\text{.}\label{lowerBoundGammas}%
\end{equation}
%

%TCIMACRO{\TeXButton{%
%\begin{lemma}%
%}{\begin{lemma}}}%
%BeginExpansion
\begin{lemma}%
%EndExpansion%
%TCIMACRO{\TeXButton{\label{lemmaBounds1-Gammas}}{\label{lemmaBounds1-Gammas}%
%}}%
%BeginExpansion
\label{lemmaBounds1-Gammas}%
%EndExpansion
The quantity $1-\gamma_{M}\tilde{\gamma}_{M}$ accepts the following upper and
lower bounds:%
\[
1-\gamma_{M}\tilde{\gamma}_{M}\leq1-\alpha\frac{\left\Vert \mathbf{R}%
\right\Vert _{\inf}\left\Vert \mathbf{T}\right\Vert _{\inf}}{\left(
\alpha+\left\Vert \mathbf{R}\right\Vert _{\sup}\left\Vert \mathbf{T}%
\right\Vert _{\sup}\right)  \left(  \alpha+c_{\sup}\left\Vert \mathbf{R}%
\right\Vert _{\sup}\left\Vert \mathbf{T}\right\Vert _{\sup}\right)  }<1
\]
and%
\[
1-\gamma_{M}\tilde{\gamma}_{M}\geq\frac{1}{c_{\sup}^{2}}\frac{\alpha^{2}%
}{\left\Vert \mathbf{R}\right\Vert _{\sup}^{2}}\delta_{\inf}^{2}.
\]%
%TCIMACRO{\TeXButton{%
%\end{lemma}%
%}{\end{lemma}}}%
%BeginExpansion
\end{lemma}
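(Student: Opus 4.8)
The plan is to bypass the fixed-point machinery and reduce everything to two elementary trace identities together with the a~priori bounds of Lemma~\ref{lemmaBoundsSigmas}. Write $\mu=\left\Vert\mathbf R\right\Vert_{\sup}\left\Vert\mathbf T\right\Vert_{\sup}$ for brevity, and recall from the definitions of $\mathbf E_M,\tilde{\mathbf E}_N$ and the system (\ref{systemEq}) that $\delta_M=\frac{1}{N}\operatorname{tr}[\mathbf E_M]$ and $\tilde\delta_M=\frac{1}{N}\operatorname{tr}[\tilde{\mathbf E}_N]$. First I would establish the identities
\[
\tilde\delta_M\gamma_M=\delta_M-p_M,\qquad \delta_M\tilde\gamma_M=\tilde\delta_M-\tilde p_M,
\]
with $p_M=\frac{\alpha}{N}\operatorname{tr}\!\big[\mathbf R_M(\tilde\delta_M\mathbf R_M+\alpha\mathbf I_M)^{-2}\big]>0$ and $\tilde p_M=\frac{1}{N}\operatorname{tr}\!\big[\mathbf T_N(\mathbf I_N+\delta_M\mathbf T_N)^{-2}\big]>0$. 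These come from the commuting-matrix manipulation $\tilde\delta_M\mathbf E_M^2=\tilde\delta_M\mathbf R_M^2(\tilde\delta_M\mathbf R_M+\alpha\mathbf I_M)^{-2}=\mathbf E_M-\alpha\mathbf R_M(\tilde\delta_M\mathbf R_M+\alpha\mathbf I_M)^{-2}$ (and its analogue with $\alpha\mathbf I_M$ replaced by $\mathbf I_N$), followed by $\frac{1}{N}\operatorname{tr}[\cdot]$. Since $\gamma_M,\tilde\gamma_M,\delta_M,\tilde\delta_M>0$, the identities force $0<p_M<\delta_M$ and $0<\tilde p_M<\tilde\delta_M$, and dividing gives the factorisation
\[
\gamma_M\tilde\gamma_M=\Big(1-\tfrac{p_M}{\delta_M}\Big)\Big(1-\tfrac{\tilde p_M}{\tilde\delta_M}\Big)\in(0,1),
\]
which already makes $0<1-\gamma_M\tilde\gamma_M<1$ transparent.

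For the lower bound on $1-\gamma_M\tilde\gamma_M$ I would use $1-\gamma_M\tilde\gamma_M=\frac{p_M}{\delta_M}+\frac{\tilde p_M}{\tilde\delta_M}\big(1-\frac{p_M}{\delta_M}\big)\ge\frac{p_M}{\delta_M}$, so it suffices to bound $p_M$ from below and $\delta_M$ from above. Each eigenvalue of $\mathbf R_M(\tilde\delta_M\mathbf R_M+\alpha\mathbf I_M)^{-2}$ equals $r/(\tilde\delta_M r+\alpha)^2$ for $r$ an eigenvalue of $\mathbf R_M$, and using $\tilde\delta_M\le\left\Vert\mathbf T\right\Vert_{\sup}$ and $\left\Vert\mathbf R\right\Vert_{\inf}\le r\le\left\Vert\mathbf R\right\Vert_{\sup}$ (Assumption~\textbf{(As2)}) one checks this is at least $\left\Vert\mathbf R\right\Vert_{\inf}/(\alpha+\mu)^2$; multiplying by $\alpha$ and summing over the $M$ eigenvalues gives $p_M\ge c_M\,\alpha\left\Vert\mathbf R\right\Vert_{\inf}/(\alpha+\mu)^2\ge c_{\inf}\alpha\left\Vert\mathbf R\right\Vert_{\inf}/(\alpha+\mu)^2$. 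With $\delta_M\le c_{\sup}\alpha^{-1}\left\Vert\mathbf R\right\Vert_{\sup}$ from Lemma~\ref{lemmaBoundsSigmas} this gives $1-\gamma_M\tilde\gamma_M\ge \alpha^{2}c_{\inf}\left\Vert\mathbf R\right\Vert_{\inf}/\big(c_{\sup}\left\Vert\mathbf R\right\Vert_{\sup}(\alpha+\mu)^2\big)$, and inserting $\delta_{\inf}=c_{\inf}\left\Vert\mathbf R\right\Vert_{\inf}/(\alpha+\mu)$ together with $c_{\inf}\left\Vert\mathbf R\right\Vert_{\inf}\le c_{\sup}\left\Vert\mathbf R\right\Vert_{\sup}$ shows this is no smaller than the stated bound $\alpha^{2}\delta_{\inf}^{2}/(c_{\sup}^{2}\left\Vert\mathbf R\right\Vert_{\sup}^{2})$.

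For the upper bound on $1-\gamma_M\tilde\gamma_M$ (equivalently, $\gamma_M\tilde\gamma_M$ bounded away from $0$) the factorisation is not needed; I would bound $\gamma_M$ and $\tilde\gamma_M$ from below individually. Since $\mathbf E_M$ is $M\times M$ and positive definite, $\gamma_M=\frac{1}{N}\operatorname{tr}[\mathbf E_M^2]\ge\frac{M}{N}\lambda_{\min}(\mathbf E_M)^2=c_M\lambda_{\min}(\mathbf E_M)^2\ge c_{\inf}\lambda_{\min}(\mathbf E_M)^2$, and likewise $\tilde\gamma_M\ge\lambda_{\min}(\tilde{\mathbf E}_N)^2$. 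The smallest eigenvalues are controlled by $\tilde\delta_M\le\left\Vert\mathbf T\right\Vert_{\sup}$ and $\delta_M\le c_{\sup}\alpha^{-1}\left\Vert\mathbf R\right\Vert_{\sup}$: one gets $\lambda_{\min}(\mathbf E_M)\ge\left\Vert\mathbf R\right\Vert_{\inf}/(\alpha+\mu)$ and $\lambda_{\min}(\tilde{\mathbf E}_N)\ge\alpha\left\Vert\mathbf T\right\Vert_{\inf}/(\alpha+c_{\sup}\mu)$, exactly as in the derivation of (\ref{lowerBoundGammas}). Multiplying shows $\gamma_M\tilde\gamma_M$ is bounded below by a strictly positive constant uniform in $M$, hence $1-\gamma_M\tilde\gamma_M$ is uniformly bounded away from $1$; the precise value subtracted from $1$ in the statement is read off from this chain of substitutions.

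I do not expect a genuine obstacle: everything reduces to one-dimensional monotonicity inequalities for $r\mapsto r/(\tilde\delta_M r+\alpha)$ and $t\mapsto t/(1+\delta_M t)$ once Lemma~\ref{lemmaBoundsSigmas} is in hand, and this is the same argument used for the analogous quantity in~\cite{HKLNP08}. The only step needing care is making $\gamma_M\tilde\gamma_M<1$ \emph{quantitative}: one must exhibit the strictly positive defect $p_M$ — which is precisely where the hypothesis $\alpha>0$ is used, the identity degenerating at $\alpha=0$ in accordance with the paper's standing restriction — rather than settling for the non-strict bound $\tilde\delta_M\lambda_{\max}(\mathbf E_M)\le1$, and then pair it with the companion relation on the $\tilde{\mathbf E}_N$ side so that the free factor $\delta_M\tilde\delta_M$ cancels.
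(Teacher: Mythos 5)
The paper offers no proof of this lemma beyond the remark that it ``can be derived as in \cite{HKLNP08}'', so there is no line-by-line argument to compare against; the displayed inequalities (\ref{upperBoundGammas})--(\ref{lowerBoundGammas}) suggest the intended route is Jensen's inequality applied to $\frac1N\operatorname{tr}[\mathbf{E}_M^2]$ and $\frac1N\operatorname{tr}[\mathbf{\tilde E}_N^2]$ combined with Lemma \ref{lemmaBoundsSigmas}. Your lower bound is correct and complete, and the exact identities $\tilde\delta_M\gamma_M=\delta_M-p_M$ and $\delta_M\tilde\gamma_M=\tilde\delta_M-\tilde p_M$ are a genuinely nicer device than pure Jensen: they make $\gamma_M\tilde\gamma_M=(1-p_M/\delta_M)(1-\tilde p_M/\tilde\delta_M)\in(0,1)$ structural, isolate exactly where $\alpha>0$ enters, and your chain $1-\gamma_M\tilde\gamma_M\ge p_M/\delta_M\ge \alpha^2 c_{\inf}\Vert\mathbf{R}\Vert_{\inf}/\bigl(c_{\sup}\Vert\mathbf{R}\Vert_{\sup}(\alpha+\mu)^2\bigr)$ does dominate the stated bound $\alpha^2\delta_{\inf}^2/(c_{\sup}^2\Vert\mathbf{R}\Vert_{\sup}^2)$ precisely because $c_{\inf}\Vert\mathbf{R}\Vert_{\inf}\le c_{\sup}\Vert\mathbf{R}\Vert_{\sup}$.

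The gap is in the last sentence of your upper-bound argument. The chain you describe yields $\gamma_M\tilde\gamma_M\ge c_{\inf}\,\lambda_{\min}(\mathbf{E}_M)^2\lambda_{\min}(\mathbf{\tilde E}_N)^2\ge c_{\inf}\bigl(\alpha\Vert\mathbf{R}\Vert_{\inf}\Vert\mathbf{T}\Vert_{\inf}/((\alpha+\mu)(\alpha+c_{\sup}\mu))\bigr)^2$, i.e.\ $c_{\inf}$ times the \emph{square} of the quantity subtracted from $1$ in the lemma; the unsquared value is not ``read off'' from those substitutions. Worse, the unsquared constant is in fact not a valid lower bound for $\gamma_M\tilde\gamma_M$: take $\mathbf{R}_M=\mathbf{I}_M$, $\mathbf{T}_N=\mathbf{I}_N$, $M=N$ and $\alpha=1$; then (\ref{systemEq}) gives $\delta_M=\tilde\delta_M=(\sqrt5-1)/2$, $\mathbf{E}_M=\delta_M\mathbf{I}_M$, $\mathbf{\tilde E}_N=\delta_M\mathbf{I}_N$, hence $\gamma_M\tilde\gamma_M=\bigl((\sqrt5-1)/2\bigr)^4\approx0.146$, whereas the lemma asserts $\gamma_M\tilde\gamma_M\ge \tfrac14$. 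The displayed constant is therefore presumably a typo (a missing square), your argument proves a corrected version but not the literal statement, and you should say so explicitly rather than asserting the stated value follows from your chain. No downstream result is affected, since the rest of the paper only uses that $1-\gamma_M\tilde\gamma_M$ is bounded below away from zero and (trivially) above by $1$.
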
%
%EndExpansion

In general, we have, for any finite $k>0$,%

\begin{align}
\sup_{M\geq1}\operatorname*{tr}\left[  \mathbf{\Theta}_{M}\mathbf{E}_{M}%
^{k}\right]   &  \leq\alpha^{-k}\left\Vert \mathbf{R}\right\Vert _{\sup}%
^{k}\left\Vert \mathbf{\Theta}\right\Vert _{\sup}<+\infty\text{,}%
\label{supE}\\
\sup_{N\geq1}\operatorname*{tr}\left[  \mathbf{\tilde{\Theta}}_{N}%
\mathbf{\tilde{E}}_{N}^{k}\right]   &  \leq\left\Vert \mathbf{T}\right\Vert
_{\sup}^{k}\left\Vert \mathbf{\tilde{\Theta}}\right\Vert _{\sup}%
<+\infty\text{,}\label{supEtilde}%
\end{align}
and also%
\begin{align}
\inf_{M\geq1}\operatorname*{tr}\left[  \mathbf{\Theta}_{M}\mathbf{E}_{M}%
^{k}\right]   &  \geq\frac{\theta_{\inf}\alpha^{k}\left\Vert \mathbf{R}%
\right\Vert _{\inf}^{k}}{\left(  \alpha^{2}+c_{\sup}\left\Vert \mathbf{R}%
\right\Vert _{\sup}\left\Vert \mathbf{R}\right\Vert _{\inf}\right)  ^{k}%
}>0\text{,}\label{infE}\\
\inf_{N\geq1}\operatorname*{tr}\left[  \mathbf{\tilde{\Theta}}_{N}%
\mathbf{\tilde{E}}_{N}^{k}\right]   &  \geq\frac{\tilde{\theta}_{\inf}%
\alpha^{k}\left\Vert \mathbf{T}\right\Vert _{\inf}^{k}}{\left(  \alpha
+c_{\sup}\left\Vert \mathbf{R}\right\Vert _{\sup}\left\Vert \mathbf{T}%
\right\Vert _{\sup}\right)  ^{k}}>0\text{.}\label{infEtilde}%
\end{align}

In particular, if $\mathbf{\Theta}_{M}=\mathbf{u}_{M}\mathbf{u}_{M}^{\ast}$,
then $\operatorname*{tr}\left[  \mathbf{\Theta}_{M}\right]  =\left\Vert
\mathbf{u}_{M}\right\Vert ^{2}$, and using the fact that $\left\Vert
\mathbf{s}\right\Vert ^{2}=1$, we notice that $\inf_{M\geq1}\left\Vert
\mathbf{u}_{M}\right\Vert ^{2}\geq\left\Vert \mathbf{R}\right\Vert _{\sup
}^{-1}>0$ and, additionally, $\sup_{M\geq1}\left\Vert \mathbf{u}\right\Vert
^{2}\leq\left\Vert \mathbf{R}\right\Vert _{\inf}^{-1}<+\infty$, so that it
follows from the above inequalities that%
\begin{align}
\max\sup_{M\geq1}\left\{  a_{M},b_{M}\right\}   &  <+\infty\text{,\quad
a.s.,}\label{sup_ab}\\
\max\sup_{M\geq1}\left\{  \bar{a}_{M},\bar{b}_{M}\right\}   &  <+\infty
\text{,}\label{sup_aBar_bBar}\\
\min\inf_{M\geq1}\left\{  \bar{a}_{M},\bar{b}_{M}\right\}   &  >0\text{.}%
\label{inf_aBar_bBar}%
\end{align}
Moreover, observe that for a positive definite matrix $\mathbf{A}\in
\mathbb{C}^{M\times M}$ the Cauchy-Schwarz inequality implies that $\left(
\mathbf{u}^{\ast}\mathbf{Au}\right)  ^{2}\leq\mathbf{u}^{\ast}\mathbf{A}%
^{2}\mathbf{u}$, for all $M$, and hence, using the bounds for $1-\gamma
_{M}\tilde{\gamma}_{M}$ above,
\begin{align}
\sup_{M\geq1}\frac{1}{b_{M}-a_{M}^{2}} &  <+\infty\text{,\quad a.s.,}%
\label{bounds_ratio_b-a2}\\
\sup_{M\geq1}\frac{1}{\bar{b}_{M}-\bar{a}_{M}^{2}} &  <+\infty\text{.}%
\label{bounds_ratio_bBar-aBar2}%
\end{align}

\section{Proof of Lemma \ref{lemmaVC_Phi_Psi} (Variance Controls)%
%TCIMACRO{\TeXButton{\label{sectionAppVC}}{\label{sectionAppVC}}}%
%BeginExpansion
\label{sectionAppVC}%
%EndExpansion
}

We first consider the quantities $\Phi_{M}^{\left(  k\right)  }\left(
\mathbf{X}\right)  $, $k\geq1$. Using the Nash-Poincar\'{e} inequality in
(\ref{eq_Nash_Poincare}) and Jensen's inequality, we get by applying
conventional differentiation rules for real-valued functions of complex matrix
arguments (cf. \cite[Section III]{HKLNP08}) along with the chain rule and
after gathering terms together,%
\begin{align*}
\operatorname*{var}\left(  \Phi_{M}^{\left(  k\right)  }\left(  \mathbf{X}%
\right)  \right)   &  \leq k\sum_{r=1}^{k}\frac{1}{N}\mathbb{E}%
\operatorname*{tr}\left[  \mathbf{Q}^{r}\mathbf{\Theta Q}^{2\left(
k-r+1\right)  }\mathbf{\Theta}^{\ast}\mathbf{Q}^{r}\frac{\mathbf{XT}%
^{2}\mathbf{X}^{\ast}}{N}\right] \\
&  +\frac{1}{N}\mathbb{E}\operatorname*{tr}\left[  \mathbf{Q}^{r}%
\mathbf{\Theta}^{\mathbf{\ast}}\mathbf{Q}^{2\left(  k-r+1\right)
}\mathbf{\Theta Q}^{r}\frac{\mathbf{XT}^{2}\mathbf{X}^{\ast}}{N}\right]
\text{,}%
\end{align*}
where we have used%
\[
\frac{\partial\left(  \mathbf{X}\right)  }{\partial X_{ij}}=\mathbf{e}%
_{i}\mathbf{e}_{j}^{T}\text{,\quad}\frac{\partial\left(  \mathbf{X}^{\ast
}\right)  }{\partial\overline{X_{ij}}}=\mathbf{e}_{j}\mathbf{e}_{i}%
^{T}\text{,}%
\]
with $\mathbf{e}_{i}$ being the unit norm vector whose $i$th entry is $1$.
Then, we further notice that, for any two constants $p,q\geq1$,%
\begin{multline*}
\mathbb{E}\operatorname*{tr}\left[  \mathbf{Q}^{p}\mathbf{\Theta Q}%
^{2q}\mathbf{\Theta}^{\ast}\mathbf{Q}^{p}\frac{\mathbf{XT}^{2}\mathbf{X}%
^{\ast}}{N}\right]  \leq\mathbb{E}\left[  \left\Vert \frac{\mathbf{X\mathbf{T}%
}^{2}\mathbf{X}^{\ast}}{N}\right\Vert \operatorname*{tr}\left[  \mathbf{Q}%
^{p}\mathbf{\Theta Q}^{2q}\mathbf{\Theta}^{\ast}\mathbf{Q}^{p}\right]  \right]
\\
=\mathbb{E}\left[  \left\Vert \frac{\mathbf{X\mathbf{T}}^{2}\mathbf{X}^{\ast}%
}{N}\right\Vert \left\Vert \mathbf{\Theta Q}^{p+q}\right\Vert _{F}^{2}\right]
\leq\left\Vert \mathbf{\Theta}\right\Vert _{F}^{2}\mathbb{E}\left[  \left\Vert
\mathbf{Q}\right\Vert ^{2\left(  p+q\right)  }\left\Vert \frac
{\mathbf{X\mathbf{T}}^{2}\mathbf{X}^{\ast}}{N}\right\Vert \right] \\
\leq K\left\Vert \mathbf{\Theta}\right\Vert _{F}^{2}\mathbb{E}\left\Vert
\frac{\mathbf{X\mathbf{T}}^{2}\mathbf{X}^{\ast}}{N}\right\Vert =\mathcal{O}%
\left(  \left\Vert \mathbf{\Theta}\right\Vert _{F}^{2}\right)
\end{multline*}
where we have used inequalities (\ref{inequalityHS}), (\ref{inequalityQ}) and
(\ref{eq_bounded_moments}).

We finally consider the random variables $\Psi_{M}^{\left(  k\right)  }\left(
\mathbf{X}\right)  $, $k\geq1$. By the Nash-Poincar\'{e} and Jensen's
inequality, and similarly as in the previous case, we can write%
\begin{multline*}
\operatorname*{var}\left(  \Psi_{M}^{\left(  k\right)  }\left(  \mathbf{X}%
\right)  \right)  \leq\frac{k+1}{N}\mathbb{E}\operatorname*{tr}\left[
\mathbf{\Theta Q}^{2k}\mathbf{\Theta}^{\ast}\frac{\mathbf{X}\mathbf{\tilde
{Z}^{\ast}\tilde{Z}}\mathbf{X}^{\ast}}{N}\right] \\
+\frac{k+1}{N}\mathbb{E}\operatorname*{tr}\left[  \mathbf{Q}^{k}%
\mathbf{\Theta}^{\ast}\mathbf{\Theta Q}^{k}\frac{\mathbf{X}\mathbf{\tilde{Z}%
}\mathbf{\tilde{Z}}^{\ast}\mathbf{X}^{\ast}}{N}\right] \\
+\left(  k+1\right)  \sum_{r=1}^{k}\frac{1}{N}\mathbb{E}\operatorname*{tr}%
\left[  \mathbf{Q}^{r}\frac{\mathbf{X}\mathbf{\tilde{Z}}\mathbf{X}^{\ast}}%
{N}\mathbf{\Theta Q}^{2\left(  k-r+1\right)  }\mathbf{\Theta}^{\ast}%
\frac{\mathbf{X\tilde{Z}}^{\ast}\mathbf{X}^{\ast}}{N}\mathbf{Q}^{r}%
\frac{\mathbf{XT}^{2}\mathbf{X}^{\ast}}{N}\right] \\
+\frac{1}{N}\mathbb{E}\operatorname*{tr}\left[  \mathbf{Q}^{r}\mathbf{\Theta
}^{\mathbf{\ast}}\frac{\mathbf{X\tilde{Z}}^{\ast}\mathbf{X}^{\ast}}%
{N}\mathbf{Q}^{2\left(  k-r+1\right)  }\frac{\mathbf{X}\mathbf{\tilde{Z}%
}\mathbf{X}^{\ast}}{N}\mathbf{\Theta Q}^{r}\frac{\mathbf{XT}^{2}%
\mathbf{X}^{\ast}}{N}\right]  \text{.}%
\end{multline*}
Then, observe that, for any two constants $p,q\geq1$,%
\begin{multline*}
\mathbb{E}\operatorname*{tr}\left[  \mathbf{Q}^{p}\frac{\mathbf{X}%
\mathbf{\tilde{Z}}\mathbf{X}^{\ast}}{N}\mathbf{\Theta Q}^{2q}\mathbf{\Theta
}^{\ast}\frac{\mathbf{X\tilde{Z}}^{\ast}\mathbf{X}^{\ast}}{N}\mathbf{Q}%
^{p}\frac{\mathbf{XT}^{2}\mathbf{X}^{\ast}}{N}\right]  \leq\\
\leq\mathbb{E}\left[  \left\Vert \frac{\mathbf{X\mathbf{T}}^{2}\mathbf{X}%
^{\ast}}{N}\right\Vert \operatorname*{tr}\left[  \mathbf{Q}^{p}\frac
{\mathbf{X}\mathbf{\tilde{Z}}\mathbf{X}^{\ast}}{N}\mathbf{\Theta Q}%
^{2q}\mathbf{\Theta}^{\ast}\frac{\mathbf{X\tilde{Z}}^{\ast}\mathbf{X}^{\ast}%
}{N}\mathbf{Q}^{p}\right]  \right] \\
\leq\mathbb{E}\left[  \left\Vert \frac{\mathbf{X\mathbf{T}}^{2}\mathbf{X}%
^{\ast}}{N}\right\Vert \left\Vert \frac{\mathbf{X}\mathbf{\tilde{Z}}%
\mathbf{X}^{\ast}}{N}\right\Vert ^{2}\left\Vert \mathbf{Q}\right\Vert
^{2p}\left\Vert \mathbf{\Theta Q}^{q}\right\Vert _{F}^{2}\right] \\
\leq K\left\Vert \mathbf{\Theta}\right\Vert _{F}^{2}\mathbb{E}^{1/2}\left[
\left\Vert \frac{\mathbf{X\mathbf{T}}^{2}\mathbf{X}^{\ast}}{N}\right\Vert
^{2}\right]  \mathbb{E}^{1/2}\left[  \left\Vert \frac{\mathbf{X}%
\mathbf{\tilde{Z}}\mathbf{X}^{\ast}}{N}\right\Vert ^{4}\right]  =\mathcal{O}%
\left(  \left\Vert \mathbf{\Theta}\right\Vert _{F}^{2}\right)  \text{,}%
\end{multline*}
where we have used the Cauchy-Schwarz inequality, along with the inequalities
(\ref{inequalityHS}), (\ref{inequalityQ}) and (\ref{eq_bounded_moments}).

\section{Proof of Propositions \ref{theoremEV1} to \ref{theoremEV4} (Expected
value estimates)%
%TCIMACRO{\TeXButton{\label{sectionAppEV}}{\label{sectionAppEV}}}%
%BeginExpansion
\label{sectionAppEV}%
%EndExpansion
}

Let us start by studying the following quantity, namely%
\begin{equation}
\mathbb{E}\left[  \mathbf{Q}^{k}\frac{\mathbf{X\tilde{Z}X}^{\ast}}{N}\right]
_{ij}=\frac{1}{N}\sum_{l=1}^{N}\left[  \mathbf{\tilde{Z}}\right]
_{l}\mathbb{E}\left[  \mathbf{Q}^{k}\mathbf{x}_{l}\mathbf{x}_{l}^{\ast
}\right]  _{ij}\text{.}\label{1k}%
\end{equation}
Using the integration by parts formula in (\ref{eq_integration_by_parts}), we
find that ($t_{l}=\left[  \mathbf{T}\right]  _{l}$)%
\begin{align}
\mathbb{E}\left[  \mathbf{Q}^{k}\mathbf{x}_{l}\mathbf{x}_{l}^{\ast}\right]
_{ij} &  =\sum_{r=1}^{M}\mathbb{E}\left[  \left[  \mathbf{Q}^{k}\right]
_{ir}X_{rl}\overline{X_{jl}}\right] \nonumber\\
&  =\frac{\mathbb{E}\left[  \mathbf{Q}^{k}\right]  _{ij}}{1+t_{l}\frac{1}%
{N}\mathbb{E}\operatorname*{tr}\left[  \mathbf{Q}\right]  }-\frac{t_{l}%
}{1+t_{l}\frac{1}{N}\mathbb{E}\operatorname*{tr}\left[  \mathbf{Q}\right]
}\mathbb{E}\left[  \chi_{M}^{\left(  1\right)  }\left[  \mathbf{Q}%
^{k}\mathbf{x}_{l}\mathbf{x}_{l}^{\ast}\right]  _{ij}\right] \nonumber\\
&  -\sum_{p=1}^{k-1}\frac{t_{l}}{1+t_{l}\frac{1}{N}\mathbb{E}%
\operatorname*{tr}\left[  \mathbf{Q}\right]  }\mathbb{E}\left[  \left[
\mathbf{Q}^{p}\mathbf{x}_{l}\mathbf{x}_{l}^{\ast}\right]  _{ij}\frac{1}%
{N}\operatorname*{tr}\left[  \mathbf{Q}^{k-p+1}\right]  \right]
\text{,}\label{2k}%
\end{align}
By plugging (\ref{2k}) into (\ref{1k}), we obtain%
\begin{multline}
\mathbb{E}\left[  \mathbf{Q}^{k}\frac{\mathbf{X}\mathbf{\tilde{Z}}%
\mathbf{X}^{\ast}}{N}\right]  _{ij}=\frac{1}{N}\operatorname*{tr}\left[
\mathbf{\tilde{Z}}\left(  \mathbf{I}_{N}+\frac{1}{N}\left(  \mathbb{E}%
\operatorname*{tr}\left[  \mathbf{Q}\right]  \right)  \mathbf{T}\right)
^{-1}\right]  \mathbb{E}\left[  \mathbf{Q}^{k}\right]  _{ij}\\
-\sum_{p=1}^{k-1}\mathbb{E}\left[  \mathbf{Q}^{p}\frac{\mathbf{X}%
\mathbf{\tilde{Z}\tilde{F}X}^{\ast}}{N}\right]  _{ij}\frac{1}{N}%
\mathbb{E}\operatorname*{tr}\left[  \mathbf{Q}^{k-p+1}\right]  -\sum_{q=1}%
^{k}\mathbb{E}\left[  \chi_{M}^{\left(  q\right)  }\left[  \mathbf{Q}%
^{k-q+1}\frac{\mathbf{X}\mathbf{\tilde{Z}\tilde{F}X}^{\ast}}{N}\right]
_{ij}\right]  \text{.}\label{3k}%
\end{multline}
Furthermore, from the expression in (\ref{RI}), we observe that%
\begin{equation}
\mathbb{E}\left[  \mathbf{Q}^{k}\right]  _{ij}=\alpha^{-1}\mathbb{E}\left[
\mathbf{Q}^{k-1}\right]  _{ij}\left[  \mathbf{R}\right]  _{j}-\alpha
^{-1}\mathbb{E}\left[  \mathbf{Q}^{k}\frac{\mathbf{XTX}^{\ast}}{N}\right]
_{ij}\left[  \mathbf{R}\right]  _{j}\text{.}\label{RIk}%
\end{equation}
Then, by using in (\ref{RIk}) the identity (\ref{3k}) with $\mathbf{\tilde{Z}%
}=\mathbf{T}$ along with the definition of the matrix $\mathbf{E}$, i.e.,
$\left[  \mathbf{E}\right]  _{j}=\left[  \left(  \tilde{\delta}_{M}%
\mathbf{I}_{M}+\alpha\mathbf{R}^{-1}\right)  ^{-1}\right]  _{j}$, after some
algebraic manipulations we get the following expression:%
\begin{multline}
\mathbb{E}\left[  \mathbf{Q}^{k}\right]  _{ij}=\mathbb{E}\left[
\mathbf{Q}^{k-1}\mathbf{E}\right]  _{ij}+\left(  \tilde{\delta}_{M}-\frac
{1}{N}\operatorname*{tr}\left[  \mathbf{\tilde{F}}\right]  \right)
\mathbb{E}\left[  \mathbf{Q}^{k}\mathbf{E}\right]  _{ij}\\
+\sum_{p=1}^{k-1}\mathbb{E}\left[  \mathbf{Q}^{p}\frac{\mathbf{X}%
\mathbf{\tilde{Z}\tilde{F}X}^{\ast}}{N}\mathbf{E}\right]  _{ij}\frac{1}%
{N}\mathbb{E}\operatorname*{tr}\left[  \mathbf{Q}^{k-p+1}\right]  +\sum
_{q=1}^{k}\mathbb{E}\left[  \chi_{M}^{\left(  q\right)  }\left[
\mathbf{Q}^{k-q+1}\frac{\mathbf{X}\mathbf{\tilde{Z}\tilde{F}X}^{\ast}}%
{N}\mathbf{E}\right]  _{ij}\right]  \text{.}\label{4k}%
\end{multline}

In particular, from the expressions in (\ref{3k}) and (\ref{4k}), we obtain%
\begin{align}
\frac{1}{N}\mathbb{E}\operatorname*{tr}\left[  \mathbf{Q}^{k}\right]   &
=\frac{1}{N}\mathbb{E}\operatorname*{tr}\left[  \mathbf{EQ}^{k-1}\right]
+\left(  \tilde{\delta}_{M}-\frac{1}{N}\operatorname*{tr}\left[
\mathbf{\tilde{F}}\right]  \right)  \frac{1}{N}\mathbb{E}\operatorname*{tr}%
\left[  \mathbf{EQ}^{k}\right] \nonumber\\
&  +\sum_{p=1}^{k-1}\frac{1}{N}\mathbb{E}\operatorname*{tr}\left[
\mathbf{EQ}^{p}\frac{\mathbf{XT\tilde{F}X}^{\ast}}{N}\right]  \frac{1}%
{N}\mathbb{E}\operatorname*{tr}\left[  \mathbf{Q}^{k-p+1}\right]  +\sum
_{q=1}^{k}\mathcal{X}_{M}^{\left(  q\right)  }\text{,}\label{5k}%
\end{align}
where we have defined the following error terms (here $\mathbf{\Theta}%
=\frac{1}{N}\mathbf{I}_{M}$, $\mathbf{Z}=\mathbf{E}$ and $\mathbf{\tilde{Z}%
}=\mathbf{T}$):%
\[
\mathcal{X}_{M}^{\left(  q\right)  }=\mathbb{E}\left[  \chi_{M}^{\left(
q\right)  }\operatorname*{tr}\left[  \mathbf{Z\Theta Q}^{k-q+1}\frac
{\mathbf{X\tilde{Z}\tilde{F}X}^{\ast}}{N}\right]  \right]  \text{.}%
\]
Before proceeding further, notice that (\ref{csEV}) along with Lemma
\ref{lemmaVC_Phi_Psi} implies that, for any $q\geq1$,%
\begin{equation}
\mathcal{X}_{M}^{\left(  q\right)  }=\mathcal{O}\left(  \frac{\left\Vert
\mathbf{\Theta}\right\Vert _{F}}{N^{3/2}}\right)  \text{.}\label{xiMq}%
\end{equation}

We now elaborate on (\ref{5k}) in the case $k=1$. Specifically, note that we
can write%
\begin{multline*}
\frac{1}{N}\mathbb{E}\operatorname*{tr}\left[  \mathbf{Q}\right]  -\frac{1}%
{N}\mathbb{E}\operatorname*{tr}\left[  \mathbf{E}\right]  =\left(
\tilde{\delta}_{M}-\frac{1}{N}\operatorname*{tr}\left[  \mathbf{\tilde{F}%
}\right]  \right)  \frac{1}{N}\mathbb{E}\operatorname*{tr}\left[
\mathbf{EQ}\right]  +\mathcal{O}\left(  N^{-2}\right) \\
=\left(  \frac{1}{N}\mathbb{E}\operatorname*{tr}\left[  \mathbf{Q}\right]
-\frac{1}{N}\operatorname*{tr}\left[  \mathbf{E}\right]  \right)
\operatorname*{tr}\left[  \mathbf{\tilde{F}\tilde{E}}\right]  \frac{1}%
{N}\mathbb{E}\operatorname*{tr}\left[  \mathbf{EQ}\right]  +\mathcal{O}\left(
N^{-2}\right)  \text{,}%
\end{multline*}
and so we get%
\[
\left(  \frac{1}{N}\mathbb{E}\operatorname*{tr}\left[  \mathbf{Q}\right]
-\frac{1}{N}\operatorname*{tr}\left[  \mathbf{E}\right]  \right)  \left(
1-\frac{1}{N}\operatorname*{tr}\left[  \mathbf{\tilde{E}\tilde{F}}\right]
\frac{1}{N}\mathbb{E}\operatorname*{tr}\left[  \mathbf{EQ}\right]  \right)
=\mathcal{O}\left(  N^{-2}\right)  \text{.}%
\]
Moreover, using (\ref{eq_ineq_spectral_norm}), we observe that, uniformly in
$M$,%
\begin{equation}
\left\vert \frac{1}{N}\operatorname*{tr}\left[  \mathbf{\tilde{E}\tilde{F}%
}\right]  \frac{1}{N}\mathbb{E}\operatorname*{tr}\left[  \mathbf{EQ}\right]
\right\vert \leq\tilde{\delta}_{M}\frac{1}{N}\mathbb{E}\operatorname*{tr}%
\left[  \mathbf{Q}\right]  \left\Vert \mathbf{\tilde{F}}\right\Vert \left\Vert
\mathbf{E}\right\Vert <1\text{,}\nonumber
\end{equation}
which follows by Assumption \textbf{(As2)} from the fact that%
\[
\sup_{M\geq1}\max\left\{  \left\Vert \left(  \mathbf{I}_{M}+\alpha
\tilde{\delta}_{M}^{-1}\mathbf{R}^{-1}\right)  ^{-1}\right\Vert ,\left[
\left(  \mathbf{I}_{N}+\left(  \frac{1}{N}\mathbb{E}\operatorname*{tr}\left[
\mathbf{Q}\right]  \mathbf{T}\right)  ^{-1}\right)  ^{-1}\right]  \right\}
<1\text{.}%
\]
Hence, we have%
\[
\frac{1}{N}\mathbb{E}\operatorname*{tr}\left[  \mathbf{Q}\right]  =\frac{1}%
{N}\operatorname*{tr}\left[  \mathbf{E}\right]  +\mathcal{O}\left(  \frac
{1}{N^{2}}\right)  \text{.}%
\]
In particular, noting that%
\[
\operatorname*{tr}\left[  \mathbf{\Theta}\left(  \mathbf{\tilde{E}%
}-\mathbf{\tilde{F}}\right)  \right]  =\left(  \frac{1}{N}\mathbb{E}%
\operatorname*{tr}\left[  \mathbf{Q}\right]  -\frac{1}{N}\operatorname*{tr}%
\left[  \mathbf{E}\right]  \right)  \operatorname*{tr}\left[  \mathbf{\Theta
\tilde{F}\tilde{E}}\right]  \text{,}%
\]
together with $\left\Vert \mathbf{\Theta}\right\Vert _{\operatorname*{tr}}%
\leq\sqrt{M}\left\Vert \mathbf{\Theta}\right\Vert _{F}$ and $\sup_{M\geq
1}\left\Vert \mathbf{\tilde{F}}\right\Vert \leq\left\Vert \mathbf{T}%
\right\Vert _{\sup}$, the next result follows straightforwardly.%

%TCIMACRO{\TeXButton{%
%\begin{lemma}%
%}{\begin{lemma}}}%
%BeginExpansion
\begin{lemma}%
%EndExpansion%
%TCIMACRO{\TeXButton{\label{OFtEt}}{\label{OFtEt}}}%
%BeginExpansion
\label{OFtEt}%
%EndExpansion
With all above definitions, the following approximation rule holds:%
\begin{equation}
\operatorname*{tr}\left[  \mathbf{\Theta\tilde{F}}\right]  =\operatorname*{tr}%
\left[  \mathbf{\Theta\tilde{E}}\right]  +\mathcal{O}\left(  \frac{\left\Vert
\mathbf{\Theta}\right\Vert _{F}}{N^{3/2}}\right)  \text{.}\label{OFtEt}%
\end{equation}%
%TCIMACRO{\TeXButton{%
%\end{lemma}%
%}{\end{lemma}}}%
%BeginExpansion
\end{lemma}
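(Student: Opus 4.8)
The plan is to reduce (\ref{OFtEt}) to the scalar estimate $\frac{1}{N}\mathbb{E}\operatorname*{tr}[\mathbf{Q}]=\frac{1}{N}\operatorname*{tr}[\mathbf{E}]+\mathcal{O}(N^{-2})$ obtained just above the statement, together with routine norm bookkeeping. First I would record the exact algebraic link between the two matrices: since $\mathbf{\tilde{F}}=\mathbf{T}(\mathbf{I}_{N}+\frac{1}{N}\mathbb{E}\operatorname*{tr}[\mathbf{Q}]\mathbf{T})^{-1}$ and $\mathbf{\tilde{E}}=\mathbf{T}(\mathbf{I}_{N}+\delta_{M}\mathbf{T})^{-1}$, the resolvent identity $\mathbf{A}^{-1}-\mathbf{B}^{-1}=\mathbf{A}^{-1}(\mathbf{B}-\mathbf{A})\mathbf{B}^{-1}$ applied with $\mathbf{A}=\mathbf{I}_{N}+\delta_{M}\mathbf{T}$ and $\mathbf{B}=\mathbf{I}_{N}+\frac{1}{N}\mathbb{E}\operatorname*{tr}[\mathbf{Q}]\mathbf{T}$ (and the commutativity of functions of $\mathbf{T}$) gives $\mathbf{\tilde{E}}-\mathbf{\tilde{F}}=(\frac{1}{N}\mathbb{E}\operatorname*{tr}[\mathbf{Q}]-\delta_{M})\,\mathbf{\tilde{F}\tilde{E}}$. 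Since $\delta_{M}=\frac{1}{N}\operatorname*{tr}[\mathbf{E}]$ by the second equation in (\ref{systemEq}) and the definition of $\mathbf{E}$, this is precisely the identity $\operatorname*{tr}[\mathbf{\Theta}(\mathbf{\tilde{E}}-\mathbf{\tilde{F}})]=(\frac{1}{N}\mathbb{E}\operatorname*{tr}[\mathbf{Q}]-\frac{1}{N}\operatorname*{tr}[\mathbf{E}])\operatorname*{tr}[\mathbf{\Theta\tilde{F}\tilde{E}}]$ invoked above the statement.

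Next I would bound the two factors on the right-hand side separately. For the matrix factor, inequality (\ref{singularbound}) gives $|\operatorname*{tr}[\mathbf{\Theta\tilde{F}\tilde{E}}]|\leq\|\mathbf{\tilde{F}\tilde{E}}\|\,\|\mathbf{\Theta}\|_{\operatorname*{tr}}$, and combining $\|\mathbf{\Theta}\|_{\operatorname*{tr}}\leq\sqrt{M}\,\|\mathbf{\Theta}\|_{F}$ with the uniform bounds $\sup_{N\geq1}\|\mathbf{\tilde{F}}\|\leq\|\mathbf{T}\|_{\sup}$ and $\sup_{N\geq1}\|\mathbf{\tilde{E}}\|\leq\|\mathbf{T}\|_{\sup}$ (both immediate from Assumption \textbf{(As2)}) yields $|\operatorname*{tr}[\mathbf{\Theta\tilde{F}\tilde{E}}]|=\mathcal{O}(\sqrt{M}\,\|\mathbf{\Theta}\|_{F})$. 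For the scalar factor I would simply insert $\frac{1}{N}\mathbb{E}\operatorname*{tr}[\mathbf{Q}]-\frac{1}{N}\operatorname*{tr}[\mathbf{E}]=\mathcal{O}(N^{-2})$. Multiplying the two estimates and using that $M/N=c_{M}$ is bounded under Assumption \textbf{(As3)}, I obtain $|\operatorname*{tr}[\mathbf{\Theta}(\mathbf{\tilde{E}}-\mathbf{\tilde{F}})]|=\mathcal{O}(\sqrt{M}\,\|\mathbf{\Theta}\|_{F}N^{-2})=\mathcal{O}(\|\mathbf{\Theta}\|_{F}N^{-3/2})$, which rearranges to (\ref{OFtEt}).

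The only step carrying genuine content is the scalar input $\frac{1}{N}\mathbb{E}\operatorname*{tr}[\mathbf{Q}]=\frac{1}{N}\operatorname*{tr}[\mathbf{E}]+\mathcal{O}(N^{-2})$, and it has already been dispatched in the paragraph preceding the statement: one specialises the recursion (\ref{5k}) to $k=1$, absorbs the $\chi_{M}^{(q)}$-type remainders via (\ref{xiMq}), and inverts the resulting near-fixed-point relation for $\frac{1}{N}\mathbb{E}\operatorname*{tr}[\mathbf{Q}]-\frac{1}{N}\operatorname*{tr}[\mathbf{E}]$. The crux there — the only place I would expect any difficulty — is checking that the contraction factor $1-\frac{1}{N}\operatorname*{tr}[\mathbf{\tilde{E}\tilde{F}}]\,\frac{1}{N}\mathbb{E}\operatorname*{tr}[\mathbf{EQ}]$ stays bounded away from $1$ uniformly in $M$, which follows from Assumption \textbf{(As2)} since both $(\mathbf{I}_{M}+\alpha\tilde{\delta}_{M}^{-1}\mathbf{R}^{-1})^{-1}$ and $(\mathbf{I}_{N}+(\frac{1}{N}\mathbb{E}\operatorname*{tr}[\mathbf{Q}]\,\mathbf{T})^{-1})^{-1}$ have spectral norm strictly below one. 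With that in hand no further obstacle remains; everything else is the norm estimation sketched above.
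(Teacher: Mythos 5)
Your argument is correct and is essentially the paper's own proof: the resolvent identity gives $\operatorname*{tr}\left[\mathbf{\Theta}\left(\mathbf{\tilde{E}}-\mathbf{\tilde{F}}\right)\right]=\left(\tfrac{1}{N}\mathbb{E}\operatorname*{tr}\left[\mathbf{Q}\right]-\tfrac{1}{N}\operatorname*{tr}\left[\mathbf{E}\right]\right)\operatorname*{tr}\left[\mathbf{\Theta\tilde{F}\tilde{E}}\right]$, and the same ingredients --- the $\mathcal{O}(N^{-2})$ scalar estimate from the $k=1$ case of (\ref{5k}), the bounds $\left\Vert\mathbf{\Theta}\right\Vert_{\operatorname*{tr}}\leq\sqrt{M}\left\Vert\mathbf{\Theta}\right\Vert_{F}$ and $\left\Vert\mathbf{\tilde{F}}\right\Vert,\left\Vert\mathbf{\tilde{E}}\right\Vert\leq\left\Vert\mathbf{T}\right\Vert_{\sup}$, and $M=\mathcal{O}(N)$ --- turn $\sqrt{M}N^{-2}$ into $N^{-3/2}$ exactly as in the text. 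One wording nit: the quantity that must stay bounded away from $1$ is the product $\tfrac{1}{N}\operatorname*{tr}\left[\mathbf{\tilde{E}\tilde{F}}\right]\tfrac{1}{N}\mathbb{E}\operatorname*{tr}\left[\mathbf{EQ}\right]$ (equivalently, one minus it is bounded away from $0$), which is precisely what your cited spectral-norm bounds establish.
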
%
%EndExpansion

The variance control in (\ref{xiMq}) along with (\ref{OFtEt}) imply that%

\begin{multline}
\frac{1}{N}\mathbb{E}\operatorname*{tr}\left[  \mathbf{Q}^{k}\right]
=\frac{1}{N}\mathbb{E}\operatorname*{tr}\left[  \mathbf{EQ}^{k-1}\right]
+\label{Atk}\\
+\sum_{p=1}^{k-1}\frac{1}{N}\mathbb{E}\operatorname*{tr}\left[  \mathbf{EQ}%
^{p}\frac{\mathbf{XT\tilde{F}X}^{\ast}}{N}\right]  \frac{1}{N}\mathbb{E}%
\operatorname*{tr}\left[  \mathbf{Q}^{k-p+1}\right]  +\mathcal{O}\left(
\frac{1}{N^{2}}\right)  \text{.}%
\end{multline}
Similarly, we can write the following estimates from (\ref{4k}) and
(\ref{3k}), respectively,%
\begin{align}
\mathbb{E}\left[  \mathbf{\Theta Q}^{k}\frac{\mathbf{X}\mathbf{\tilde{Z}%
}\mathbf{X}^{\ast}}{N}\right]   &  =\frac{1}{N}\operatorname*{tr}\left[
\mathbf{\tilde{Z}}\left(  \mathbf{I}_{N}+\frac{1}{N}\left(  \mathbb{E}%
\operatorname*{tr}\left[  \mathbf{Q}\right]  \right)  \mathbf{T}\right)
^{-1}\right]  \mathbb{E}\left[  \mathbf{\Theta Q}^{k}\right] \nonumber\\
&  -\sum_{p=1}^{k-1}\mathbb{E}\left[  \mathbf{\Theta Q}^{p}\frac
{\mathbf{X}\mathbf{\tilde{Z}\tilde{F}X}^{\ast}}{N}\right]  \frac{1}%
{N}\mathbb{E}\operatorname*{tr}\left[  \mathbf{Q}^{k-p+1}\right]
+\mathcal{O}\left(  \frac{\left\Vert \mathbf{\Theta}\right\Vert _{F}}{N^{3/2}%
}\right)  \text{,}\label{Btk}%
\end{align}
and%
\begin{multline}
\mathbb{E}\left[  \mathbf{\Theta Q}^{k}\right]  =\mathbb{E}\left[
\mathbf{\Theta Q}^{k-1}\mathbf{E}\right]  +\label{Ctk}\\
+\sum_{p=1}^{k-1}\mathbb{E}\left[  \mathbf{\Theta Q}^{p}\frac{\mathbf{XT\tilde
{F}X}^{\ast}}{N}\mathbf{E}\right]  \frac{1}{N}\mathbb{E}\operatorname*{tr}%
\left[  \mathbf{Q}^{k-p+1}\right]  +\mathcal{O}\left(  \frac{\left\Vert
\mathbf{\Theta}\right\Vert _{F}}{N^{3/2}}\right)  \text{.}%
\end{multline}

Now, the proof of Propositions \ref{theoremEV1} to \ref{theoremEV4} can be now
readily completed by handling the estimates (\ref{Atk}), (\ref{Btk}) and
(\ref{Ctk}), successively, following an iterative scheme from $k=1$ to $k=4$.

\section{Proof of Proposition \ref{proposition_aM_bM}%
%TCIMACRO{\TeXButton{\label{AppProposition_aM_bM}}{\label{AppProposition_aM_bM}%
%}}%
%BeginExpansion
\label{AppProposition_aM_bM}%
%EndExpansion
}

We concentrate first on (\ref{prop_aM}). Observing that%
\[
\mathbb{E}\left[  \left[  \mathbf{Q}\right]  _{ij}\Psi\left(  \omega\right)
\right]  =\alpha^{-1}\left[  \mathbf{R}\right]  _{j}\mathbb{E}\left[
\Psi\left(  \omega\right)  \right]  -\alpha^{-1}\mathbb{E}\left[  \left[
\mathbf{Q}\frac{1}{N}\mathbf{XTX}^{\ast}\mathbf{R}\right]  _{ij}\Psi\left(
\omega\right)  \right]
\]
it is sufficient to investigate the term $\mathbb{E}\left[  \mathbf{Q}\frac
{1}{N}\mathbf{XTX}^{\ast}\Psi\left(  \omega\right)  \right]  _{ij}$. Now,
observe that we can express
\[
\mathbb{E}\left[  \mathbf{Q}\frac{1}{N}\mathbf{XTX}^{\ast}\Psi\left(
\omega\right)  \right]  _{ij}=\frac{1}{N}\sum_{l=1}^{N}t_{l}\mathbb{E}\left[
\mathbf{Qx}_{l}\mathbf{x}_{l}^{\ast}\Psi\left(  \omega\right)  \right]  _{ij}%
\]
and therefore, using the integration by parts formula,\textbf{\ }we get after
some algebraic manipulations%
\begin{align}
\mathbb{E}\left[  \left[  \mathbf{Q}\right]  _{ij}\Psi\left(  \omega\right)
\right]   &  =\left[  \mathbf{E}\right]  _{j}\mathbb{E}\left[  \Psi\left(
\omega\right)  \right] \nonumber\\
&  +\operatorname*{i}\omega A\frac{1}{\sqrt{N}}\mathbb{E}\left[  \left[
\mathbf{Q}^{2}\mathbf{uu}^{\ast}\mathbf{Q}\frac{\mathbf{XT\tilde{E}X}^{\ast}%
}{N}\mathbf{E}\right]  _{ij}\Psi\left(  \omega\right)  \right] \nonumber\\
&  +\operatorname*{i}\omega B\frac{1}{\sqrt{N}}\mathbb{E}\left[  \left[
\mathbf{Q}^{3}\mathbf{uu}^{\ast}\mathbf{Q}\frac{\mathbf{X}\mathbf{T\tilde{E}%
}\mathbf{X}^{\ast}}{N}\mathbf{E}\right]  _{ij}\Psi\left(  \omega\right)
\right] \nonumber\\
&  +\operatorname*{i}\omega B\frac{1}{\sqrt{N}}\mathbb{E}\left[  \left[
\mathbf{Q}^{2}\mathbf{uu}^{\ast}\mathbf{Q}^{2}\frac{\mathbf{X}\mathbf{T\tilde
{E}}\mathbf{X}^{\ast}}{N}\mathbf{E}\right]  _{ij}\Psi\left(  \omega\right)
\right] \nonumber\\
&  +\mathbb{E}\left[  \left(  \frac{1}{N}\operatorname*{tr}\left[
\mathbf{Q}\right]  -\delta_{M}\right)  \left[  \mathbf{Q}\frac
{\mathbf{XT\tilde{E}X}^{\ast}}{N}\mathbf{E}\right]  _{ij}\Psi\left(
\omega\right)  \right]  \text{.}\label{eq_QijPsi}%
\end{align}
Therefore, we can conclude that
\begin{multline*}
\mathbb{E}\left[  \left(  a_{M}-\overline{a}_{M}\right)  \Psi_{M}\left(
\omega\right)  \right]  =\operatorname*{i}\omega A\frac{1}{\sqrt{N}}%
\mathbb{E}\left[  \mathbf{u}^{\ast}\mathbf{Q}^{2}\mathbf{uu}^{\ast}%
\mathbf{Q}\frac{\mathbf{X}\mathbf{T\tilde{E}}\mathbf{X}^{\ast}}{N}%
\mathbf{Eu}\Psi\left(  \omega\right)  \right] \\
+\operatorname*{i}\omega B\frac{1}{\sqrt{N}}\mathbb{E}\left[  \mathbf{u}%
^{\ast}\mathbf{Q}^{3}\mathbf{uu}^{\ast}\mathbf{Q}\frac{\mathbf{X}%
\mathbf{T\tilde{E}}\mathbf{X}^{\ast}}{N}\mathbf{Eu}\Psi\left(  \omega\right)
\right] \\
+\operatorname*{i}\omega B\frac{1}{\sqrt{N}}\mathbb{E}\left[  \mathbf{u}%
^{\ast}\mathbf{Q}^{2}\mathbf{uu}^{\ast}\mathbf{Q}^{2}\frac{\mathbf{X}%
\mathbf{T\tilde{E}}\mathbf{X}^{\ast}}{N}\mathbf{Eu}\Psi\left(  \omega\right)
\right]  +\mathcal{Y}_{1,M}\text{,}%
\end{multline*}
where we have defined
\[
\mathcal{Y}_{1,M}=\mathbb{E}\left[  \left(  \frac{1}{N}\operatorname*{tr}%
\left[  \mathbf{Q}\right]  -\delta_{M}\right)  \mathbf{u}^{\ast}%
\mathbf{Q}\frac{\mathbf{X}\mathbf{T\tilde{E}}\mathbf{X}^{\ast}}{N}%
\mathbf{Eu}\Psi\left(  \omega\right)  \right]  \text{.}%
\]
Hence, after some algebraic manipulations and the application of the variance
controls in Lemma \ref{lemmaVC_Phi_Psi}, we finally obtain
\begin{multline*}
\sqrt{N}\mathbb{E}\left[  \left(  a_{M}-\overline{a}_{M}\right)  \Psi
_{M}\left(  \omega\right)  \right]  =\operatorname*{i}\omega A\mathbb{E}%
\left[  \mathbf{u}^{\ast}\mathbf{Q}^{2}\mathbf{u}\right]  \mathbb{E}\left[
\mathbf{u}^{\ast}\mathbf{Q}\frac{\mathbf{X}\mathbf{T\tilde{E}}\mathbf{X}%
^{\ast}}{N}\mathbf{Eu}\right]  \mathbb{E}\left[  \Psi\left(  \omega\right)
\right] \\
+\operatorname*{i}\omega B\mathbb{E}\left[  \mathbf{u}^{\ast}\mathbf{Q}%
^{3}\mathbf{u}\right]  \mathbb{E}\left[  \mathbf{u}^{\ast}\mathbf{Q}%
\frac{\mathbf{X}\mathbf{T\tilde{E}}\mathbf{X}^{\ast}}{N}\mathbf{Eu}\right]
\mathbb{E}\left[  \Psi\left(  \omega\right)  \right] \\
+\operatorname*{i}\omega B\mathbb{E}\left[  \mathbf{u}^{\ast}\mathbf{Q}%
^{2}\mathbf{u}\right]  \mathbb{E}\left[  \mathbf{u}^{\ast}\mathbf{Q}^{2}%
\frac{\mathbf{X}\mathbf{T\tilde{E}}\mathbf{X}^{\ast}}{N}\mathbf{Eu}\right]
\mathbb{E}\left[  \Psi\left(  \omega\right)  \right]  +\mathcal{O}\left(
N^{-1/2}\right)  \text{,}%
\end{multline*}
and (\ref{prop_aM}) follows by Propositions \ref{theoremEV1} to
\ref{theoremEV3}.

We now deal with (\ref{prop_bM}). Observing that%
\begin{equation}
\mathbb{E}\left[  \left[  \mathbf{Q}^{2}\right]  _{ij}\Psi\left(
\omega\right)  \right]  =\alpha^{-1}\mathbb{E}\left[  \left[  \mathbf{Q}%
\right]  _{ij}\Psi\left(  \omega\right)  \right]  \left[  \mathbf{R}\right]
_{j}-\alpha^{-1}\mathbb{E}\left[  \left[  \mathbf{Q}^{2}\frac{1}%
{N}\mathbf{XTX}^{\ast}\right]  _{ij}\Psi\left(  \omega\right)  \right]
\left[  \mathbf{R}\right]  _{j}\text{,}\label{bMobs}%
\end{equation}
we only need to investigate the quantity
\[
\mathbb{E}\left[  \left[  \mathbf{Q}^{2}\frac{1}{N}\mathbf{XTX}^{\ast}\right]
_{ij}\Psi\left(  \omega\right)  \right]  =\frac{1}{N}\sum_{l=1}^{N}%
t_{l}\mathbb{E}\left[  \left[  \mathbf{Q}^{2}\mathbf{x}_{l}\mathbf{x}%
_{l}^{\ast}\right]  _{ij}\Psi\left(  \omega\right)  \right]  \text{.}%
\]
Thus, we can develop $\mathbb{E}\left[  \left[  \mathbf{Q}^{2}\mathbf{x}%
_{l}\mathbf{x}_{l}^{\ast}\right]  _{ij}\Psi\left(  \omega\right)  \right]  $
by using the integration by parts formula and applying similar algebraic
manipulations as in the proof of (\ref{prop_aM}).\ Then, using the previous
estimate in (\ref{bMobs}), we can write%
\begin{multline*}
\mathbb{E}\left[  \left[  \mathbf{Q}^{2}\right]  _{ij}\Psi\left(
\omega\right)  \right]  =\mathbb{E}\left[  \left[  \mathbf{QE}\right]
_{ij}\Psi\left(  \omega\right)  \right]  +\frac{\tilde{\gamma}}{1-\gamma
\tilde{\gamma}}\mathbb{E}\left[  \left[  \mathbf{Q}\frac{\mathbf{XT\tilde{E}%
X}^{\ast}}{N}\mathbf{E}\right]  _{ij}\Psi\left(  \omega\right)  \right] \\
+\operatorname*{i}\omega A\frac{1}{\sqrt{N}}\mathbb{E}\left[  \left[
\mathbf{Q}^{3}\mathbf{uu}^{\ast}\mathbf{Q}\frac{\mathbf{XT\tilde{E}X}^{\ast}%
}{N}\mathbf{E}\right]  _{ij}\Psi\left(  \omega\right)  \right] \\
+\operatorname*{i}\omega B\frac{1}{\sqrt{N}}\mathbb{E}\left[  \left[
\mathbf{Q}^{4}\mathbf{uu}^{\ast}\mathbf{Q}\frac{\mathbf{XT\tilde{E}X}^{\ast}%
}{N}\mathbf{E}\right]  _{ij}\Psi\left(  \omega\right)  \right] \\
+\operatorname*{i}\omega B\frac{1}{\sqrt{N}}\mathbb{E}\left[  \left[
\mathbf{Q}^{3}\mathbf{uu}^{\ast}\mathbf{Q}^{2}\frac{\mathbf{XT\tilde{E}%
X}^{\ast}}{N}\mathbf{E}\right]  _{ij}\Psi\left(  \omega\right)  \right] \\
+\mathbb{E}\left[  \left(  \frac{1}{N}\operatorname*{tr}\left[  \mathbf{Q}%
\right]  -\delta_{M}\right)  \left[  \mathbf{Q}^{2}\frac{\mathbf{XT\tilde{E}%
X}^{\ast}}{N}\mathbf{E}\right]  _{ij}\Psi\left(  \omega\right)  \right] \\
+\mathbb{E}\left[  \left(  \frac{1}{N}\operatorname*{tr}\left[  \mathbf{Q}%
^{2}\right]  -\frac{\tilde{\gamma}}{1-\gamma\tilde{\gamma}}\right)  \left[
\mathbf{Q}\frac{\mathbf{XT\tilde{E}X}^{\ast}}{N}\mathbf{E}\right]  _{ij}%
\Psi\left(  \omega\right)  \right]  \text{.}%
\end{multline*}
Consequently, we can finally state that
\begin{multline*}
\mathbb{E}\left[  \left(  b_{M}-\overline{b}_{M}\right)  \Psi_{M}\left(
\omega\right)  \right]  =\operatorname*{i}\omega A\frac{1}{\sqrt{N}}%
\mathbb{E}\left[  \mathbf{u}^{\ast}\mathbf{Q}^{3}\mathbf{uu}^{\ast}%
\mathbf{Q}\frac{\mathbf{XT\tilde{E}X}^{\ast}}{N}\mathbf{Eu}\Psi\left(
\omega\right)  \right] \\
+\operatorname*{i}\omega B\frac{1}{\sqrt{N}}\mathbb{E}\left[  \mathbf{u}%
^{\ast}\mathbf{Q}^{4}\mathbf{uu}^{\ast}\mathbf{Q}\frac{\mathbf{XT\tilde{E}%
X}^{\ast}}{N}\mathbf{Eu}\Psi\left(  \omega\right)  \right] \\
+\operatorname*{i}\omega B\frac{1}{\sqrt{N}}\mathbb{E}\left[  \mathbf{u}%
^{\ast}\mathbf{Q}^{3}\mathbf{uu}^{\ast}\mathbf{Q}^{2}\frac{\mathbf{XT\tilde
{E}X}^{\ast}}{N}\mathbf{Eu}\Psi\left(  \omega\right)  \right] \\
+\frac{\gamma}{1-\gamma\tilde{\gamma}}\mathbb{E}\left[  \left(  \mathbf{u}%
^{\ast}\mathbf{Q}\frac{\mathbf{XT\tilde{E}X}^{\ast}}{N}\mathbf{Eu}%
-\tilde{\gamma}\mathbf{u}^{\ast}\mathbf{E}^{2}\mathbf{u}\right)  \Psi\left(
\omega\right)  \right] \\
+\mathbb{E}\left[  \left(  \mathbf{u}^{\ast}\mathbf{QEu}-\mathbf{u}^{\ast
}\mathbf{E}^{2}\mathbf{u}\right)  \Psi\left(  \omega\right)  \right]
+\mathcal{Y}_{2,M}+\mathcal{Y}_{3,M}\text{,}%
\end{multline*}
where we have defined
\begin{align*}
\mathcal{Y}_{2,M} &  =\mathbb{E}\left[  \left(  \frac{1}{N}\operatorname*{tr}%
\left[  \mathbf{Q}\right]  -\delta_{M}\right)  \left[  \mathbf{u}^{\ast
}\mathbf{Q}^{2}\frac{\mathbf{XT\tilde{E}X}^{\ast}}{N}\mathbf{Eu}\right]
\Psi\left(  \omega\right)  \right] \\
\mathcal{Y}_{3,M} &  =\mathbb{E}\left[  \left(  \frac{1}{N}\operatorname*{tr}%
\left[  \mathbf{Q}^{2}\right]  -\frac{\tilde{\gamma}}{1-\gamma\tilde{\gamma}%
}\right)  \left[  \mathbf{u}^{\ast}\mathbf{Q}\frac{\mathbf{XT\tilde{E}X}%
^{\ast}}{N}\mathbf{Eu}\right]  \Psi\left(  \omega\right)  \right]  \text{.}%
\end{align*}
In particular, note that (\ref{csEV}) along with Lemma \ref{lemmaVC_Phi_Psi}
implies that $\mathcal{Y}_{2,M}+\mathcal{Y}_{3,M}=\mathcal{O}(N^{-1})$. On the
other hand, we also notice that%
\begin{multline*}
\mathbb{E}\left[  \left(  \mathbf{u}^{\ast}\mathbf{QEu-\mathbf{u}^{\ast}E}%
^{2}\mathbf{u}\right)  \Psi\left(  \omega\right)  \right]  =\operatorname*{i}%
\omega A\frac{1}{\sqrt{N}}\mathbb{E}\left[  \mathbf{u}^{\ast}\mathbf{Q}%
^{2}\mathbf{uu}^{\ast}\mathbf{Q}\frac{\mathbf{X}\mathbf{T\tilde{E}}%
\mathbf{X}^{\ast}}{N}\mathbf{E}^{2}\mathbf{u}\Psi\left(  \omega\right)
\right] \\
+\operatorname*{i}\omega B\frac{1}{\sqrt{N}}\mathbb{E}\left[  \mathbf{u}%
^{\ast}\mathbf{Q}^{3}\mathbf{uu}^{\ast}\mathbf{Q}\frac{\mathbf{X}%
\mathbf{T\tilde{E}}\mathbf{X}^{\ast}}{N}\mathbf{E}^{2}\mathbf{u}\Psi\left(
\omega\right)  \right] \\
+\operatorname*{i}\omega B\frac{1}{\sqrt{N}}\mathbb{E}\left[  \mathbf{u}%
^{\ast}\mathbf{Q}^{2}\mathbf{uu}^{\ast}\mathbf{Q}^{2}\frac{\mathbf{X}%
\mathbf{T\tilde{E}}\mathbf{X}^{\ast}}{N}\mathbf{E}^{2}\mathbf{u}\Psi\left(
\omega\right)  \right]  +\mathcal{Y}_{4,M}%
\end{multline*}
where
\[
\mathcal{Y}_{4,M}=\mathbb{E}\left[  \left(  \frac{1}{N}\operatorname*{tr}%
\left[  \mathbf{Q}\right]  -\delta_{M}\right)  \mathbf{u}^{\ast}%
\mathbf{Q}\frac{\mathbf{X}\mathbf{T\tilde{E}}\mathbf{X}^{\ast}}{N}%
\mathbf{E}^{2}\mathbf{u}\Psi\left(  \omega\right)  \right]  \text{.}%
\]
It can be trivially seen that $\mathcal{Y}_{4,M}=\mathcal{O}(N^{-1})$.
Furthermore, we readily see that%
\begin{multline*}
\mathbb{E}\left[  \left(  \mathbf{u}^{\ast}\mathbf{Q}\frac{\mathbf{XT\tilde
{E}X}^{\ast}}{N}\mathbf{Eu}-\tilde{\gamma}\mathbf{u}^{\ast}\mathbf{E}%
^{2}\mathbf{u}\right)  \Psi\left(  \omega\right)  \right]  =-\operatorname*{i}%
\omega A\frac{1}{\sqrt{N}}\mathbb{E}\left[  \mathbf{u}^{\ast}\mathbf{Q}%
^{2}\mathbf{uu}^{\ast}\mathbf{Q}\frac{\mathbf{X\mathbf{T}\tilde{E}}_{N}%
^{2}\mathbf{X}^{\ast}}{N}\mathbf{Eu}\Psi\left(  \omega\right)  \right] \\
-\operatorname*{i}\omega B\frac{1}{\sqrt{N}}\mathbb{E}\left[  \mathbf{u}%
^{\ast}\mathbf{Q}^{3}\mathbf{uu}^{\ast}\mathbf{Q}\frac
{\mathbf{X\mathbf{T\tilde{E}}}^{2}\mathbf{X}^{\ast}}{N}\mathbf{Eu}\Psi\left(
\omega\right)  \right] \\
-\operatorname*{i}\omega B\frac{1}{\sqrt{N}}\mathbb{E}\left[  \mathbf{u}%
^{\ast}\mathbf{Q}^{2}\mathbf{uu}^{\ast}\mathbf{Q}^{2}\frac{\mathbf{X\mathbf{T}%
\tilde{E}}_{N}^{2}\mathbf{X}^{\ast}}{N}\mathbf{Eu}\Psi\left(  \omega\right)
\right] \\
+\tilde{\gamma}\mathbb{E}\left[  \left(  \mathbf{u^{\ast}QEu-u^{\ast}E}%
^{2}\mathbf{u}\right)  \Psi\left(  \omega\right)  \right]  +\mathcal{Y}%
_{5,M}\text{,}%
\end{multline*}
where the term $\mathbb{E}\left[  \left(  \mathbf{u}^{\ast}%
\mathbf{QEu-\mathbf{u}^{\ast}E}^{2}\mathbf{u}\right)  \Psi\left(
\omega\right)  \right]  $ has been examined above, and where
\[
\mathcal{Y}_{5,M}=-\mathbb{E}\left[  \left(  \frac{1}{N}\operatorname*{tr}%
\left[  \mathbf{Q}\right]  -\delta_{M}\right)  \mathbf{u}^{\ast}%
\mathbf{Q}\frac{\mathbf{XT\tilde{E}}^{2}\mathbf{X}^{\ast}}{N}\mathbf{Eu}%
\Psi\left(  \omega\right)  \right]  \text{.}%
\]
It can be readily seen that $\mathcal{Y}_{5,M}=\mathcal{O}(N^{-1})$. Inserting
the above back into the original expression, we observe that%
\begin{multline*}
\sqrt{N}\mathbb{E}\left[  \left(  b_{M}-\overline{b}_{M}\right)  \Psi
_{M}\left(  \omega\right)  \right]  =\operatorname*{i}\omega A\mathbb{E}%
\left[  \mathbf{u}^{\ast}\mathbf{Q}^{3}\mathbf{u}\right]  \mathbb{E}\left[
\mathbf{u}^{\ast}\mathbf{Q}\frac{\mathbf{XT\tilde{E}X}^{\ast}}{N}%
\mathbf{Eu}\right]  \mathbb{E}\left[  \Psi\left(  \omega\right)  \right] \\
+\operatorname*{i}\omega B\mathbb{E}\left[  \mathbf{u}^{\ast}\mathbf{Q}%
^{4}\mathbf{u}\right]  \mathbb{E}\left[  \mathbf{u}^{\ast}\mathbf{Q}%
\frac{\mathbf{XT\tilde{E}X}^{\ast}}{N}\mathbf{Eu}\right]  \mathbb{E}\left[
\Psi\left(  \omega\right)  \right] \\
+\operatorname*{i}\omega B\mathbb{E}\left[  \mathbf{u}^{\ast}\mathbf{Q}%
^{3}\mathbf{u}\right]  \mathbb{E}\left[  \mathbf{u}^{\ast}\mathbf{Q}^{2}%
\frac{\mathbf{XT\tilde{E}X}^{\ast}}{N}\mathbf{Eu}\right]  \mathbb{E}\left[
\Psi\left(  \omega\right)  \right] \\
+\operatorname*{i}\omega A\frac{1}{1-\gamma\tilde{\gamma}}\mathbb{E}\left[
\mathbf{u}^{\ast}\mathbf{Q}^{2}\mathbf{u}\right]  \mathbb{E}\left[
\mathbf{u}^{\ast}\mathbf{Q}\frac{\mathbf{X}\mathbf{T\tilde{E}}\mathbf{X}%
^{\ast}}{N}\mathbf{E}^{2}\mathbf{u}\right]  \mathbb{E}\left[  \Psi\left(
\omega\right)  \right] \\
-\operatorname*{i}\omega A\frac{\gamma}{1-\gamma\tilde{\gamma}}\mathbb{E}%
\left[  \mathbf{u}^{\ast}\mathbf{Q}^{2}\mathbf{u}\right]  \mathbb{E}\left[
\mathbf{u}^{\ast}\mathbf{Q}\frac{\mathbf{X\mathbf{T\tilde{E}}}^{2}%
\mathbf{X}^{\ast}}{N}\mathbf{Eu}\right]  \mathbb{E}\left[  \Psi\left(
\omega\right)  \right] \\
+\operatorname*{i}\omega B\frac{1}{1-\gamma\tilde{\gamma}}\mathbb{E}\left[
\mathbf{u}^{\ast}\mathbf{Q}^{3}\mathbf{u}\right]  \left[  \mathbf{u}^{\ast
}\mathbf{Q}\frac{\mathbf{X}\mathbf{T\tilde{E}}\mathbf{X}^{\ast}}{N}%
\mathbf{E}^{2}\mathbf{u}\right]  \mathbb{E}\left[  \Psi\left(  \omega\right)
\right] \\
-\operatorname*{i}\omega B\frac{\gamma}{1-\gamma\tilde{\gamma}}\mathbb{E}%
\left[  \mathbf{u}^{\ast}\mathbf{Q}^{3}\mathbf{u}\right]  \mathbb{E}\left[
\mathbf{u}^{\ast}\mathbf{Q}\frac{\mathbf{X\mathbf{T\tilde{E}}}^{2}%
\mathbf{X}^{\ast}}{N}\mathbf{Eu}\right]  \mathbb{E}\left[  \Psi\left(
\omega\right)  \right] \\
+\operatorname*{i}\omega B\frac{1}{1-\gamma\tilde{\gamma}}\mathbb{E}\left[
\mathbf{u}^{\ast}\mathbf{Q}^{2}\mathbf{u}\right]  \mathbb{E}\left[
\mathbf{u}^{\ast}\mathbf{Q}^{2}\frac{\mathbf{X}\mathbf{T\tilde{E}}%
\mathbf{X}^{\ast}}{N}\mathbf{E}^{2}\mathbf{u}\right]  \mathbb{E}\left[
\Psi\left(  \omega\right)  \right] \\
-\operatorname*{i}\omega B\frac{\gamma}{1-\gamma\tilde{\gamma}}\mathbb{E}%
\left[  \mathbf{u}^{\ast}\mathbf{Q}^{2}\mathbf{u}\right]  \mathbb{E}\left[
\mathbf{u}^{\ast}\mathbf{Q}^{2}\frac{\mathbf{X\mathbf{T\tilde{E}}}%
^{2}\mathbf{X}^{\ast}}{N}\mathbf{Eu}\right]  \mathbb{E}\left[  \Psi\left(
\omega\right)  \right]  +\mathcal{O}(N^{-1/2})\text{,}%
\end{multline*}
and the result follows readily by using Propositions \ref{theoremEV1} to
\ref{theoremEV4}.

\section{Proof of Proposition \ref{proposition_sigmaAB}%
%TCIMACRO{\TeXButton{\label{AppProposition_sigmaAB}}{\label
%{AppProposition_sigmaAB}}}%
%BeginExpansion
\label{AppProposition_sigmaAB}%
%EndExpansion
}

We will only prove the case $B_{M}\neq0$, such that $\inf_{M\geq1}%
B_{M}=B_{\inf}>0$ (the complementary situation is much easier to handle).
Consider first writing%
\[
\sigma_{\xi,M}^{2}\left(  A_{M},B_{M}\right)  =\left(  \frac{B_{M}%
\mathbf{u}_{M}^{\ast}\mathbf{E}_{M}^{2}\mathbf{u}_{M}}{\left(  1-\gamma
\tilde{\gamma}\right)  ^{2}}\right)  ^{2}\mathcal{V}_{M}\left(  A_{M}%
,B_{M}\right)
\]
where we have defined%
\begin{multline*}
\mathcal{V}_{M}\left(  A_{M},B_{M}\right)  =\left(  \tilde{\gamma}^{2}\frac
{1}{N}\operatorname*{tr}\left[  \mathbf{E}_{M}^{4}\right]  +\gamma^{2}\frac
{1}{N}\operatorname*{tr}\left[  \mathbf{\tilde{E}}_{N}^{4}\right]  \right) \\
+4\tilde{\gamma}\left(  1-\gamma\tilde{\gamma}\right)  \mathcal{S}_{M}\left(
A_{M},B_{M}\right)  +4\left\{  \tilde{\gamma}^{2}\frac{1}{N}\operatorname*{tr}%
\left[  \mathbf{E}_{M}^{3}\right]  -\gamma\frac{1}{N}\operatorname*{tr}\left[
\mathbf{\tilde{E}}_{N}^{3}\right]  \right\}  \mathcal{T}_{M}\left(
A_{M},B_{M}\right) \\
+\frac{2}{1-\gamma\tilde{\gamma}}\left\{  \tilde{\gamma}^{3}\left(  \frac
{1}{N}\operatorname*{tr}\left[  \mathbf{E}_{M}^{3}\right]  \right)
^{2}-2\gamma\tilde{\gamma}\frac{1}{N}\operatorname*{tr}\left[  \mathbf{E}%
_{M}^{3}\right]  \frac{1}{N}\operatorname*{tr}\left[  \mathbf{\tilde{E}}%
_{N}^{3}\right]  +\gamma^{3}\left(  \frac{1}{N}\operatorname*{tr}\left[
\mathbf{\tilde{E}}_{N}^{3}\right]  \right)  ^{2}\right\}
\end{multline*}
with%
\begin{multline*}
\mathcal{S}_{M}\left(  A_{M},B_{M}\right)  =\frac{1}{4}\left(  \left(
1-\gamma\tilde{\gamma}\right)  \frac{A_{M}}{B_{M}}+2\frac{\mathbf{u}_{M}%
^{\ast}\mathbf{E}_{M}^{3}\mathbf{u}_{M}}{\mathbf{u}_{M}^{\ast}\mathbf{E}%
_{M}^{2}\mathbf{u}_{M}}\right)  ^{2}+\\
+\frac{1}{2}\left[  \frac{\mathbf{u}_{M}^{\ast}\mathbf{E}_{M}^{4}%
\mathbf{u}_{M}}{\mathbf{u}_{M}^{\ast}\mathbf{E}_{M}^{2}\mathbf{u}_{M}}-\left(
\frac{\mathbf{u}_{M}^{\ast}\mathbf{E}_{M}^{3}\mathbf{u}_{M}}{\mathbf{u}%
_{M}^{\ast}\mathbf{E}_{M}^{2}\mathbf{u}_{M}}\right)  ^{2}\right]
\end{multline*}%
\[
\mathcal{T}_{M}\left(  A_{M},B_{M}\right)  =\frac{1}{2}\left(  \left(
1-\gamma\tilde{\gamma}\right)  \frac{A_{M}}{B_{M}}+2\frac{\mathbf{u}_{M}%
^{\ast}\mathbf{E}_{M}^{3}\mathbf{u}_{M}}{\mathbf{u}_{M}^{\ast}\mathbf{E}%
_{M}^{2}\mathbf{u}_{M}}\right)  \text{.}%
\]
By inequality (\ref{supE}) along with Lemma \ref{lemmaBounds1-Gammas} and the
upper and lower bounds of $B_{M}$, we have that%
%TCIMACRO{\TeXButton{\\}{\\ }}%
%BeginExpansion
\\
%EndExpansion
$\left(  B_{M}\mathbf{u}_{M}^{\ast}\mathbf{E}_{M}^{2}\mathbf{u}_{M}/\left(
1-\gamma\tilde{\gamma}\right)  ^{2}\right)  ^{2}$ is bounded uniformly above
and away from zero.

We now show that%

\begin{equation}
0<\inf_{M\geq1}\mathcal{V}_{M}\left(  A_{M},B_{M}\right)  \leq\sup_{M\geq
1}\mathcal{V}_{M}\left(  A_{M},B_{M}\right)  <+\infty\text{.}\label{ULbounds}%
\end{equation}
Indeed, the upper bound in (\ref{ULbounds}) follows readily by the triangular
inequality and Lemma \ref{lemmaBounds1-Gammas} along with inequalities
(\ref{upperBoundGammas}) and (\ref{supE}) - (\ref{infE}), together with the
uniform upper and lower bounds $A_{\sup}$ and $B_{\inf}$ of, respectively,
$A_{M}$ and $B_{M}$.

In order to prove the lower bound, we first show that $\mathcal{S}_{M}%
\geq\mathcal{T}_{M}^{2}$. Indeed, observe that we can write%
\begin{multline*}
\mathcal{S}_{M}-\mathcal{T}_{M}^{2}=\frac{1}{2}\left[  \frac{\mathbf{u}%
_{M}^{\ast}\mathbf{E}_{M}^{4}\mathbf{u}_{M}}{\mathbf{u}_{M}^{\ast}%
\mathbf{E}_{M}^{2}\mathbf{u}_{M}}-\left(  \frac{\mathbf{u}_{M}^{\ast
}\mathbf{E}_{M}^{3}\mathbf{u}_{M}}{\mathbf{u}_{M}^{\ast}\mathbf{E}_{M}%
^{2}\mathbf{u}_{M}}\right)  ^{2}\right]  =\\
=\frac{1}{2}\frac{\mathbf{u}_{M}^{\ast}\mathbf{E}_{M}^{4}\mathbf{u}%
_{M}\mathbf{u}_{M}^{\ast}\mathbf{E}_{M}^{2}\mathbf{u}_{M}-\left(
\mathbf{u}_{M}^{\ast}\mathbf{E}_{M}^{3}\mathbf{u}_{M}\right)  ^{2}}{\left(
\mathbf{u}_{M}^{\ast}\mathbf{E}_{M}^{2}\mathbf{u}_{M}\right)  ^{2}}%
\geq0\text{,}%
\end{multline*}
where the last statement follows by the Cauchy-Schwarz inequality. This shows
that, by completing the squares%
\begin{multline*}
4\tilde{\gamma}_{M}\left(  1-\gamma_{M}\tilde{\gamma}_{M}\right)
\mathcal{S}_{M}+4\left(  \tilde{\gamma}_{M}^{2}\frac{1}{N}\operatorname*{tr}%
\left[  \mathbf{E}_{M}^{3}\right]  -\gamma_{M}\frac{1}{N}\operatorname*{tr}%
\left[  \mathbf{\tilde{E}}_{M}^{3}\right]  \right)  \mathcal{T}_{M}\geq\\
\geq4\tilde{\gamma}_{M}\left(  1-\gamma_{M}\tilde{\gamma}_{M}\right)  \left(
\mathcal{T}_{M}+\frac{\tilde{\gamma}_{M}^{2}\frac{1}{N}\operatorname*{tr}%
\left[  \mathbf{E}_{M}^{3}\right]  -\gamma_{M}\frac{1}{N}\operatorname*{tr}%
\left[  \mathbf{\tilde{E}}_{M}^{3}\right]  }{2\tilde{\gamma}_{M}\left(
1-\gamma_{M}\tilde{\gamma}_{M}\right)  }\right)  ^{2}\\
-\frac{\left(  \tilde{\gamma}_{M}^{2}\frac{1}{N}\operatorname*{tr}\left[
\mathbf{E}_{M}^{3}\right]  -\gamma_{M}\frac{1}{N}\operatorname*{tr}\left[
\mathbf{\tilde{E}}_{M}^{3}\right]  \right)  ^{2}}{\tilde{\gamma}_{M}\left(
1-\gamma_{M}\tilde{\gamma}_{M}\right)  }\text{.}%
\end{multline*}
Using this in the expression of $\mathcal{V}_{M}\left(  A_{M},B_{M}\right)  $
and grouping terms, we readily see that%
\begin{multline*}
\mathcal{V}_{M}\left(  A_{M},B_{M}\right)  \geq\frac{\tilde{\gamma}_{M}^{2}%
}{\gamma_{M}}\left(  \frac{1}{N}\operatorname*{tr}\left[  \mathbf{E}_{M}%
^{4}\right]  \frac{1}{N}\operatorname*{tr}\left[  \mathbf{E}_{M}^{2}\right]
-\left(  \frac{1}{N}\operatorname*{tr}\left[  \mathbf{E}_{M}^{3}\right]
\right)  ^{2}\right)  +\\
+\frac{\gamma_{M}^{2}}{\tilde{\gamma}_{M}}\left(  \frac{1}{N}%
\operatorname*{tr}\left[  \mathbf{\tilde{E}}_{M}^{4}\right]  \frac{1}%
{N}\operatorname*{tr}\left[  \mathbf{\tilde{E}}_{M}^{2}\right]  -\left(
\frac{1}{N}\operatorname*{tr}\left[  \mathbf{\tilde{E}}_{M}^{3}\right]
\right)  ^{2}\right)  +\\
+4\tilde{\gamma}_{M}\left(  1-\gamma_{M}\tilde{\gamma}_{M}\right)  \left(
\mathcal{T}_{M}+\frac{\tilde{\gamma}_{M}^{2}\frac{1}{N}\operatorname*{tr}%
\left[  \mathbf{E}_{M}^{3}\right]  -\gamma_{M}\frac{1}{N}\operatorname*{tr}%
\left[  \mathbf{\tilde{E}}_{M}^{3}\right]  }{2\tilde{\gamma}_{M}\left(
1-\gamma_{M}\tilde{\gamma}_{M}\right)  }\right)  ^{2}\\
+\frac{1}{\left(  1-\gamma_{M}\tilde{\gamma}_{M}\right)  \gamma_{M}}\left(
\tilde{\gamma}_{M}\frac{1}{N}\operatorname*{tr}\left[  \mathbf{E}_{M}%
^{3}\right]  -\gamma_{M}^{2}\frac{1}{N}\operatorname*{tr}\left[
\mathbf{\tilde{E}}_{M}^{3}\right]  \right)  ^{2}\text{.}%
\end{multline*}
The first two terms are positive due to the fact that%
\[
\frac{1}{N}\operatorname*{tr}\left[  \mathbf{\tilde{E}}_{M}^{4}\right]
\frac{1}{N}\operatorname*{tr}\left[  \mathbf{\tilde{E}}_{M}^{2}\right]
-\left(  \frac{1}{N}\operatorname*{tr}\left[  \mathbf{\tilde{E}}_{M}%
^{3}\right]  \right)  ^{2}\geq0\text{,}%
\]
which is a consequence of the Cauchy-Schwarz inequality (and equivalently for
$\mathbf{E}_{M}$ instead of $\mathbf{\tilde{E}}_{M}$), and this leads to
\[
\mathcal{V}_{M}\left(  A_{M},B_{M}\right)  \geq\frac{1}{\left(  1-\gamma
_{M}\tilde{\gamma}_{M}\right)  \gamma_{M}}\left(  \tilde{\gamma}_{M}\frac
{1}{N}\operatorname*{tr}\left[  \mathbf{E}_{M}^{3}\right]  -\gamma_{M}%
^{2}\frac{1}{N}\operatorname*{tr}\left[  \mathbf{\tilde{E}}_{M}^{3}\right]
\right)  ^{2}\text{.}%
\]
Now, using again the Cauchy-Schwarz inequality we are able to write
\[
\gamma_{M}^{2}=\left(  \frac{1}{N}\operatorname*{tr}\left[  \mathbf{E}_{M}%
^{2}\right]  \right)  ^{2}\leq\frac{1}{N}\operatorname*{tr}\left[
\mathbf{E}_{M}\right]  \frac{1}{N}\operatorname*{tr}\left[  \mathbf{E}_{M}%
^{3}\right]  =\delta_{M}\frac{1}{N}\operatorname*{tr}\left[  \mathbf{E}%
_{M}^{3}\right]  \text{,}%
\]
and this implies that
\begin{multline*}
\tilde{\gamma}_{M}\frac{1}{N}\operatorname*{tr}\left[  \mathbf{E}_{M}%
^{3}\right]  -\gamma_{M}^{2}\frac{1}{N}\operatorname*{tr}\left[
\mathbf{\tilde{E}}_{M}^{3}\right]  \geq\frac{1}{N}\operatorname*{tr}\left[
\mathbf{E}_{M}^{3}\right]  \left(  \tilde{\gamma}_{M}-\delta_{M}\frac{1}%
{N}\operatorname*{tr}\left[  \mathbf{\tilde{E}}_{M}^{3}\right]  \right)  =\\
=\frac{1}{N}\operatorname*{tr}\left[  \mathbf{E}_{M}^{3}\right]  \left(
\frac{1}{N}\operatorname*{tr}\left[  \mathbf{T}_{N}^{2}\left(  \mathbf{I}%
_{N}+\delta_{M}\mathbf{T}_{N}\right)  ^{-3}\right]  \right)  \text{.}%
\end{multline*}
Therefore, we have shown that
\[
\mathcal{V}_{M}\left(  A_{M},B_{M}\right)  \geq\frac{\left(  \frac{1}%
{N}\operatorname*{tr}\left[  \mathbf{E}_{M}^{3}\right]  \left(  \frac{1}%
{N}\operatorname*{tr}\left[  \mathbf{T}_{N}^{-1}\mathbf{\tilde{E}}_{N}%
^{3}\right]  \right)  \right)  ^{2}}{\left(  1-\gamma_{M}\tilde{\gamma}%
_{M}\right)  \gamma_{M}}\text{,}%
\]
and the lower bound in (\ref{ULbounds}) finally follows from Lemma
\ref{lemmaBounds1-Gammas} together with inequalities (\ref{upperBoundGammas}),
(\ref{infE}) and (\ref{infEtilde}).

\bibliographystyle{IEEEtran}
\bibliography{ReferencesCLT}
%

%TCIMACRO{\TeXButton{TeX field}{\newpage}}%
%BeginExpansion
\newpage
%EndExpansion
%

%TCIMACRO{\TeXButton{figs}{\begin{figure}
%\centerline{\includegraphics[width=\linewidth]{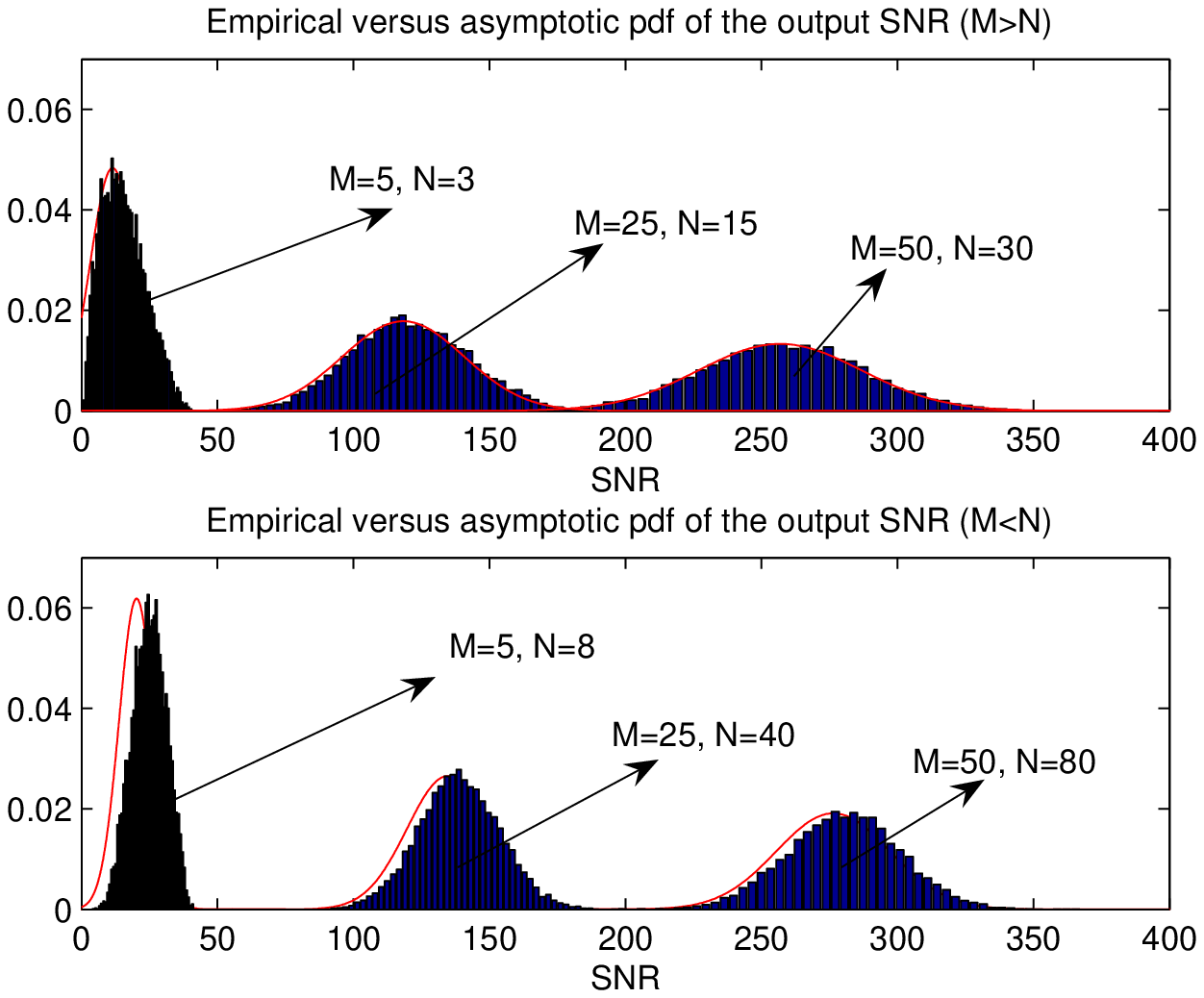}}
%\caption
%{Numerical evaluation of fitness accuracy of CLT (Supervised Training).}
%\label{figure1}
%\end{figure}
%\begin{figure}
%\centerline{\includegraphics[width=\linewidth]{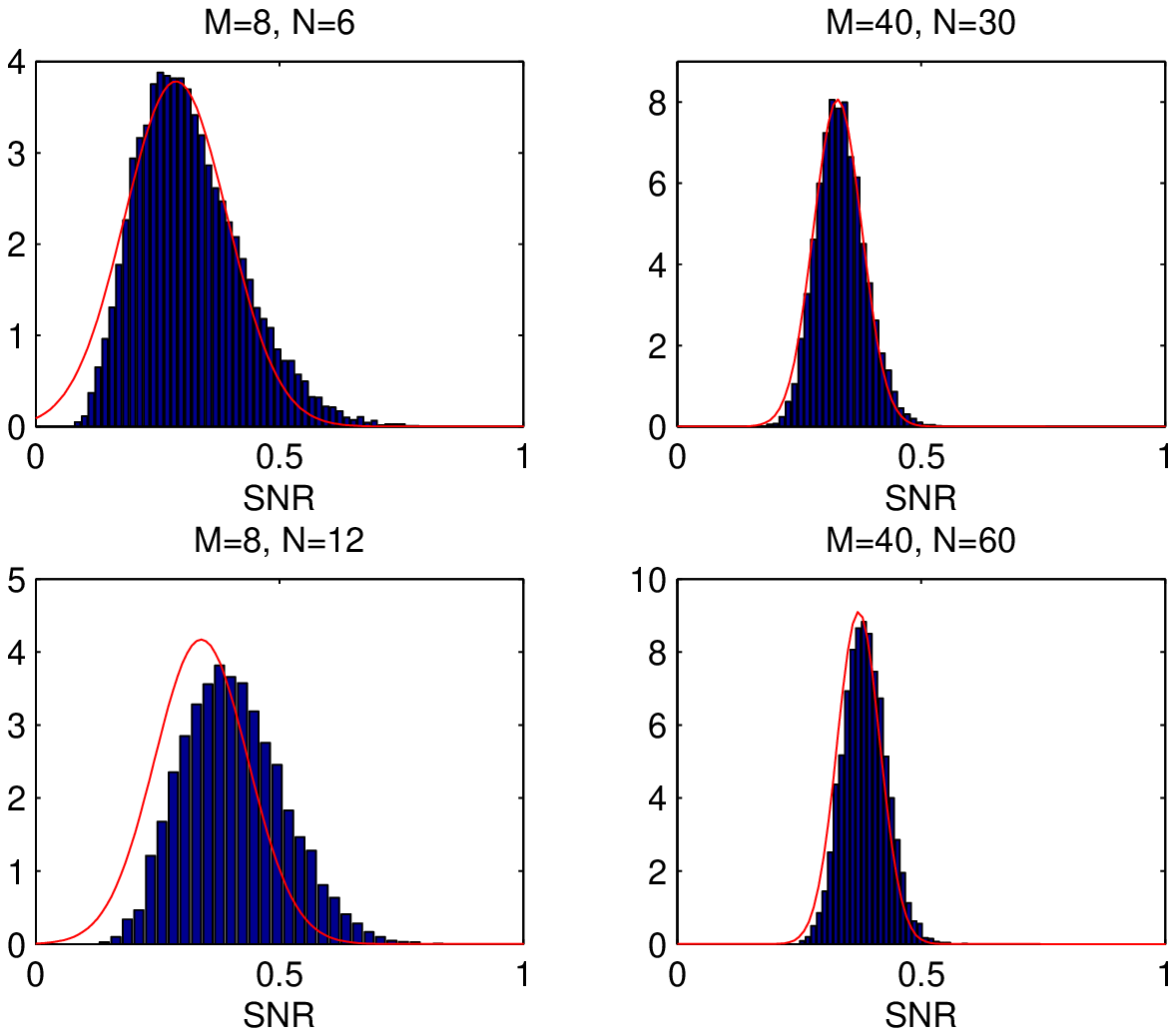}}
%\caption
%{Numerical evaluation of fitness accuracy of CLT (Unsupervised Training).}
%\label{figure2}
%\end{figure}}}%
%BeginExpansion
\begin{figure}
\centerline{\includegraphics[width=\linewidth]{figure1.eps}}
\caption
{Numerical evaluation of fitness accuracy of CLT (Supervised Training).}
\label{figure1}
\end{figure}
\begin{figure}
\centerline{\includegraphics[width=\linewidth]{figure2.eps}}
\caption
{Numerical evaluation of fitness accuracy of CLT (Unsupervised Training).}
\label{figure2}
\end{figure}%
%EndExpansion

\end{document}